\algnewcommand\algorithmicswitch{\textbf{switch}}
\algnewcommand\algorithmiccase{\textbf{case}}
\algnewcommand\algorithmicwithprob{\textbf{with probability}}
\algnewcommand\algorithmicotherwise{\textbf{otherwise}}
\newcommand{\poly}{\operatorname{\text{{\rm poly}}}}
\newcommand{\floor}[1]{\lfloor #1 \rfloor}
\newtheorem{theorem}{Theorem}[section]
\newtheorem{lemma}[theorem]{Lemma}
\newtheorem{corollary}[theorem]{Corollary}
\newtheorem{proposition}[theorem]{Proposition}
\newtheorem{definition}[theorem]{Definition}
\crefname{@theorem}{theorem}{theorems}
\Crefname{@theorem}{Theorem}{Theorems}
\newcommand{\newparagraph}[1]{\smallskip\paragraph{#1}}
\newcommand{\FullOrShort}{full}
	  \newcommand{\fullOnly}[1]{#1}
	  \newcommand{\shortOnly}[1]{}
		\newcommand{\IncludePictures}[1]{#1}
		\newcommand{\algorithmsize}{\small}
	  \newcommand{\fullOnly}[1]{}
	  \newcommand{\shortOnly}[1]{#1}
		\newcommand{\IncludePictures}[1]{}
		\newcommand{\algorithmsize}{\small}
\begin{document}

\title{Near Optimal Leader Election in Multi-Hop Radio Networks%
}

\author{
Mohsen Ghaffari \\MIT\\ \texttt{ghaffari@mit.edu}
\and
Bernhard Haeupler\\ Microsoft Research \\ \texttt{haeupler@cs.cmu.edu}
}

\date{}

\maketitle

\begin{abstract}
We present distributed randomized leader election protocols for multi-hop radio networks that elect a leader in almost the same time $T_{BC}$ required for broadcasting a message. For the setting without collision detection, our algorithm runs with high probability\footnote{We use the phrase with high probability (w.h.p.) to indicate a probability at least $1-\frac{1}{n^c}$, for any constant $c\geq 2$. As usual with the randomized algorithms, larger constants $c$ can be obtained at the expense of larger constants in the round complexity.} in $O(D \log \frac{n}{D} + \log^3 n) \cdot \min\{\log\log n,\log \frac{n}{D}\}$ rounds on any $n$-node network with diameter $D$. Since $T_{BC} = \Theta(D \log \frac{n}{D} + \log^2 n)$ is a lower bound, our upper bound is optimal up to a factor of at most $\log \log n$ and the extra $\log n$ factor on the additive term. This algorithm is furthermore the first $O(n)$ time algorithm for this setting.

\smallskip

Our algorithm improves over a 25 year old simulation approach of Bar-Yehuda, Goldreich and Itai with a $O(T_{BC} \log n)$ running time: In 1987 they designed a fast broadcast protocol and subsequently in 1989 they showed how it can be used to simulate one round of a single-hop network that has collision detection in $T_{BC}$ time. The prime application of this simulation was to simulate Willards single-hop leader election protocol, which elects a leader in $O(\log n)$ rounds with high probability and $O(\log \log n)$ rounds in expectation. While it was subsequently shown that Willards bounds are tight, it was unclear whether the simulation approach is optimal. Our results break this barrier and essentially remove the logarithmic slowdown over the broadcast time $T_{BC}$. This is achieved by going away from the simulation approach. 

\smallskip

We also give a distributed randomized leader election algorithm for the setting with collision detection (even with single-bit messages) that with high probability runs in $O(D + \log n \log \log n) \cdot \min\{\log \log n, \log \frac{n}{D}\} = O(D + \log n) \cdot O(\log^2 \log n)$ rounds. This round complexity is optimal up to $O(\log \log n)$ factors and it improves over a deterministic algorithm that requires $\Theta(n)$ rounds independently of the diameter $D$.

\smallskip

Our almost optimal leader election protocols are especially important because countless communication protocols in radio networks use leader election as a crucial first step to solve various, seemingly unrelated, communication primitives such as gathering, multiple unicasts or multiple broadcasts. Even though leader election seems easier than these tasks, its best-known $O(T_{BC} \log n)$ running time had become a bottleneck, preventing optimal algorithms. Breaking the simulation barrier for leader election in this paper has subsequently led to the development of near optimal protocols for these communication primitives. 
\end{abstract}


\section{Introduction}
In this paper we present a randomized distributed algorithm for electing a leader in a radio network without collision detection, which runs in time almost equal to the time required for broadcasting a message. This improves over the 1989 algorithm of Bar-Yehuda, Goldreich and Itai\cite{BII93}.

Leader election, the task of nodes agreeing on the election of a single node in a network, is one of the most fundamental problems in distributed computing. Many tasks in the distributed settings require (or benefit from) having one designated ``organizer" node and leader election is the primitive providing this organizer. Due to its importance, leader election has been studied in many different network settings. 

The setting we are interested in is radio networks. One of the standard models to study these networks is the \emph{radio network model} presented in \cite{CK} (see \Cref{subsec:other-models} for other models). In this model, the communication between the nodes happens in synchronous rounds. In each round, each node either \emph{transmits} a logarithmic size message or remains silent \emph{listening}. If a node is silent and exactly one of its neighbors is transmitting, then it receives a message, namely that of the transmitting neighbor. Other nodes do not receive a message. In particular, if a node has multiple transmitting neighbors, it gets a collision. Depending on the model, such a collision might be detected at this node (the receiver) or it might be considered indistinguishable from the case where no neighbor is transmitting.

This interfering behavior of transmissions makes even basic communication tasks challenging. Since the introduction of the radio network model in 1985, several hundred research papers have given more and more efficient solutions to communication problems such as single-message broadcast, leader election, aggregation, multiple unicasts or broadcasts. The two first and most influential papers in this direction are \cite{BGI1} and \cite{BGI2} published in 1987 and 1989 by Bar-Yehuda, Goldreich and Itai (BGI). In the first paper \cite{BGI1}, BGI presented the Decay protocol as an efficient single-message broadcast protocol for radio networks. Since then, the Decay protocol has been one of the main methods for coping with collisions of radio networks. In the second work \cite{BGI2}, BGI use the Decay protocol to emulate single-hop networks with collision detection in multi-hop networks without collision detection, with a slowdown factor equal to the broadcast time $T_{BC}$. The prime application for this emulation was to transfer results for leader election on single hop networks with collision detection to the multi-hop networks without collision detection. In particular, this allowed for simulating a leader election algorithm of Willard \cite{willard} in multi-hop networks without collision detection. This emulation approach elects a leader in expected time $O(T_{BC} \log \log n)$ rounds and in $O(T_{BC} \log n)$ rounds with high probability. 

The obvious question asked by BGI~\cite{BGI2} was whether this time can be improved. Despite lots of works, this question remained mainly unanswered except knowing the optimal complexity of each of the pieces of the emulation approach: Novel upper and lower bounds showed that in a diameter $D$ network $T_{BC}$ equals $\Theta(D \log \frac{n}{D} + \log^2 n) = O(n)$\cite{ABLP, KM, KP03, CR}. Moreover, \cite{NakOla02} showed that $\Omega(\log n)$ rounds are needed for a high probability leader election in single hop networks with collision detection. Thus, the remaining question now was about whether the whole emulation approach is optimal.

We break this simulation barrier for leader election by presenting an algorithm which parts from the simulation paradigm and achieves time complexity almost $T_{BC}$. More precisely, this algorithm runs with high probability in $O((D \log \frac{n}{D} + \log^3 n) \cdot \min\{\log \log n, \log \frac{n}{D}\})$ rounds --- which is in $O(n)$ rounds --- on any $n$ node network with diameter $D$. This is almost optimal since $\Omega(D \log \frac{n}{D} + \log^2 n)$ is a lower bound. We also give an algorithm for radio networks with collision detection. This algorithm runs in near optimal time $O(D + \log n \log \log n) \cdot \min\{\log \log n, \log \frac{n}{D}\}$ --- which is also in $O(n)$ --- almost matching the respective $\Omega(D + \log n)$ lower bound. We note that these two are the first algorithms that solve the leader election problem in essentially the time needed to broadcast one message (each in the related setting). 

Aside from the complexity of leader election, there is another side to the story which makes leader election important in multi-hop networks: many communication protocols in multi-hop radio networks traditionally use leader election as a crucial first step to solve various, seemingly unrelated, communication tasks. For instance, the fast solutions for many multi-message multi-source broadcast problems first elect a leader, then gather the messages at this leader, and then broadcast them from there\cite{CGL, KK, CKR, GPX05}. This approach helps in managing the contention between the different messages and leads to more efficient algorithms. Even though leader election seems easier than these communication tasks, its $O(T_{BC} \log n)$ best-known bound had become a bottleneck and had kept time complexities of these other problems unresolved as well. Our results solve this issue and set the stage for obtaining near optimal algorithms for many other natural communication primitives that rely on leader election. In particular, subsequent to a preliminary version of this paper, in \cite{SODA-COMM}, we make use of the leader election algorithm presented here while improving the round complexity of these communication primitive to near-optimality.

\section{Related Work}
Leader election is a fundamental problem in distributed computing and it has received vast amount of attention under various communication models and assumptions~\cite{Lynch96}. This problem becomes considerably more challenging in wireless networks. 

\subsection{Theoretical Models of Wireless Networks}\label{subsec:other-models}

As is usual with most practical settings, there is a large variety of theoretical models for wireless networks. The standard LOCAL model of distributed computing, in which in each time unit each node can send one message to all of its neighbors in a graph, can be seen as a very crude first-order approximation of wireless networks. While this model captures the ``local broadcast" nature of wireless networks it completely ignores  physical layers issues such as interference. As such, it can be construed to be abstracting the scenario in which one is working above the MAC layer. In contrast to this simple and clean model, there are many more that go below the MAC layer and include the physical layer issues. 

One classical model that considers both the ``local broadcast" nature and ``interference" is the graph-based \emph{radio network model}. The single-hop version of this model was considered in the 70's---see \cite{Hayes78, TsyMik78, Capetanakis}---and its multi-hop version was introduced by Chlamatac and Kutten\cite{CK} in 1985. This radio network model has been studied since then and is still a topic of ongoing investigation (see, e.g., these recent papers \cite{NakOla02,CMS03,GLS11,aghk2014,CK10,AABCHK13,BCC12,KP09,chlebus2012electing,SODA-COMM,KK,CGL,CKR,GPX05}). Since many basic problems are challenging in this model when studied for general graphs, there has been also interest in restricted families of graphs that assumed to be capturing most of the practical scenarios. A prominent example is that of restriction to Unit Disc Graphs\cite{clark1991unit}. 

There are also other models of wireless networks which include further details about the physical layer. A wide range of these can be put under the general title of {\em SINR-style} models; i.e., models in which communication behavior is determined by the ratio of signal to noise and interference (see e.g.~\cite{moscibroda:2006,scheideler:2008,kesselheim:2011,halldorsson:2012b,jurdzinski:2013, Daum:2013}), and where a key characteristic is that the signal power decays (polynomially) with the traveled distance. The models in this category differ in many aspect such as: assumptions about the geographical spread of the nodes in the area and whether the are is obstacle free and uniform or not, signal decay patters, whether nodes have fixed powers or controllable powers, exact knowledge of network parameters, etc. 
One can move even further and include important issues such as multi-path effects, non-uniform signal propagation, analog representation of information where packets are viewed as waveforms and one can apply powerful approaches such interference alignment \cite{Cadambe:2008} and analog network coding\cite{katti2007embracing}, multi-channel models \cite{kyasanur2005capacity}, Multi-Input Multi-Output (MIMO) extensions\cite{gesbert2003theory}. See \cite{halldorsson2014modeling} for a recent discussion about some of these issues.

As the above discussions indicate, the range of possibilities for modeling physical issues in wireless networks is quite wide and one can always include further details. However, typically, as more and more details are included, the algorithmic solutions become increasingly more involved and dependent on the details of the model. Therefore while simpler models might lead to unrealistically simple solutions more specific model run in danger to supply less high-level insights in general and also produce results that are too dependent on the specific model assumptions to be transferable. 

In this paper, we focus on the graph based model and make a progress on one of its fundamental open questions, namely the leader election problem. In the next subsection, we review the related work about leader election in graph based models. 

\subsection{Leader Election in Graph Based Models}
Leader election in the graph based radio network model has received a lot of attention (see e.g. \cite{Hayes78, TsyMik78, Capetanakis, GW85, willard, NakOla02, GLS11, CMS03, KM, BGI2, KP09}). In the following, we review this line of work in two categories of single-hop networks and multi-hop networks. 

\medskip
\noindent\textbf{Single-Hop Radio Networks:} The study of leader election in radio networks started with the special case of single-hop networks, where the network is a complete graph. The story goes back to 70's and 80's, when~\cite{Hayes78, TsyMik78, Capetanakis} independently showed that in the model with collision detection, the problem can be solved in $O(\log n)$ rounds deterministically, and this was shown to be optimal for deterministic algorithms by $\Omega(\log n)$ lower bound of~\cite{GW85}. On the randomized side of the problem in the model with collision detection, even though the expected time was improved to $O(\log \log n)$~\cite{willard, NakOla02}, the high probability time remained $O(\log n)$ in both. These bounds were proven to be tight by $\Omega(\log \log n)$ lower bound on the expected time of uniform protocols~\cite{willard}, and $\Omega(\log n)$ lower bound for the high probability time of uniform protocols~\cite{NakOla02}. The assumption of uniformity in the latter result was later removed \cite{GLS11}.

In the single-hop networks without collision detection, for deterministic algorithms~\cite{CMS03} presented matching upper and lower bounds of $\Theta(n \log n)$. For randomized bounds, \cite{KM} showed that $\Omega(\log n)$ is a lower bound on the expected time, and \cite{JS05} showed that $\Theta(\log^2 n)$ is the tight bound for the high probability time by presenting $O(\log^2 n)$ upper bound and $\Omega(\log^2 n)$ lower bound. 

These bounds, altogether, in principle settle the time complexity of the single-hop case.

%
%

\medskip
\noindent\textbf{Multi-Hop Radio Networks:} In contrast to the single-hop special case, the complexity of the general case of multi-hop networks did not see much progress, after the initial results. 

The research about theoretical problems in multi-hop radio networks essentially started with the pioneering papers of Bar-Yehuda, Goldreich and Itai (BGI) ~\cite{BGI1, BGI2}. In the first paper, BGI devised the \emph{Decay} protocol as a solution for single-message broadcast problem, resulting in almost optimal broadcast time of $O(D \log n + \log^2 n)$. This protocol later became the standard approach in coping with collisions of radio networks (see e.g.~\cite{BGI2, BII93, CMS01, GPX05}). Provided by this almost optimal broadcast algorithm, and given that the case of leader election in single-hop radio networks was well-studied, a natural idea was to simulate the `single-hop' leader election algorithms over multi-hop radio networks. Along this idea, in the second paper, BGI used Decay protocol to emulate a single-hop radio network with collision detection on top a multi-hop radio network without collision detection. As the prime application, they used this emulation to simulate Willard's single-hop leader election algorithm~\cite{willard} in multi-hop radio networks without collision detection. This resulted in time complexity of $O((D\log n + \log^2 n) \log n) = O(n \log^2 n)$ for a with high probability result (and also $O((D\log n + \log^2 n) \log \log n) = O(n \log n \cdot \log \log n)$ for expected time). 

Given this efficient algorithm, the remaining question was how to improve it to optimality. One idea would be to use a better leader election algorithm of single-hop networks, but given lower bounds $\Omega(\log \log n)$ on the expected time~\cite{willard} and $\Omega(\log n)$ for high probability results~\cite{NakOla02}, there was no hope in that direction.
%
%

The next idea was to improve upon the Decay broadcast algorithm. By modifying the Decay protocol, Czumaj and Rytter~\cite{CR} and Kowalski and Pelc\cite{KP03} reduced the time complexity of broadcast from $O(D \log n + \log^2 n)$ to $T_{BC}=O(D\log \frac{n}{D}+\log^2 n)$, which is known to be optimal in the light of $\Omega(D \log{\frac{n}{D}})$ lower bound of~\cite{KM} and $\Omega(\log^2 n )$ lower bound of \cite{ABLP}. Albeit not being published explicitly, by providing a substitute for the old Decay in BGI's framework, this new Decay changed the time complexity of leader election (using simulation approach) to $O(T_{BC} \log n) = O(n \log n)$ for a with high probability algorithm (and $O(T_{BC} \log \log n) = O(n \log \log n)$ expected time). Given that now both elements of the emulation --- single-hop leader election algorithm and broadcast algorithm --- were optimal, the remaining interesting question was ``\emph{can one improve upon the leader election time bound by going away from the simulation approach?}''. In this paper we answer this question in affirmative.

For networks with collision detection, Kowalski and Pelc \cite{KP09} presented an $O(n)$ deterministic algorithm. This highlighted the difference between models with and without collision detection as a lower bound of $\Omega(n \log n)$ was known for deterministic leader election without collision detection even for single-hop networks~\cite{CMS03}. We remark here that the time complexity of this algorithm remains $\Theta(n)$ even when diameter $D$ of the network is small.

We also note that, simultaneous with the preliminary version of this paper, Chlebus, Kowalski and Pelc\cite{chlebus2012electing} presented a randomized algorithm for the setting without collision detection that runs in $O(n)$ rounds in expectation and in $O(n\log n)$ rounds with high probability. These running times remain $\Omega(n)$ regardless of how small the diameter of the graph is. However, we remark that the result of \cite{chlebus2012electing} has the nice property that the safety guarantee of the leader election is deterministic, and only the running time guarantees are probabilistic. In contrast, in our algorithms, the guarantee is that with high probability one leader is elected. The same paper\cite{chlebus2012electing} also provides a deterministic solution with round complexity $O(n \log^{1.5}\sqrt{\log \log n})$.



\section{Preliminaries}

\subsection{Network Models}\label{sec:model}

We consider the standard multi-hop radio network model\cite{CK, BGI1, BGI2}. In this model the network is represented by a connected undirected graph $G = (V, E)$ with $n=|V|$ nodes and diameter $D$. Communication in such a network takes place in synchronous rounds; in each round, each node is either listening or transmitting a $\Theta(\log n)$-bit packet. In each round, each node $v \in V$ can receive a packet only from its neighbors and only if $v$ itself is not transmitting in that round. If two or more neighbors of $v$ transmit in a round, then these transmissions \emph{collide} at $v$ and $v$ does not receive any packet. In this case, we consider two model variants: (1) the model with no \emph{collision detection} (CD) where the node $v$ can not distinguish this collision from silence, and (2) the model with CD where $v$ gets to know that a collision happened. Another small distinction is whether a synchronous wake-up is assumed in which all nodes start participating at time zero or whether only some (arbitrary subset of) nodes are awake initially. In the later case nodes that are not awake are always listening until they are woken up by the first received message or detected collision (see also \Cref{sec:LEproblem}).

Instead of studying the radio network model with collision detection directly, we choose a strictly weaker model, namely the \emph{beep model} as given in~\cite{CK10,AABCHK13}. A beep network works in synchronous rounds. In each round, each node can either \emph{beep} (transmit) or remain silent. At the end of a round, each silent node gets to know whether at least one of its neighbors was beeping or not. We note that the beep model can be seen as a radio network model with collision detection and $1$-bit packets but with the additional weakening limitations that nodes can not distinguish between one neighbor sending a $1$ or $0$ or between the cases where exactly one or more than one neighbor is beeping. This extremely basic communication model is an interesting weakening of the standard model with collision detection, from both theoretical and practical viewpoints. Any algorithm designed for the beeping model can be directly used for the standard model with collision detection. However, designing such algorithms is typically more challenging. On the practical side, it has been argued that the beeping model can be implemented easily in most environments for example using carrier sensing or simple radios~\cite{CK10}.

All algorithms we study in this paper are randomized and distributed. Furthermore, all stated running times hold with high probability, that is with probability $1 - n^{-\Theta(1)}$ (in contrast to merely in expectation). As is standard for distributed algorithms, we assume that nodes have no knowledge about the topology except for knowing $n$ and $D$ (up to constant factors). We remark that the assumption of knowing $D$ can be removed easily without any asymptotic loss in time bounds, using standard double-and-test parameter estimation techniques. We also assume that nodes create an identifier by independently sampling $\Theta(\log n)$ bits. With high probability these identifiers will be unique IDs. 


\subsection{Leader Election}

\subsubsection{The Leader Election Problem}\label{sec:LEproblem}

The problem studied in this paper is the Leader Election problem. In this problem the goal is to elect a single node in the network. More formally, we say a (randomized) algorithm $\mathcal{A}$ solves the leader election problem in time $T(n,D)$ if the following holds: Consider an arbitrary undirected network $G=(V, E)$ with $n$ nodes and diameter $D$. Each node randomly chooses a $\Theta(\log n)$ bit ID (which will be unique with high probability). At round $0$ some nodes (at least one) wake up and start running the leader election algorithm. We require that, by round $T(n,D)$, with probability at least $1 - n^{-\Theta(1)}$, each node outputs exactly one ID of an \emph{elected} node in the network and all nodes output the same ID.

We remark that it is possible to wake-up all nodes in a network in broadcast time, that is, in $T_{BC} = \Theta(D \log \frac{n}{D} + \log^2 n)$ rounds without collision detection and $O(D)$ rounds in the beep model. In the same time one can also establish a global clock by informing all nodes about the time zero in which the first nodes woke up. This is done by keeping a counter in the wake-up messages for the setting with collision detection\footnote{For beep networks one can use two beep waves of different speed, one going one hop per round and one going one hop every two rounds. Taking the difference between the two arrival times indicates for every node how long it took the first wave to wake it up and therefore by how much to turn back the clock to be in synch with the nodes woken up in round zero.}
. After this wake-up and synchronization phase, one can solve the leader election problem assuming a synchronous global wake up. Since in the running time analysis of our algorithms an additive term of $T_{BC}$, the time to perform a broadcast, is negligible we may and do restrict ourselves to designing algorithms for the synchronous wake up case throughout the rest of this paper.

\subsubsection{Leader Election vs. Broadcast}\label{sec:BCvsLE}

Next we explain the relationship between the broadcast problem and the leader election problem.

 We first note that the way leader election is solved in the classical simulation approach is to perform several, $\Theta(\log n)$ to be exact, broadcasts. While the approach of this paper is more refined it still heavily relies on broadcast as a fast subroutine. In particular, our running time can be roughly interpreted as being equivalent to $\Theta(\log \log n)$ broadcasts. All of this suggests that leader election is at least as hard of a problem as broadcast. This intuition is further backed up by understanding the leader election problem as a symmetry breaking routine (electing exactly one node) which in the end necessarily needs to broadcast the identity of the elected node. 

In the next lemma we give a simple reduction, which is to our knowledge new, that formalizes this intuition and shows that leader election is at least as hard as a broadcast (up to constant factors). 

\begin{lemma}\label{lem:BCLEreduction}
If there is an algorithm that solves the leader election problem with high probability in $T(D,n)$ rounds for any $n$ node graph of diameter at most $D$, then there exists an algorithm that solves the broadcast problem in $2 T(2D, 2n-1)$ rounds, with high probability.
\end{lemma}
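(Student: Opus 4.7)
The plan is to build a broadcast algorithm by simulating the given leader election algorithm $A$ on an augmented virtual graph $G'$ with $2n-1$ vertices and diameter at most $2D$. Given a broadcast instance $(G,s,m)$, I augment $G$ by adjoining $n-1$ auxiliary vertices. A natural choice is to glue a second copy of $G \setminus \{s\}$ onto $G$ at the source vertex $s$: let $V(G') = V \cup \{v^\dagger : v \in V \setminus \{s\}\}$, with the edges of $G$ mirrored on the twin vertices. Since every path between the two copies must pass through $s$, we have $\mathrm{diam}(G') \le 2D$ and $|V(G')| = 2n-1$. The trivial case $D=1$ is handled separately via a direct single-hop broadcast.

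Each non-source real vertex $v$ simulates both $v$ and its twin $v^\dagger$ in $G'$ by time-multiplexing transmissions across consecutive real rounds; this exactly accounts for the factor of $2$ in the target running time. Source $s$, the shared vertex of the two copies, simulates only itself. Running $A$ on the simulated $G'$ therefore takes $2\,T(2D, 2n-1)$ real rounds, after which every real vertex knows the identifier $\ell$ of the leader that $A$ elected on $G'$, by the correctness of $A$ and the with-high-probability uniqueness of the randomly drawn $\Theta(\log n)$-bit identifiers.

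To encode the message $m$ in the LE output, source commits beforehand to using $m$ (padded or encoded to $\Theta(\log n)$ bits) as its own identifier during the simulation, so that $\ell = s$ would immediately reveal $m$ to every node. The plan is thus to arrange the augmentation and the simulation so that $A$ elects $s$ with high probability; the natural structural leverage is that $s$ is the unique cut vertex of $G'$ and is the only vertex shared by the two copies, so it has a privileged role in the execution. If forcing $\ell = s$ turns out to be too much to ask of a black-box $A$, a fallback is to invoke $A$ twice on related graphs (still within the $2T(2D,2n-1)$ bound, by adjusting the multiplexing or using a second copy for the second run) with source's ID shifted as a deterministic function of the first run's leader identifier, so that $m$ can be decoded from the pair of leader identifiers regardless of which nodes are elected.

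The main obstacle is precisely this last step: guaranteeing that $m$ can be extracted from $\ell$ (or from the pair $(\ell_1, \ell_2)$) with high probability, since a black-box LE algorithm gives no direct control over which node is elected. Overcoming this will require a careful argument exploiting the structural asymmetry introduced by $s$ in $G'$ together with source's freedom to choose its own identifier, and is the technical heart of the reduction.
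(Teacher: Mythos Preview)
Your construction of $G'$ (two copies of $G$ glued at the source $s$) and the two-rounds-per-round simulation are exactly what the paper does. The gap is in how you propose to recover the broadcast message $m$ from the leader election output.

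Your plan is to make $s$'s identifier equal to $m$ and then argue that $s$ is elected, or failing that, to run $A$ twice and decode $m$ from the pair of outcomes. Neither works for a black-box $A$. A leader election algorithm may, for instance, always elect the node with the smallest identifier; if some non-source node draws a smaller ID than $m$, that node is elected and nothing about $m$ reaches the other nodes. The ``run twice'' fallback does not help: if $s$ is never elected, the only information globally disseminated is the identifiers of whichever nodes \emph{are} elected, and those carry no information about $m$. The cut-vertex property makes $s$ structurally special, but a black-box $A$ is under no obligation to translate that into electing $s$.

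The paper's idea sidesteps this entirely and is much simpler: \emph{piggy-back} $M$ on the simulation. The source $s$, and thereafter any node that has already received $M$, appends $M$ to every packet it transmits while simulating $A$ (this at most doubles the packet size). Now use the cut-vertex property in the other direction: the leader lies in one copy of $G$, and every node $u$ in the \emph{other} copy must end up knowing the leader's ID. Since $s$ separates the two copies, there must be a chain of successful transmissions from $s$ to $u$ during the simulation; $M$ rides along that chain, so $u$ receives $M$. Thus every node in $G$ (playing a role in the leader-free copy) receives $M$ within the $2\,T(2D,2n-1)$ rounds, with no need to control which node is elected.
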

\begin{proof}
Let $\mathcal{A}$ be the algorithm solving the leader election problem and let $G$ be the network on which we want to solve the broadcast problem. In particular, let $v \in G$ be the node starting with the message $M$ to be broadcast. In order to perform the broadcast the nodes will collectively simulate an execution of the leader election algorithm $\mathcal{A}$ on a network $G'$. This network $G'$ consists of two disjoint copies of $G$ except that the two copies of the node $v$ are identified to be one node. The graph $G'$ has therefore $2n-1$ nodes and diameter at most $2D$ and a running the leader election algorithm $\mathcal{A}$ on $G'$ requires at most $T(2D,2n-1)$ rounds. In our simulation every node (except $v$) in $G$ simulates two nodes in $G'$, namely the two copies corresponding to it. Each round of the leader election algorithm in $G'$ is simulated by two rounds in $G$, each simulating the actions of all nodes in one copy of $G$ in $G'$. In particular, each node in $G$ keeps the state of both of its copies in $G'$ and assigns one to the odd and one to the even rounds. Communication and computation are then as governed by the leader election algorithm except for two small changes. Firstly, the node $v$, in order to stay true to the simulation in $G'$, will consider two received messages as a collision if they were received in the two rounds simulating the same round in $G'$. Secondly, in order to spread the broadcast message $M$ the node $v$ and any node that has already received $M$ will piggy-back $M$ to any transmission performed by it. Since $M$ is by assumption small enough to fit into one transmission the increase of the required message size is at most a factor of two and therefore negligible. What remains to be proven is that this way every node receives the broadcast message with high probability during the $2 T(2D,2n-1)$ rounds required for the leader election simulation. To see this we note that $v$ is a cut vertex in $G'$ which guarantees that any communication from one side to the other has to go through $v$. In particular, the side without a leader must in the end be informed about the existence and ID of the leader on the other side. This requires for each node $u$ a sequence of successful transmissions from $v$ to $u$. Since the broadcast message $M$ will be piggy-backed on these transmissions, every node $u$ in $G$ will received at least one transmission containing the message $M$, which is what is needed for a broadcast. 
\end{proof}

\subsubsection{Lower Bounds}\label{sec:LElowerbounds}

%

Next, we explain how known lower bounds for the broadcast problem translate to the leader election problem using the reduction from \Cref{lem:BCLEreduction}.

We first consider the setting without collision detection. The only non-trivial lower bound known for the synchronous wake-up setting without collision detection is the $\Omega(\log^2 n)$ lower bound of \cite{ABLP}. Together with the trivial $\Omega(D)$ lower bound this shows that leader election requires $\Omega(D + \log^2 n)$ rounds. In the non-synchronous wake-up model assumed in this paper however there is a $\Theta(D \log \frac{n}{D})$ broadcasting lower bound of \cite{KM93}. This lower bound applies if in round zero only some nodes (including the node with the broadcast message) wake-up while others need to be woken up by a received transmission. Together with \Cref{lem:BCLEreduction} this gives the following lower bound:

\begin{corollary}\label{lem:lowerboundLEnoCD}
Any (randomized) algorithm in the non-synchronous wake-up model requires at least $\Omega(T_{BC})= \Omega(D \log \frac{n}{D} + \log^2 n)$ rounds to solve the leader election problem in radio networks without collision detection, with high probability. In the synchronous wake-up model at least $\Omega(D + \log^2 n)$ rounds are required. 
\end{corollary}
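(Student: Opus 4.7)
The plan is a direct application of the reduction in \Cref{lem:BCLEreduction}, combining it with two previously known lower bounds on broadcast in radio networks without collision detection. Since that reduction turns any leader election algorithm of running time $T(D,n)$ into a broadcast algorithm of running time $2T(2D, 2n-1)$, any broadcast lower bound transfers to leader election up to constant factors in $D$ and $n$.

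First I would recall the two relevant broadcast lower bounds: the $\Omega(D \log \frac{n}{D})$ bound of Kushilevitz and Mansour \cite{KM93}, which requires non-synchronous wake-up (it exploits the layered hard instance in which nodes must be woken by the first received transmission), and the $\Omega(\log^2 n)$ bound of Alon, Bar-Noy, Linial and Peleg \cite{ABLP}, which holds even with synchronous wake-up. Together, in the non-synchronous model they give $T_{BC} = \Omega(D \log \frac{n}{D} + \log^2 n)$, and in the synchronous model at least $\Omega(\log^2 n)$.

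Next I would apply the reduction contrapositively. Suppose some algorithm solves leader election with high probability in $T(D,n)$ rounds on every $n$-node graph of diameter at most $D$. \Cref{lem:BCLEreduction} then yields, for any graph of $n$ nodes and diameter $D$, a broadcast algorithm running in $2T(2D, 2n-1)$ rounds. Comparing against the broadcast lower bounds above gives $T(2D, 2n-1) = \Omega(D \log \frac{n}{D} + \log^2 n)$ in the non-synchronous case, and $T(2D, 2n-1) = \Omega(\log^2 n)$ in the synchronous case. Finally I would perform the change of variables $(D,n) \mapsto (D/2, (n+1)/2)$, which preserves the asymptotic form of both expressions since $\log \frac{n/2}{D/2} = \log \frac{n}{D}$ and $\log^2(n/2) = \Theta(\log^2 n)$; this yields the desired lower bound on $T(D, n)$. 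For the synchronous statement I would then add the trivial $\Omega(D)$ bound (the identity of the elected leader must reach nodes on both ends of a diameter path) to obtain $\Omega(D + \log^2 n)$.

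There is no real technical obstacle; the main thing to check is simply that the broadcast lower bounds apply to graphs in the same family produced by the reduction of \Cref{lem:BCLEreduction}. This is essentially automatic: the reduction outputs an arbitrary connected undirected graph $G'$ (two copies of $G$ glued at one node), and both cited broadcast lower bounds hold worst-case over all connected undirected graphs of the given size and diameter, so we can instantiate $G$ so that $G'$ is (or contains as a subgraph, with appropriate edge augmentations) the hard instance for the chosen broadcast lower bound.
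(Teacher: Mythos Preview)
Your proposal is correct and follows essentially the same approach as the paper: combine the reduction of \Cref{lem:BCLEreduction} with the broadcast lower bounds of \cite{KM93} and \cite{ABLP}, plus the trivial $\Omega(D)$ bound for the synchronous case. The paper states exactly this in the paragraph preceding the corollary and does not give a separate formal proof, so your more explicit contrapositive argument and change of variables are a faithful (and slightly more detailed) rendering of the intended reasoning.

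One small wording issue in your final paragraph: you phrase the check as needing $G'$ (the doubled graph) to be the hard broadcast instance. That is backwards. In \Cref{lem:BCLEreduction} the broadcast instance is the original graph $G$; the leader election instance is $G'$. So the logic is: take $G$ to be a hard broadcast instance from \cite{KM93} or \cite{ABLP}; the reduction turns any leader-election algorithm (which by assumption works on \emph{all} graphs, in particular on $G'$) into a broadcast algorithm on $G$; the broadcast lower bound on $G$ then forces the leader-election lower bound on $G'$. There is nothing to check about whether $G'$ happens to coincide with any hard instance. Once you fix that sentence, the argument is clean.
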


For the setting with collision detection the $\Omega(\log n)$ lower bound of~\cite{NakOla02,GLS11} for the single-hop model applies and together with the trivial $\Omega(D)$ diameter lower bound gives:

\begin{corollary}\label{lem:lowerboundLEwithCD}
Any (randomized) algorithm requires at least $\Omega(D + \log n)$ rounds to solve the leader election problem in radio networks with collision detection or beep networks, with high probability.
\end{corollary}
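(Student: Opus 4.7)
The plan is to derive the two additive terms separately. For the $\Omega(\log n)$ term, I would observe that single-hop radio networks (complete graphs, i.e.\ diameter $D=1$) are a special case of multi-hop networks, so any multi-hop leader election algorithm, restricted to such inputs, inherits the known single-hop lower bounds. The cited bound of \cite{NakOla02}, extended to non-uniform protocols in \cite{GLS11}, establishes that high-probability leader election in single-hop networks with collision detection requires $\Omega(\log n)$ rounds. Since the beep model is strictly weaker than the CD model (any beep algorithm can be simulated in CD by interpreting each beep as an arbitrary $\Theta(\log n)$-bit packet), a lower bound for CD immediately yields a lower bound for beep of the same order, so the $\Omega(\log n)$ bound applies to both settings in the statement.

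For the $\Omega(D)$ term, I would invoke \Cref{lem:BCLEreduction} together with the trivial $\Omega(D)$ broadcast lower bound. The broadcast bound is immediate: on any network containing a pair of nodes at distance $D$, the state of the far receiver after $k < D$ rounds is a function only of random bits and messages originating from nodes within its $k$-neighborhood, which excludes the source; hence it cannot output the source's message with any nontrivial probability before round $D$. Now suppose, for contradiction, that some algorithm solved leader election in $T(D,n) = o(D)$ rounds on all diameter-$D$ inputs. Plugging this into \Cref{lem:BCLEreduction} would give a broadcast algorithm running in $2\,T(2D, 2n-1) = o(D)$ rounds on diameter-$D$ graphs, contradicting the trivial lower bound (which still applies since the graph $G'$ constructed in the reduction contains a copy of $G$ and therefore nodes at distance $\Theta(D)$ from the designated broadcast source $v$). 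Hence $T(D,n) = \Omega(D)$ in both models.

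Taking the maximum of the two bounds yields $T(D,n) = \Omega(D + \log n)$, which is the claimed statement. The proof is essentially bookkeeping around prior results and \Cref{lem:BCLEreduction}; the only things requiring a moment's care are verifying that the single-hop CD lower bound transfers to the beep model (it does, by the simulation argument above, using that beep is strictly weaker) and checking that the broadcast lower bound survives the $D \mapsto 2D$ blowup in the reduction (it does, since a factor of two in the diameter is absorbed by the asymptotic notation).
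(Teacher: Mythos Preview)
Your proposal is correct and essentially follows the paper's approach. The paper's justification is even terser: it simply cites the $\Omega(\log n)$ single-hop lower bound of \cite{NakOla02,GLS11} and invokes a ``trivial $\Omega(D)$ diameter lower bound'' for leader election directly, without routing through \Cref{lem:BCLEreduction}. Your detour via the reduction and the broadcast lower bound is valid but slightly more elaborate than needed; the direct argument (two nodes at distance $D$ cannot coordinate their output before round $D$) suffices. Your extra care in transferring the CD lower bound to the beep model is a detail the paper leaves implicit.
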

 

\subsection{Message Dissemination in Radio Networks: The Decay-Protocols}\label{sec:decay}
In this section we present a recap on the Decay broadcast algorithm from \cite{BGI1} and the modified variant of it presented in \cite{CR}. The Decay algorithm has become one of the standard techniques for resolving collisions in the graph-based radio networks, and as many other papers in this area, our algorithms also use it frequently. 

\subsubsection{The Basic Decay Algorithm}
\paragraph{A phase of Decay:} The basic version of the Decay protocol \cite{BGI1} works as follows: The execution of the algorithm is organized in phases, where each phase consists of $\log n$ rounds. In the $i^{th}$ round of each phase, each node that has a message to send transmits this message with probability $2^{-i}$ and remains silent otherwise. The key property of this protocol is captured by the following simple lemma:

\begin{lemma}[\cite{BGI1}] Consider a node $v$, a phase of the Decay protocol, and suppose that in this phase, at least one neighbor of $v$ has a message to send. Then in this phase, with probability at least $\frac{1}{10}$, $v$ receives at least one message.
\end{lemma}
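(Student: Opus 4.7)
The plan is to identify a single round within the phase whose transmission probability is well-tuned to the number of competing transmitters, and to show that in that round $v$ receives a message with constant probability.

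First, I would let $k \geq 1$ denote the number of neighbors of $v$ that hold a message in this phase. I would then pick the unique integer $i^*$ satisfying $2^{i^*-1} \leq k < 2^{i^*}$; since $k \leq n-1$, this index is in the range of rounds executed during the phase. In round $i^*$ each message-holding neighbor of $v$ transmits independently with probability $p = 2^{-i^*}$, so by construction $kp \in [1/2, 1)$.

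Next, I would observe that $v$ receives a message in round $i^*$ precisely when exactly one of those $k$ neighbors transmits while the remaining $k-1$ stay silent. This event has probability
\[
kp(1-p)^{k-1},
\]
and I would estimate the two factors separately. The first factor is already $\geq 1/2$ by the choice of $i^*$. For the second factor, I would use the elementary fact that $(1-p)^{1/p} \geq 1/4$ for every $p \in (0,1/2]$ (the function $p \mapsto (1-p)^{1/p}$ attains its minimum on this interval at $p=1/2$, where it equals $1/4$), combined with the observation that $k-1 < k \leq 1/p$ to conclude $(1-p)^{k-1} \geq (1-p)^{1/p} \geq 1/4$. Multiplying these two bounds gives a success probability of at least $(1/2)\cdot(1/4) = 1/8 > 1/10$, and hitting $v$ in even a single round of the phase suffices.

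There is no real obstacle here: the main step is just the two-factor estimate above, and the constant $1/10$ is stated with slack. The only minor technicality is ensuring that $i^*$ indeed corresponds to a round of the phase, which is immediate from $k \leq n-1$, so the analysis applies as soon as at least one neighbor has a message.
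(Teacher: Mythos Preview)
Your argument is correct and is essentially the standard proof of this fact (originally due to Bar-Yehuda, Goldreich, and Itai). The paper itself does not supply a proof but merely cites \cite{BGI1}; your choice of the round $i^*$ with $kp\in[1/2,1)$ and the bound $(1-p)^{k-1}\ge(1-p)^{1/p}\ge 1/4$ for $p\in(0,1/2]$ is exactly the usual route, and the resulting $1/8$ comfortably beats the stated $1/10$.
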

Now, to see the power of this property, we explain how Bar-Yehuda et al\cite{BGI1} use the Decay protocol for a single-message broadcast: Suppose that the source starts sending a message at time $0$ and for each node $v$, if $v$ receives the message in a phase, it sends it in all the future phases. Considering a path of length $P$, in expectation, it takes the message a constant number of phases to make a progress of one hop. Thus, in expectation, $D$ hops take $O(D)$ phases. A simple application of the Chernoff bound then provides a probability concentration proving that in $O(D+\log n)$ phases, or equivalently in $O(D\log n+\log^2 n)$ rounds, with high probability all the nodes have received the message.

\paragraph{A long-phase of Decay:} In some of our applications, it is convenient to group $\Theta(\log n)$ phases of the Decay protocol together and use the resulting stronger probability guarantee. We define a \emph{long-phase} of Decay to consist of $\Theta(\log n)$ phases of Decay. In each of these $\Theta(\log n)$ phases, each node that started the long-phase with a message to send tries transmitting its message according to the probabilities of one phase of the Decay protocol explained above. That is, in $i^{th}$ round of each phase, each node that started the long phase with a message to send transmits this message with probability $2^{-i}$ and remains silent otherwise. The only difference to performing $\log n$ short phases is that nodes which get newly informed during a long phase do not forward this message until the long-phase is over.

\begin{lemma}[\cite{BGI1}]\label{lem:long-phase} In each long-phase of the Decay protocol, with high probability, the following holds: each node $v$ that has at least one neighbor that is sending a message receives at least one message.
\end{lemma}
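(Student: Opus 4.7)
The plan is to take the single-phase guarantee (probability at least $1/10$ of successful reception whenever a neighbor transmits) and amplify it via independent repetition, then conclude with a union bound over all relevant nodes.

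First, I would fix a node $v$ and inspect what the hypothesis ``at least one neighbor is sending a message'' actually means in the long-phase setting. The key structural observation is that the set of transmitting nodes is frozen at the start of the long-phase: a node is a sender for the whole long-phase if and only if it began the long-phase holding a message, and newly informed nodes deliberately defer forwarding until the long-phase ends. Consequently, if $v$ has at least one transmitting neighbor during the long-phase, then the same neighbor is transmitting throughout all $\Theta(\log n)$ constituent short phases.

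Second, I would invoke the basic Decay lemma on each short phase separately. That lemma says that in any short phase in which at least one neighbor of $v$ has a message, $v$ receives some message with probability at least $1/10$. The random coin flips made by nodes in distinct short phases are independent, so the $\Theta(\log n)$ trials are mutually independent. The probability that $v$ fails in every one of them is at most $(9/10)^{c\log n} = n^{-\Omega(c)}$, which can be made smaller than $n^{-c'}$ for any desired constant $c'$ by choosing the hidden constant inside $\Theta(\log n)$ large enough.

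Third, I would take a union bound over all $n$ nodes in the network: the probability that some node $v$ satisfying the hypothesis fails to receive a message during the long-phase is at most $n \cdot n^{-\Omega(c)} = n^{-\Omega(c)}$, giving the desired high-probability guarantee.

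The only subtle point, and the one I would flag as the main thing to get right, is the timing of the ``sending'' set: the proof relies essentially on the fact that long-phases suppress forwarding by newly informed nodes, so that the hypothesis ``$v$ has a neighbor with a message'' persists unchanged across all $\Theta(\log n)$ short phases and hence the single-phase lemma applies identically in each of them. Everything else is a routine independent-trials amplification plus a union bound.
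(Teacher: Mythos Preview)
Your argument is correct and is exactly the standard amplification argument one would expect here: freeze the sender set, apply the single-phase $\tfrac{1}{10}$ bound independently over the $\Theta(\log n)$ phases, and finish with a union bound over nodes. The paper does not actually supply its own proof of this lemma---it is stated with attribution to \cite{BGI1} and left unproved---so there is nothing to compare against beyond noting that your write-up matches the intended reasoning, including the key point (which the paper does emphasize in the definition of a long-phase) that newly informed nodes defer forwarding, keeping the transmitter set fixed across all constituent phases.
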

It is easy to see that one can use $D$ long-phases of the Decay protocol to perform a single-message broadcast. The result would be an algorithm with round complexity $O(D\log^2 n)$ which is slower than the basic version, but has the appealing property that the growth of the area of nodes that have received the message is very controlled: with high probability, it grows by exactly one hop in each long-phase.

\subsubsection{The Fast Decay Algorithm}\label{subsub:fastDecay}
Finally, in some of our applications, we use the faster variant of the Decay protocol presented by Czumaj and Rytter\cite{CR} that performs a single message broadcast in $T_{BC}=O(D\log{\frac{n}{D}}+\log^2 n)$ rounds (instead of the $O(D\log n+\log^2 n)$ rounds achieved by the basic Decay). We refer to this version as the Fast-Decay algorithm. Czumaj and Rytter \cite[Lemma 7.7]{CR} define a sequence which determines the probability of transmission for each round, and for each node that has a message to transmit. Each node that receives a message starts trying for transmitting it from the next round. This sequence gives the property that all nodes receive the message by round $O(D\log \frac{n}{D}+\log^2 n)$, with high probability. 

In this protocol, it is possible that a node at distance $d$ from the source receives the message in round $d$. In some of the applications, we want to have a closer control on the spreading of the message and thus, we add a simple change: for a delay parameter $\delta\geq \log \frac{n}{D}$, a node starts forwarding the message $\delta$ rounds after the first time that it receives the message. We refer to this as Fast-Decay($\delta$). This additional delay gives us the guarantee that a node at distance $d$ receives the message sometime after round $(d-1)\delta$. On the other hand, it delays the progress of the message by $\delta$ rounds per hop and thus at most $D\delta$ rounds in total.

\begin{lemma}[\cite{CR}] \label{lem:fastdecay}
For $\delta = \log \frac{n}{D}$ and a large enough constant $\alpha$, if the Fast-Decay($\delta$) protocol is run for $T = \alpha (D \delta + \log^2 n)$ rounds, then any node $v$ with distance $d$ to the closest node that is initially a sender will with high probability receive a message for the first time between round $(d-1) \delta$ and round $T$. 
\end{lemma}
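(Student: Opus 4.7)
The plan is to prove the lemma in two parts: first, a deterministic lower bound showing that no node at distance $d$ can receive the message before round $(d-1)\delta$, and second, a (high probability) upper bound showing that every node receives the message by round $T$.

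For the lower bound, I would argue by induction on the distance $d$ of $v$ from the closest initial sender. Let $t_u$ denote the first round in which a node $u$ receives a message. The base case $d=1$ is trivial since $(d-1)\delta = 0$. For the inductive step with $d \geq 2$, observe that any neighbor $w$ of $v$ has source-distance in $\{d-1,d,d+1\}$, so in particular no neighbor is an initial sender. By the Fast-Decay($\delta$) rule, such a $w$ does not transmit before round $t_w + \delta$, so if $v$ first receives from $w$ we have $t_v \geq t_w + \delta$. Since $w$ has distance at least $d-1$ to the source, the inductive hypothesis yields $t_w \geq (d-2)\delta$, and thus $t_v \geq (d-1)\delta$. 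Note this part is entirely deterministic.

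For the upper bound I would invoke the original Czumaj--Rytter analysis of Fast-Decay, which shows that without any added delay every node receives the message within $O(D \log \tfrac{n}{D} + \log^2 n)$ rounds w.h.p. Their argument proceeds along a shortest path from the source to $v$, amortizing progress across the at most $D$ hops and a sequence of transmission-probability phases whose total length is $O(\log^2 n)$. Inserting the $\delta$-round delay at each hop simply shifts in time when each subsequent node enters its transmission schedule; it cannot cause any new collisions or failures, because a node whose transmission window has not yet begun is silent and only helps its downstream neighbors. Hence the only overhead is an additive $D\delta = D \log \tfrac{n}{D}$ term on the total propagation time. Choosing $\alpha$ large enough so that $T = \alpha(D\delta + \log^2 n)$ dominates both the Czumaj--Rytter bound and this added slack, the upper bound follows.

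The step I expect to require the most care is the second one: one needs to check that the calibrated per-round transmission probabilities used in Fast-Decay, which depend on how many rounds since a node first received the message, still satisfy the progress lemma used by Czumaj--Rytter once each intermediate node's local clock is uniformly shifted by $\delta$ rounds along the path. Since the shifts are additive and the analysis is already a hop-by-hop amortization, this should go through, but stitching the per-hop guarantees into the global w.h.p. bound is the part that needs to be written carefully.
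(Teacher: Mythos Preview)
The paper does not give a proof of this lemma: it is stated as a cited result from \cite{CR}, and the only justification appears in the paragraph immediately preceding the lemma, which informally argues exactly the two points you make --- that the $\delta$-round delay per hop forces the first reception at distance $d$ to occur no earlier than round $(d-1)\delta$, and that the total additional cost over the original Czumaj--Rytter broadcast time is at most $D\delta$. Your proposal is a correct and more detailed formalization of that same sketch, so there is nothing substantive to compare.
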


Lastly, because the sequence of probabilities given in \cite{CR} contains the Decay sequence as a subsequence it is easy to see that the equivalent of \Cref{lem:long-phase} also holds for the Fast-Decay($\delta$) protocol:

\begin{lemma}[\cite{CR}]\label{lem:fastdecay-slowstep} After running the Fast-Decay($\delta$) protocol for $\alpha\log^2 n$ rounds, every node $v$ at distance one hop from the source (or sources) receives a message with high probability.
\end{lemma}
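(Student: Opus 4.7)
The plan is to exploit exactly the property highlighted right before the statement: the probability sequence used by the Fast-Decay($\delta$) protocol of \cite{CR} contains the basic Decay sequence $(2^{-1}, 2^{-2}, \ldots, 2^{-\lceil \log n \rceil})$ as a subsequence, repeated many times. Concretely, I would first verify by inspection of the Czumaj--Rytter schedule that within $\alpha \log^2 n$ consecutive rounds this length-$\log n$ Decay block appears at least $\beta \log n$ times, for some constant $\beta$ growing linearly with $\alpha$. This gives $\beta \log n$ disjoint ``embedded'' phases of basic Decay inside the $\alpha \log^2 n$-round window.

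Next I would fix an arbitrary node $v$ at distance one from some source $s$. Because $s$ is an initial sender, it transmits with the Fast-Decay probabilities in every round from round $0$ onward. In particular, during each embedded Decay phase, $s$ follows exactly the transmission schedule of one phase of basic Decay. Hence the single-phase Decay guarantee (Lemma on page about \textbf{A phase of Decay}) applies: with probability at least $1/10$, $v$ receives a message during that phase, no matter what the other neighbors of $v$ happen to do --- the statement of the Decay phase lemma only requires at least one transmitting neighbor, which $s$ provides.

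Since the random coin flips of $s$ (and any other node) in disjoint phases are independent, the events ``$v$ receives no message in embedded phase $i$'' are independent across the $\beta \log n$ embedded phases. Thus the probability that $v$ never receives a message in the $\alpha \log^2 n$ rounds is at most $(9/10)^{\beta \log n} = n^{-\Theta(\beta)}$, which we can drive below $n^{-(c+1)}$ for any desired $c$ by making $\alpha$ (and hence $\beta$) a sufficiently large constant. A union bound over the at most $n$ distance-one nodes $v$ then gives the claim with high probability.

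The only nontrivial obstacle is the first bullet --- checking that the explicit sequence of \cite{CR} really contains $\Omega(\log n)$ disjoint copies of the Decay sequence within $O(\log^2 n)$ rounds. This is purely a calculation against the definition of that sequence rather than a probabilistic argument, and it is essentially the same observation invoked to obtain the multi-hop broadcast guarantee in \Cref{lem:fastdecay}. Everything else is a clean union of the basic Decay phase lemma with independence across phases.
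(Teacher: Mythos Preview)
Your proposal is correct and matches the paper's approach exactly. The paper does not give a detailed proof of this lemma; it merely states, immediately before the lemma, that ``because the sequence of probabilities given in \cite{CR} contains the Decay sequence as a subsequence it is easy to see that the equivalent of \Cref{lem:long-phase} also holds for the Fast-Decay($\delta$) protocol'' --- which is precisely the reduction you carry out in detail (embedded Decay phases, the $1/10$ per-phase success bound, independence across phases, and a union bound).
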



\section{Our Results and Their Overview}\label{sec:overview}
In this paper, we show the following two results:

\begin{theorem}\label{lem:mainnoCD}
In radio networks without collision detection, there is a distributed randomized algorithm that in any network with $n$ nodes and diameter $D$ with high probability solves the leader election problem in time
$$T_{noCD} = O\left(D \log \frac{n}{D} + \log^3 n\right) \cdot \min\left\{\log \log n, \log \frac{n}{D}\right\}.$$
\end{theorem}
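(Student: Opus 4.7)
The plan is to break the simulation barrier by running a Willard-style candidate-reduction scheme directly inside the multi-hop network rather than on top of a simulated collision-detection channel. Structurally, the algorithm will consist of $\Theta(\min\{\log\log n, \log \frac{n}{D}\})$ epochs, each of length $O(D\log\frac{n}{D} + \log^3 n)$ rounds, during which the candidate set is shrunk by independent sampling with doubly-exponentially decreasing probabilities (e.g.\ $p_i = 2^{-2^i}$, truncated at $1/n$). After $\Theta(\log\log n)$ epochs, with high probability some epoch has exactly one globally surviving candidate, and that candidate's ID becomes the leader; the delicate part is detecting which epoch succeeded and propagating the winning ID, all without collision detection.

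Inside each epoch I would use an ID-propagation subroutine built on Fast-Decay (\Cref{lem:fastdecay,lem:fastdecay-slowstep}). Concretely, every current candidate simultaneously broadcasts its $\Theta(\log n)$-bit ID using the Fast-Decay$(\delta)$ schedule with $\delta = \log \frac{n}{D}$, and every non-candidate adopts the \emph{lexicographically smallest} ID that it ever hears during the $O(T_{BC})$-round window. If the candidate set is a singleton, \Cref{lem:fastdecay} guarantees that every node adopts that unique ID with high probability. A candidate $v$ then declares itself a \emph{winner} only if, after an $O(\log^3 n)$-round local \emph{confirmation phase} in which nodes repeatedly re-broadcast the bits of their currently adopted ID using independent Decay instances, the ID propagated back to $v$ in the relevant neighborhood is still $v$'s own. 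The extra $\log^3 n$ term (one $\log n$ factor beyond the additive $\log^2 n$ in $T_{BC}$) arises precisely from running $\Theta(\log n)$ independent Decay checks of length $\Theta(\log^2 n)$ each, which is what is needed to drive the per-epoch failure probability below $1/\poly(n)$.

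The main obstacle is proving that winner-detection fires in the epoch that has a globally unique candidate and \emph{never} in any other epoch, despite the absence of collision detection. The argument I would carry out has two parts: (i) with high probability there exists an epoch $i^\star$ among the $\Theta(\log \log n)$ epochs in which exactly one candidate survives globally, and in that epoch the unique ID floods the network and the confirmation phase certifies it; (ii) in any epoch with at least two candidates, two distinct IDs are adopted in some pair of neighboring balls of radius $O(\log n)$, and the repeated local Decay checks catch that disagreement with probability $1 - 1/\poly(n)$, so no candidate passes confirmation. Concatenating the epochs with a final $O(T_{BC})$-round dissemination of the winning ID yields the claimed running time, and the $\log \frac{n}{D}$ alternative in the min covers the regime $D = \Omega(n/\log n)$, where a simpler tournament of depth $O(\log \frac{n}{D})$ along a BFS-like structure replaces the doubly-exponential schedule.
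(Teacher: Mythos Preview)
Your proposal has a genuine gap at its core. The claim in part (i) --- that among $\Theta(\log\log n)$ epochs with sampling probabilities $p_i = 2^{-2^i}$ there is with high probability one epoch with \emph{exactly} one surviving candidate --- is false. Only the one or two epochs with $np_i \approx 1$ have a non-negligible chance of producing a singleton, and for those the probability is $\Theta(1)$, not $1 - 1/\poly(n)$. This is precisely why Willard's single-hop scheme needs $\Theta(\log n)$ rounds for a high-probability guarantee and only $\Theta(\log\log n)$ in expectation; transplanting it to multi-hop does not change that arithmetic. The paper avoids this entirely: it samples once to get $\Theta(\log n)$ candidates and then \emph{deterministically} halves the candidate set in each of $O(\log\log n)$ debates via a degree-based elimination rule on an overlay graph (\Cref{lem:slash}), so the number of phases is governed by $\log$ of the initial count rather than by a lucky sampling event.

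Part (ii) is also broken. Your confirmation phase is local (``in the relevant neighborhood''), but a candidate $v$ whose nearest competitor is $\Omega(D)$ hops away will see only its own ID in any $O(\log n)$-radius check and will wrongly declare itself a winner; ruling this out requires a \emph{global} agreement test, which is essentially the problem you are trying to solve. The paper's machinery --- growing clusters via Fast-Decay with trimming, a sparsified overlay $H'$, and a bounded-degree elimination that needs each candidate to learn only $O(1)$ neighbors --- is what makes a single debate implementable in $O(D\log\frac{n}{D} + \log^3 n)$ rounds while guaranteeing genuine global progress. Finally, the $\log\frac{n}{D}$ branch in the $\min$ is not obtained by a ``BFS-like tournament'' (building BFS without collision detection is itself nontrivial) but by shrinking the cluster-growth radius in early debates so that the diameter-dependent cost telescopes; see \Cref{sec:linearnoCD}.
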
 

\begin{theorem}\label{lem:mainCD}
In beeping networks (or radio networks with collision detection), there is a distributed randomized algorithm that in any network with $n$ nodes and diameter $D$ with high probability solves the leader election problem in time $$T_{beep} = O\left(D + \log n \log \log n \right) \cdot \min\left\{\log \log n, \log \frac{n}{D}\right\}.$$
\end{theorem}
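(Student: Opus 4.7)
The plan is to design a tournament-style algorithm which shrinks an active set $S \subseteq V$ of leader candidates through $I = \Theta(\min\{\log\log n, \log(n/D)\})$ iterations and then broadcasts the surviving ID in an additional $O(D + \log n)$ rounds. Initially $S = V$, and the invariant maintained with high probability is that $S$ stays non-empty while $|S|$ drops doubly-exponentially, so $|S| = O(1)$ after the iterations finish. At that point, a short final stage in which each surviving candidate beeps in the time-slot indexed by its random $\Theta(\log n)$-bit ID picks a unique leader, and its ID is flooded across the network by a standard $O(D + \log n)$-round beep-wave broadcast.

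Each iteration consists of a \emph{thinning} phase and a \emph{validation} phase. In thinning, every $v \in S$ remains in $S$ with a random survival probability $p$, where $p$ ranges over $\Theta(\log\log n)$ doubly-exponentially spaced values that are tried in parallel; this Willard-style parallelism handles the fact that no node knows $|S|$ globally. Each of the $\Theta(\log\log n)$ probability scales costs $O(\log n)$ rounds to execute reliably in the multi-hop beep model using a long-phase of the Decay protocol (\Cref{lem:long-phase}) in place of a collision-free channel, so the phase totals $O(\log n \log\log n)$ rounds. The validation phase then uses a $O(D)$-round beep-wave flood to compute the network-wide OR ``is the new $S$ still non-empty?''; if empty, the thinning at that scale is rolled back, otherwise the new $S$ is adopted. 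Thus one iteration uses $O(D + \log n \log\log n)$ rounds, and multiplying by $I$ gives the claimed $T_{beep}$.

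The main technical obstacle is showing that, across the $\Theta(\log\log n)$ parallel probability scales, at least one scale produces a set $S'$ with $1 \le |S'| \le |S|^{1-\Omega(1)}$, which yields the doubly-exponential contraction needed to get down to $O(1)$ in $\log\log n$ iterations. I would prove this by a Willard-style analysis: for the scale whose $p$ is closest to $1/\sqrt{|S|}$, a Chernoff bound concentrates $|S'|$ around $\sqrt{|S|}$, and the OR-based validation detects and rejects the scales that would zero out $S$, preserving non-emptiness. A secondary obstacle is correctly running the local contention among multiple beeping candidates inside the $O(\log n)$-round sub-phases on a multi-hop beep network, for which I would lean on the Fast-Decay machinery of \Cref{lem:fastdecay} and \Cref{lem:fastdecay-slowstep} to guarantee that every candidate is heard in its local neighborhood with high probability. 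Finally, to explain the $\min$ in the iteration count, I would argue that for large $D$ the graph is sufficiently path-like that after only $\log(n/D)$ thinning iterations the set $S$ is reduced to $O(D)$ geographically spread candidates, and a single additional $O(D + \log n)$-round structured procedure identifies the minimum-ID survivor directly; stitching this regime together with the $\log\log n$ regime is what produces the $\min\{\log\log n, \log(n/D)\}$ factor in the final bound.
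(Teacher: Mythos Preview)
Your approach is genuinely different from the paper's, and it has real gaps.

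\textbf{What the paper does.} The paper does \emph{not} use Willard-style random thinning. It first samples each node with probability $\Theta((\log n)/n)$ to obtain $\Theta(\log n)$ candidates, and then runs $\Theta(\log\log n)$ \emph{debates}. In each debate the network is partitioned into clusters around the remaining candidates using pipelined beep waves; this induces an overlay graph $H$ on candidates, and a deterministic \emph{Elimination} step removes a constant fraction of the candidates (those dominated in $(\text{degree},\text{ID})$ by an overlay neighbor) while keeping at least one. The communication on $H$ is carried out via beep-wave uplink/downlink plus an intercommunication step, and the $O(\log n\log\log n)$ additive term comes from \emph{approximate-counting superimposed codes} (\Cref{lem:approxcodesize}) and a bitwise max-detection routine, not from Decay. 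The $\min\{\log\log n,\log(n/D)\}$ factor arises because in debate $i$ clusters are grown only to radius $\min\{D,\,\Theta(n\,1.05^{i}/\log n)\}$; a counting argument shows at most half the candidates are isolated at that radius, and the geometric sum of radii gives $O(D\log(n/D))$.

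\textbf{Gaps in your plan.} First, the invocations of \Cref{lem:long-phase}, \Cref{lem:fastdecay}, and \Cref{lem:fastdecay-slowstep} are misplaced: these are message-passing guarantees for the radio model \emph{without} collision detection, and there is no notion of ``running a long-phase of Decay'' in the beep model, where nothing but presence/absence of a beep is communicated. It is unclear what the $O(\log n)$ rounds per probability scale are actually doing. Second, your final step ``each surviving candidate beeps in the time-slot indexed by its $\Theta(\log n)$-bit ID'' requires $n^{\Theta(1)}$ rounds, not $O(D+\log n)$; getting from $O(1)$ candidates to exactly one in a multi-hop beep network still needs a nontrivial protocol. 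Third, and most seriously, the $\log(n/D)$ regime does not follow from your scheme. Doubly-exponential shrinking from $|S|=n$ gives $|S|\approx n^{(1-c)^{i}}$ after $i$ iterations, so after $i=\log(n/D)$ iterations you have roughly $n^{D/n}$ candidates, which for $D$ close to $n$ is still nearly $n$; it is not $O(D)$. And even granting $O(D)$ ``geographically spread'' candidates, your asserted ``single $O(D+\log n)$-round procedure to identify the minimum-ID survivor'' is, in a general multi-hop beep network, essentially the leader election problem you set out to solve---no such procedure is supplied, and the paper's mechanism for this regime (variable-radius clustering with the $2n/k$ packing argument) is entirely absent from your outline.
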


\begin{corollary} In both radio network without collision detection and beeping networks, there are randomized leader election algorithms that in any network with $n$ nodes with high probability solve the leader election problem in $O(n)$ rounds.
\end{corollary}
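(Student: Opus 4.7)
The plan is to derive the corollary by direct case analysis of the two upper bounds, using standard properties of the function $D\mapsto D\log(n/D)$. The only substantive fact needed is that for every $D\in[1,n]$ and every constant $k\ge 0$, $D\log^{k}(n/D)=O(n)$; one sees this by writing $x=n/D\in[1,n]$, so $D\log^{k}(n/D)=n\cdot \log^{k}(x)/x$, and the function $\log^{k}(x)/x$ attains its maximum at $x=e^{k}$ and decays thereafter, hence is bounded by an absolute constant depending only on $k$. In particular $D\log(n/D)$ and $D\log^{2}(n/D)$ are both $O(n)$.

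For Theorem~\ref{lem:mainnoCD}, I would split on the threshold $D\lessgtr n/\log n$, which is exactly where the two candidates inside the $\min$ cross. If $D\le n/\log n$, then $\log(n/D)\ge \log\log n$, so the $\min$ equals $\log\log n$, and the bound becomes
\[
\bigl(D\log\tfrac{n}{D}+\log^{3}n\bigr)\log\log n
\le \bigl(\tfrac{n}{\log n}\cdot \log n + \log^{3}n\bigr)\log\log n
=o(n).
\]
If $D>n/\log n$, then the $\min$ equals $\log(n/D)\le \log\log n\le \log n$, giving
\[
\bigl(D\log\tfrac{n}{D}+\log^{3}n\bigr)\log\tfrac{n}{D}
= D\log^{2}\tfrac{n}{D}+\log^{3}n\cdot \log\tfrac{n}{D}
= O(n)+O(\log^{4}n)
= O(n),
\]
using $D\log^{2}(n/D)=O(n)$ from the preliminary fact.

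For Theorem~\ref{lem:mainCD} I would use the same case split. For $D\le n/\log n$, the bound is
\[
(D+\log n\log\log n)\log\log n
\le \tfrac{n}{\log n}\log\log n+\log n(\log\log n)^{2}=o(n).
\]
For $D>n/\log n$, the $\min$ is $\log(n/D)$ and the bound is
\[
(D+\log n\log\log n)\log\tfrac{n}{D}
= D\log\tfrac{n}{D}+\log n\log\log n\cdot \log\tfrac{n}{D}
= O(n)+O(\log^{2}n\log\log n)=O(n),
\]
again invoking $D\log(n/D)=O(n)$. Combining both cases yields the claimed $O(n)$ round complexity for each of the two settings. There is no real obstacle here: the entire argument is the one-line observation that $D\log^{k}(n/D)=O(n)$, together with a routine case split on which branch of the $\min$ is active; the lower-order additive terms ($\log^{3}n$ and $\log n\log\log n$) are swallowed trivially because they are multiplied only by $O(\log n)$.
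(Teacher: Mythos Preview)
Your overall strategy is the right one and matches what the paper intends (the corollary is stated there without proof, as an immediate consequence of the two theorems). However, there is a genuine arithmetic slip in Case~1 for Theorem~\ref{lem:mainnoCD}. You bound
\[
D\log\tfrac{n}{D}\ \le\ \tfrac{n}{\log n}\cdot \log n \ =\ n,
\]
which is valid, but then the displayed expression becomes $(n+\log^{3}n)\log\log n=\Theta(n\log\log n)$, not $o(n)$ as you wrote. The crude product bound $D\le n/\log n$ and $\log(n/D)\le \log n$ is too loose because the two maxima are not attained simultaneously.

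Two easy fixes. First, be sharper in Case~1: on $[1,n/\log n]$ the function $D\mapsto D\log(n/D)$ is increasing (its derivative $\log(n/D)-1$ is positive for $D<n/e$), so its maximum on this interval is $(n/\log n)\log\log n$; hence $(D\log(n/D)+\log^{3}n)\log\log n\le (n(\log\log n)/\log n+\log^{3}n)\log\log n=o(n)$. Second, and cleaner, drop the case split altogether for Theorem~\ref{lem:mainnoCD}: since $\min\{\log\log n,\log(n/D)\}\le \log(n/D)$ always, the bound is at most
\[
\bigl(D\log\tfrac{n}{D}+\log^{3}n\bigr)\log\tfrac{n}{D}
= D\log^{2}\tfrac{n}{D}+\log^{3}n\cdot\log\tfrac{n}{D}
= O(n)+O(\log^{4}n)=O(n),
\]
using exactly your preliminary fact with $k=2$. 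Your Case~2 analysis already is this computation, so the case split was never needed for the no-CD theorem. Your treatment of Theorem~\ref{lem:mainCD} is correct as written.
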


As mentioned in the introduction, the key novelty in these results is going away from the approach of simulating a single-hop network. Instead, we start with a small number of candidates and gradually reduce this number until we have a leader. We hope that the general outline of our approach, which we sketch in the next subsection, can be useful in other wireless settings as well. 

In the rest of this section, we present an overview of these algorithms. All our algorithms try to implement an ideal leader election template which we describe first. The methods for implementing this template differ depending on whether one is in a setting without or with collision detection. We explain the key ideas of these implementations in \Cref{sec:overviewnoCD} and \Cref{sec:overviewCD}, respectively.

\subsection{The Leader Election Template}\label{sec:template}
The main outline of our algorithms and the topmost level of their ideas are as follows.

\newparagraph{Main Outline and the Debates:} Given the number of nodes $n$, we first use sampling to reduce the number of possible candidates for leadership. Each node decides to be a candidate, independently, with probability $\frac{10 \log n}{n}$. A Chernoff bound then shows that with high probability, this leads to at least one and at most $20 \log n$ candidates. To elect a leader among these candidates, we then run in phases, called \emph{``debates''}. In each \emph{debate}, we eliminate at least a constant fraction of the candidates, while keeping the guarantee that always at least one remains. After $O(\log \log n)$ debates, only exactly one candidate remains. At the end, this candidate declares itself as the leader by broadcasting its ID. This outline is presented in Algorithm \ref{alg:LE}.

\begin{algorithm}[H]
\caption{Leader Election Algorithm @ node $u$}
\begin{algorithmic}[1]
\algorithmsize
\WithProb{$\frac{10\log n}{n}$} 
\State $candidate \gets true$ \Comment{$candidate$ is a Boolean variable which indicates whether $u$ is a candidate}
\Otherwise 
\State $candidate \gets false$
\EndWithProb

\Statex
\For {$i= 1$ to  $\Theta(\log \log n)$}
	\State \textbf{Debate} \Comment{for some candidates, the Boolean variable $candidate$ becomes $false$}
\EndFor
\smallskip
\Statex
\If {$candidate$}
	\State Broadcast $ID_u$
\EndIf
\State output the received $ID$ as the leader
\end{algorithmic}
\label{alg:LE}
\end{algorithm}

\paragraph{Clusters, Overlay Graph and Communication Actions:}
To achieve the above goal for debates, we need to provide some way of communication between the candidates. For this, in each debate, we grow \emph{clusters} around each candidate, for example, by assigning each non-candidate node to the candidate closest to it. This clustering induces an overlay graph $H$ on the candidates by declaring two candidates to be \emph{adjacent} in the overlay graph iff their clusters are close (this will be made precise later). A pictorial example is shown in \Cref{fig:clusters}.
\IncludePictures{
\begin{figure}[t]
\centering
\includegraphics[width=0.7\textwidth]{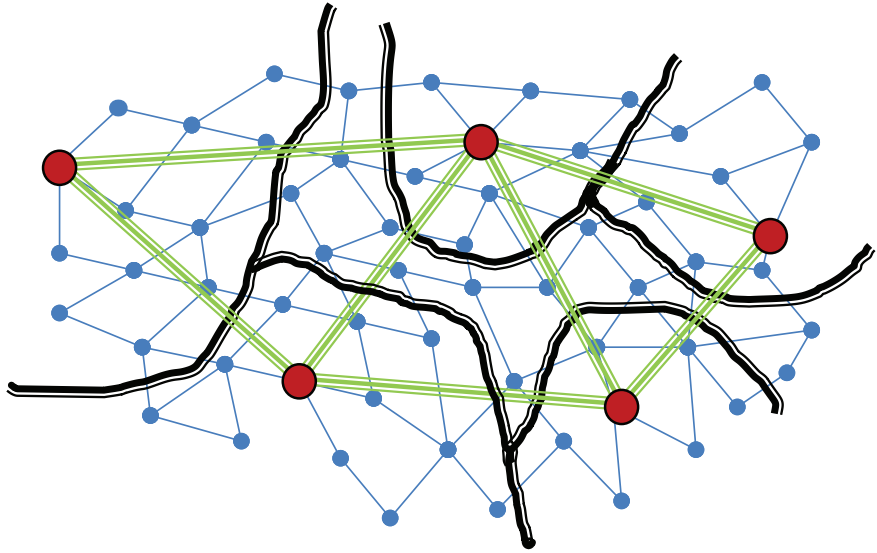}
{\caption{\small The red nodes show the candidates and the area close to them inside the black curves are their respective clusters. The green wide links show the edges of the overlay graph $H$ between the candidates.
}}
\label{fig:clusters}
\vspace{-8pt}
\end{figure}
}

This overlay graph $H$ also captures which candidates can communicate with each other using specially designed cluster communication actions. In particular, we design three \emph{communication actions}: an \emph{Uplink} protocol that allows a candidate to send a message to the nodes in its cluster, an \emph{Intercommunication} protocol that allows adjacent clusters to exchange information, and a \emph{Downlink} protocol that allows nodes in a cluster to send a message to their candidate. A diagram depicting these actions is presented in \Cref{fig:Communications}.

\IncludePictures{
\begin{figure}[t]
\centering
\includegraphics[width=0.8\textwidth]{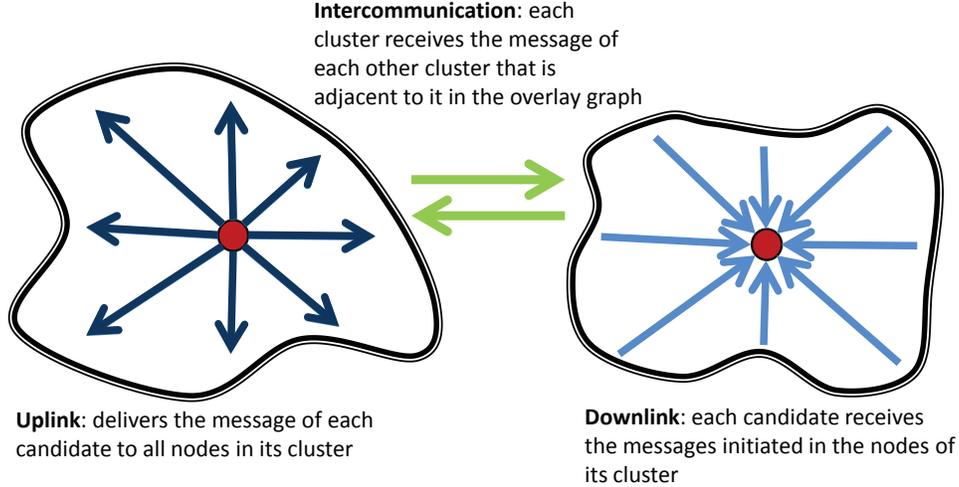}
{\caption{\small A diagram showing the (ideal) communication actions. Each black curve presents a cluster. For simplicity, the nodes inside the clusters are not shown. We note that these are only the ideal definitions of the actions and in the implementations, we will need to modify some of them to match what we can achieve in the given setting. For instance, later the downlink will be changed to receiving just one of the message initiated in the nodes of the cluster, instead of all of them. We then modify the higher level algorithms on the overlay as well to cover for the change.
}}
\label{fig:Communications}
\vspace{-8pt}
\end{figure}
}

We show that both creating clusters and communication actions inside each cluster (uplink and downlink) can be done in broadcast time $T_{BC}$ while intercommunication over borders, which is a local problem, can be solved in (poly-)logarithmically many rounds. With these building blocks simulating one communication round of the overlay network $H$ takes $O(T_{BC}+\poly \log)$ rounds. This is already (almost) our desired final running time. We thus want an algorithm that makes sufficient progress in each debate such that only $O(\log \log n)$ of debates are needed while each debate requires only constant rounds of communication in $H$. We achieve this using the following \emph{Elimination} algorithm, which we run (modulo changes that are explained later) on the overlay graph $H$.


\paragraph{The Elimination Algorithm:} 
The Elimination algorithm is a simple, deterministic algorithm which makes at least half of the candidates drop out while at least one candidate remains. This algorithm is run by candidates and as a $\mathcal{LOCAL}$ model\footnote{The $\mathcal{LOCAL}$ model\cite{Peleg:2000} is a standard message passing model of distributed computing (in wired networks) where in each round, each node can send one message to each of its neighbors. The elimination algorithm in fact is in the more restricted $\mathcal{CONGEST}$ model\cite{Peleg:2000} where each message size is bounded to $O(\log n)$, and with the additional restriction that the same message is sent to all the neighbors.} algorithm on the overlay graph $H$.

\begin{mdframed}[hidealllines=true,backgroundcolor=gray!35]
\vspace{2pt}
\textbf{Message Exchange 1:} 
Each candidate sends its ID to all its neighbors and then determines its own degree by counting the number of different IDs received.

\bigskip

\noindent \textbf{Message Exchange 2:} Each candidate sends its ID and its degree to all its neighbors .

\bigskip

\noindent \textbf{Elimination:} Each candidate that is \emph{dominated}---that is, it has a neighboring candidate with a larger degree or a neighbor with equal degree but larger ID---gets \emph{marked} for \emph{elimination} from the candidacy.
\vspace{2pt}
\end{mdframed}

\noindent Given these guarantees of the Elimination algorithm, we can infer that a total of $\log (20 \log n)=O(\log \log n)$ debates suffice to reduce the number of candidates from $20 \log n$ to one remaining leader. Since the statement and its proof are simple and instructive, we present them here.

\begin{lemma}\label{lem:slash}
The deterministic Elimination algorithm uses just two rounds of message exchange in the overlay $H$ (between candidates) and eliminates at least half of the non-isolated nodes, while keeping at least one. 
\end{lemma}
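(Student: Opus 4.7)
The first part of the statement, that only two rounds of communication are used, is immediate from the description: each candidate transmits exactly once in Message Exchange 1 and exactly once in Message Exchange 2 before any elimination takes place.

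For the guarantee that at least one candidate survives, I would argue as follows. Consider the globally maximum pair $(\deg(v), \text{ID}(v))$ among all candidates in the lexicographic order. Such a $v$ cannot be dominated by any neighbor, since any neighbor either has strictly smaller degree, or has equal degree but strictly smaller ID. Hence $v$ is not marked and remains a candidate after the elimination step. (If the overlay $H$ has only isolated candidates, they all survive trivially.)

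The nontrivial part is showing that at least half of the non-isolated candidates are eliminated. The first observation is that the surviving non-isolated candidates form an independent set in $H$: for every edge $\{u, w\}$, comparing the pairs $(\deg(u), \text{ID}(u))$ and $(\deg(w), \text{ID}(w))$ lexicographically, the smaller endpoint is dominated by the larger, hence marked. Let $S$ be the set of surviving non-isolated candidates and $E$ the set of eliminated (necessarily non-isolated) candidates. All edges incident to $S$ go into $E$, and for any such edge $\{s, e\}$ we have $\deg(e) \le \deg(s)$ by the non-domination of $s$. I would then run the following fractional counting argument. For each $s \in S$,
\[
\sum_{e \in N(s)} \frac{1}{\deg(e)} \;\ge\; \sum_{e \in N(s)} \frac{1}{\deg(s)} \;=\; 1,
\]
since $s$ is non-isolated and every neighbor has degree at most $\deg(s)$. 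Summing over $s \in S$ and swapping the order of summation,
\[
|S| \;\le\; \sum_{s \in S} \sum_{e \in N(s)} \frac{1}{\deg(e)} \;=\; \sum_{e \in E} \frac{|N(e) \cap S|}{\deg(e)} \;\le\; \sum_{e \in E} 1 \;=\; |E|.
\]
Hence $|E| \ge |S|$, so at least half of the non-isolated candidates $S \cup E$ are eliminated, which is what we wanted.

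The main obstacle in this argument is the ``at least half'' bound rather than the trivial survivor or round-count claims; the right way to make it go through is to notice that dominance only compares a node to its \emph{neighbors}, so one needs a local-to-global accounting (the fractional argument above) rather than a naive global maximum argument, which would only give one survivor. A matching-style/Hall-type argument would also work, but the weighting argument is cleaner since it directly produces the inequality $|S| \le |E|$ without needing to exhibit an explicit injection.
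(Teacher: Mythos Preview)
Your proof is correct and essentially identical to the paper's argument. The paper phrases the fractional counting as a charge-passing scheme (each surviving non-isolated node distributes one unit of charge equally among its neighbors, and each marked node is shown to receive at most one unit), while you write the same double sum directly; the key ingredients---that surviving non-isolated nodes form an independent set and that $\deg(e)\le\deg(s)$ for every edge $\{s,e\}$ with $s\in S$---are used in exactly the same way.
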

\begin{proof} Clearly, the node with maximum $(degree(.), ID(.))$ pair remains. To see that half of the non-isolated nodes are eliminated, we use a potential argument. We give a charge of one to each non-isolated unmarked node. Then we redistribute these charges by each non-isolated unmarked node splitting its charge evenly between its neighbors. Since only non-isolated unmarked nodes initially get charged, and as no two unmarked nodes are neighbors, all charges get redistributed to the marked nodes. Furthermore, each marked node $u$ gets a charge of at most one. This is because, each neighbor $v$ of $u$ gives a charge of $\frac{1}{d(v)}$ to $u$ and we have $\frac{1}{d(v)}\leq \frac{1}{d(u)}$. Here, $d(u)$ and $d(v)$ are the degrees of nodes $u$ and $v$ and we know that $d(v)\geq d(u)$ as otherwise $u$ would be marked. Since each marked node gets a charge of at most one, the total charge is at most as large as the number of marked nodes. Since the total charge was initially equal to the number of unmarked nodes, and since the total charge did not change in the redistribution step, we get that the number of unmarked nodes is at most as large as the number of marked nodes. Thus, the number of marked nodes is at least half the total number of nodes, which completes the proof. 
\end{proof}

\fullOnly{

\paragraph{Debate Template:}Each debate is an implementation of the elimination algorithm on top of the overlay graph $H$. Given the communication primitives uplink, intercommunication, and downlink that are available atop the overlay graph (see their description above), this implementation follows roughly from the outline presented in Algorithm \ref{alg:debateTemplate}. In the following sections, we describe how this debate template can be implemented in each model.

\begin{algorithm}[H]
	\caption{Template of a Debate}
	\label{alg:debateTemplate}
	\begin{algorithmic}[1]
				\algorithmsize
				\State Cluster \Comment{Overlay Design}
\smallskip
				\State Uplink candidate IDs \Comment{Exchange 1}
				\State Intercommunicate IDs
				\State Downlink IDs
\smallskip
				\State Candidates determine their degree in $H$ by counting the number of distinct received IDs
\smallskip
				\State Uplink pairs of $(degree, ID)$ from candidates \Comment{Exchange 2}
				\State Intercommunicate the pairs
				\State Downlink the pairs
\smallskip
				\State Each candidate remains iff its $(degree, ID)$ pair is greater than all pairs it receives \Comment{Elimination}
	
	\end{algorithmic}
\end{algorithm}

}

\subsection{Implementation of a Debate Without Collision Detection}\label{sec:overviewnoCD}
Here we present the main ideas for how to implement the aforementioned debate templates in the radio network model without collision detection. The goal is to run one debate in $O(T_{BC} + \log^3 n)$ rounds with high probability and thus obtain the leader election algorithm claimed in \Cref{lem:mainnoCD}. 
\medskip

\noindent\textbf{Clustering:}
There are two basic ways to use the Decay Broadcast strategies for building the clusters. One is to simply run a global broadcast in time $T_{BC}$ using the Fast-Decay and with the candidate IDs as messages. If every node simply keeps and forwards the first ID it receives, in the end, every node belongs to a cluster, and also the clusters are connected. However, the clusters obtained this way do not have nice shapes and do not allow for efficient communication between clusters. The exact issue is somewhat detailed but just to convey the intuition, \Cref{fig:BadClusters} shows a pictorial example. 
This problem can be avoided using the second way, which uses a slower variant of the Decay: we repeatedly use
long-phases of Decay, where in each long-phase, the clusters grow by one hop. Each time all unclustered nodes with a clustered
neighbor get included in the cluster with high probability. This leads to nicely shaped clusters in which each node joins
the closest candidate. But, the running time of this method is $\Theta(D\log^2 n)$ rounds, which we cannot afford. 
\IncludePictures{
\begin{figure}[t]
\centering
\includegraphics[width=0.7\textwidth]{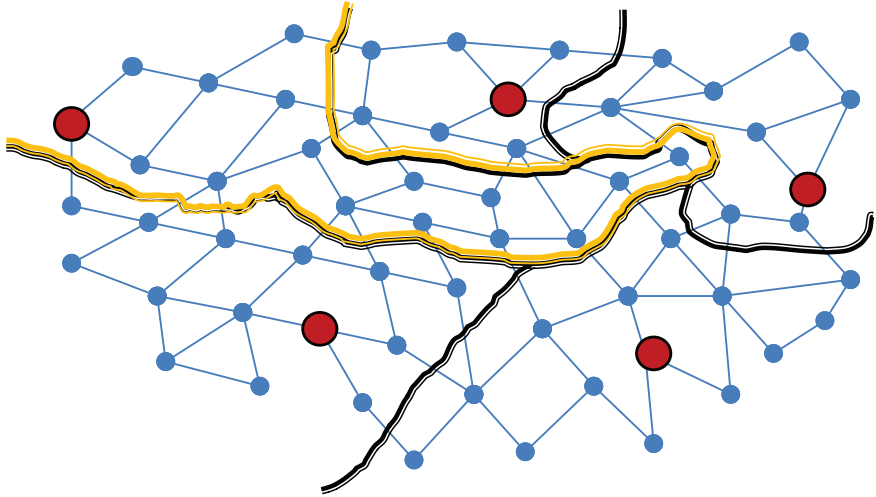}
{\caption{\small A configuration where all nodes are clustered but the clustering is not good for our purposes. The exact issue is somewhat detailed, but here is rough intuition about it: The cluster indicated with an orange curve has a long and thin spike, and all of the nodes on this spike have neighbors in different clusters. Because of this, the communications in the nodes of this spike interfere with the communications of the nodes of the other clusters, and we cannot guarantee fast communication actions inside this cluster.
}}
\label{fig:BadClusters}
\end{figure}
}

Our solution is to combine these two methods to get the best of both worlds. For this we start with a fast-growth 
phase in which we use the first method---namely Fast-Decay---to advance the clusters in iterations of $\Theta(\log^2 n)$ rounds. After each iteration
we ensure that clusters do not interfere with each other, by cutting them back (trimming them) if they do. After the fast-growth phase
which takes $O(T_{BC})$ rounds, clusters are at most $\Theta(\log n)$ far away from each other. Now we use the slower
second method---namely the basic version of the Decay---to grow the clusters carefully spending $\Theta(\log^2 n)$ rounds for each of the remaining $\Theta(\log n)$ steps. This gives
us a nice clustering for a total of $\Theta(T_{BC} + \log^3 n)$ rounds.

\medskip
\noindent\textbf{Overlay Communication:}
Due to the nice clustering, the overlay communication routines for intercommunication and uplink can be easily implemented in
$\Theta(\log^3 n)$ and $\Theta(T_{BC})$ rounds, respectively. However, implementing a downlink is more troublesome.
The subtle reason is that while there are at most $20 \log n$ distinct IDs of neighboring clusters that need to be collected in each candidate, there are copies of each of these IDs registered at up to $O(n)$ different cluster nodes. This prevents classical 
gathering protocols (e.g., \cite{BII93}) to work for this task. 

To remedy this, we use the Fast-Decay broadcast algorithm with time complexity $O(T_{BC})$ within each cluster to inform the candidate about \emph{just one} of its neighbors. After this, the candidate can use the uplink to give feedback to all cluster nodes that it has received this particular ID, again in broadcast time of $O(T_{BC})$ rounds. This guarantees that after this, only nodes with a new piece of information will participate. Repeating this $r$ times results in at least $r$ distinct IDs being learned by the candidate in $O(T_{BC} \;r)$ rounds. While this is an improvement over the naive gathering (which would take $\Theta(n)$ rounds), it still takes a prohibitively large $\Theta(T_{BC} \log n)$ rounds to learn about all neighboring clusters. Next, we show how to work around this issue by modifying the elimination algorithm.

\medskip
\noindent\textbf{The Modified Elimination Algorithm:}
Our modifications to the elimination algorithm are based on the following two ideas: First, we run the elimination algorithm not directly on the overlay graph $H$ but instead on a sparse subgraph $H'$ of $H$. Secondly, we modify the elimination algorithm such that a node needs to be aware of at most $6$ of its neighbors, instead of all $O(\log n)$ of them which was required in the original elimination algorithm. 
\IncludePictures{
\begin{figure}[t]
\centering
\includegraphics[width=0.6\textwidth]{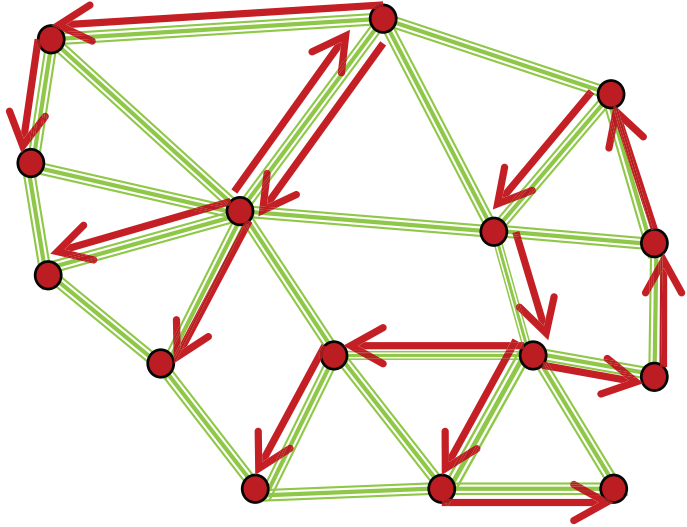}
{\caption{\small An example of the sparsified overlay graph $H'$. The edges of the original overlay graph $H$ are shown as green undirected links. The edges of the sparsified overlay graph are shown as red directed links. Note that even the undirected version of $H'$ can be disconnected. However, any candidate that is not isolated in $H$ will not be isolated in (the undirected version of) $H'$, either.
}}
\label{fig:Sparsified-Overlay}
\vspace{-8pt}
\end{figure}
}
To carve out the sparse sub-graph $H'$, each cluster selects one (incoming) edge from an arbitrary neighbor  and we define $H'$ to be the sub-graph 
consisting of the union of these edges. \Cref{fig:Sparsified-Overlay} shows an example. For the rest of the high level discussion here, the reader can imagine this graph to be undirected. The details of how the directions are treated can be found in \Cref{sec:overlaynoCD}. Note that, in the undirected version of $H'$, although the average degree of a node is at most $2$ in this graph, this does not
hold true for the maximum degree. Nonetheless, using the inward-communication scheme explained above with $r=5$, each candidate can learn about
all of its neighbors if it has at most $5$ of them, and at least detect that it has five or more neighbors otherwise. With this knowledge,
we run the same elimination algorithm as before except for the modification that, any node with degree of at least $5$ remains unmarked. The reason for this is because these nodes cannot safely determine whether their degree is dominated by a neighbor. Fortunately however there are at most a $\frac{2}{5}$-fraction of nodes with degree of at least five since more would lead to an average degree of more than $2$. The Modified Elimination Algorithm therefore still eliminates at least a $\frac{1}{2} - \frac{2}{5} = \frac{1}{10}$ fraction of the candidates, while remaining safe.


\subsection{Implementation of a Debate via Beeps}\label{sec:overviewCD}


In this section we describe the main ideas for implementing a debate in the beep model (or radio network model with collision detection). Our algorithm works along the lines of the debate template presented in \Cref{sec:template}: It first clusters the nodes and then uses overlay communication protocols to run the elimination algorithm.
 
We first introduce our main tools, \emph{beep waves} and \emph{superimposed codes}, and explain how to use them to cluster the graph and implement the overlay communication protocols mentioned in \Cref{sec:template}. We then put everything together and present a simple debate implementation that runs in $O(D + \log^3 n)$ rounds. Lastly, we show how to achieve the running time claimed in \Cref{lem:mainCD} by modifying the simple debate implementation to run in $O(D + \log n \log \log n)$ rounds.

\medskip
\noindent\textbf{Beep Waves: }
The main difference between radio models without collision detection and those with collision detection (or beeping) is the ability to create what we call \emph{beep waves}. Beep waves start at one or more nodes by sending a beep, and after this initiation, each node that hears a beep forwards it by beeping in the following round. This way, the beep propagates in the form of a wave throughout the network, moving exactly one step per round. \Cref{fig:Onewave} shows an example.

\IncludePictures{
\begin{figure}[t]
\centering
\includegraphics[width=0.5\textwidth]{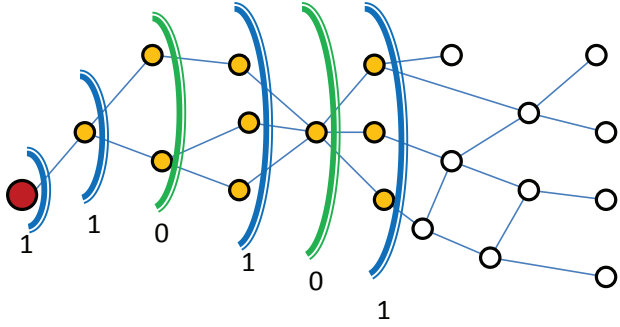}
{\caption{\small A snapshot of an example beep wave that carries the bit string 101011. The wave has started from the red node and is moving towards the right, with a speed of one hop per round.
}}
\label{fig:Onewave}
\vspace{-8pt}
\end{figure}
}

Beep waves have several applications. For one, they can be used to determine the distance of a node $u$ from a (beep) source, by measuring the number of rounds for the wave to reach $u$. Secondly, pipelining multiple beep waves from a source can be used to transmit bit strings, by coding $1$ into a beep and $0$ into absence of a beep. Pipelined beep waves will be our main tool in implementing the communication protocols used in our leader election algorithms of the beep model. 

\medskip
\noindent{\textbf{Superimposed Codes: }}
Another interesting feature of using beep waves to transmit information is that, when two different sources $s_1$ and $s_2$ simultaneously send different bit strings to one node $v$ with equal distance from $s_1$ and $s_2$, then $v$ receives the superimposition or bit-wise OR of the two strings. Typically, such a bit string is considered useless. Thus, protocols designed for radio networks so far have mostly focused on using collision detection and randomization to detect and avoid collisions. In this paper, we take the exact opposite stance: instead of avoiding collisions, we propose to embrace them and leverage their superimposition nature. The key element we use for this is (variants of) \emph{superimposed codes}. These codes consist of codewords that allow any superimposition of a bounded number of codewords to be decomposed and decoded. We note that the general class of superimposed codes are old concepts, dating back to the $40$'s\cite{mooers1948application}. As indicated in \cite{indyk1997SIcodes}, there are several variants of the superimposed codes and they have many different applications. A canonical application of these codes is to assign a signature to each document, by superimposing the codewords assigned to the terms in the document\cite{knuth1973vol, faloutsos1984signature}. We also note that superimposed codes have been used in multi-hop radio networks before, e.g., \cite{CMS01}, where they are viewed as \emph{selective families} and used to schedule successful transmissions by avoiding collisions. Our application is quite different as instead of trying to avoid collisions, we intentionally use collisions for conveying information and use superimposed codes to extract information from these collisions\footnote{In this aspect it is similar to \cite{BCC12}. However there an additive interference of signals in a finite field is assumed which makes collisions completely reversible which allows the efficient use of network coding methods \cite{NC2011}. Exploiting such negative interference is however not possible using only collision detection. This makes the approach here is much more applicable and relevant.}


\IncludePictures{
\begin{figure}[t]
\centering
\includegraphics[width=\textwidth]{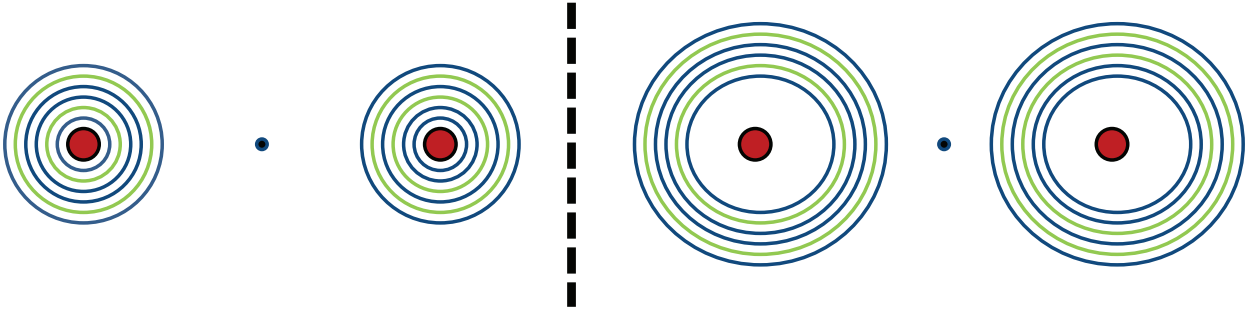}
{\caption{\small A pictorial example of two nodes sending bit strings encoded in beep waves. Both sides of the figure show the snapshots of the same beep waves; the right side is a later snapshot after the waves have spread to larger radii. Sending a bit 1, which is transmitting a beep, is indicated with blue circles, and sending a bit 0, which is listening (not beeping), is indicated with green circles. That is, for example, in the snapshot, all the nodes on a blue circle are beeping, which means they are transmitting a 1. The left candidate is sending a beep wave which carries the bit pattern 101101, while the right candidate is sending 101011. The right figure shows the snapshot of the waves at a later time. The node in the middle, which is at the exact same distance from the two candidates, will receive the superimposition of the waves, i.e., the bitwise or of the two bit strings, which is equal to 101111. If a superimposed code is used and the aforementioned bit patterns are the results, the middle node would be able to decode the received bit pattern 101111 into two separate initial bit patterns 101101 and 101011.}}
\label{fig:beepwaves}
\end{figure}
}

\medskip
\noindent\textbf{Clustering and Overlay Communication: }
To \emph{cluster} nodes around candidates, we assign each node to the closest candidate if there is a unique such candidate, and leave all other nodes unclustered. To this end, we first use beep waves from each candidate to determine for each node its distance to the closest candidate. Using these distance numberings, we then send out the IDs of the candidate nodes as pipelined beep waves. To prevent nodes confusing superimposition of IDs with clean IDs, we use a superimposed code that allows to distinguish whether a received bit string is a coded ID or the superimposition of multiple coded IDs. This allows nodes with more than one closest candidate to stay unclustered while nodes with a unique closest candidate join the cluster of that candidate. This clustering also induces the overlay graph $H$.
\smallskip 

Next, we implement our uplink, intercommunication, and downlink \emph{communication protocols} that allow communication atop the overlay graph $H$.  We implement these by using the distance numbering to synchronize pipelined beep waves for communication. This is relatively straightforward but leads to both downlink and intercommunication not actually delivering all messages from all clustered nodes or neighboring clusters respectively, but instead delivering the superimposition of all these messages. This is where superimposition codes show their full power. Instead of sending messages directly in the intercommunication phase, we use messages coded with a superimposition code. When a candidate then receives the messages from its adjacent candidates, in superimposition after they got combined in the intercommunication and downlink phase, it can still fully reconstruct all original messages. This way superimposition codes allow us to implement a full local message exchanges in $H$ even though the actual intercommunication and downlink protocols merely deliver superimpositions. 

\medskip
\noindent\textbf{Implementing the Elimination Algorithm Optimally:} 
Given the full message exchange overlay communication, one can directly implement the debate template of \Cref{sec:template}. The running time of such a debate would be $O(D + \log^3 n)$. Here the $\log^3 n$ comes from the length of superimposed codes that allow to decode messages from the superimposition of up to $\log n$ codewords which we use to implement the full message exchange over $H$. In the remainder of this section, we show how we can improve over this to achieve a debate with round complexity of $O(D + \log n \log \log n)$. 

\smallskip

The main observation that leads to the speedup is as follows: The $O(D + \log^3 n)$ debate implementation does not actually use the full power of message exchange that is given by the use of the $O(\log^3 n)$ bits superimposed codes. Instead, we eventually use this communication only for two tasks: (1) determining the number of different messages received; used for determining the degree on the overlay graph and the cluster boundaries, and (2) checking whether there is a neighbor with a larger ($degree$, $ID$) string; used to decide whether a node marks itself in the elimination algorithm. We show that both tasks can be achieved with a smaller overhead.
\smallskip

For (1), we design a new set of codes that are just strong enough to enable us to estimate the number of codewords in a superimposition up to a multiplicative $(1+\delta)$-factor, for any constant $\delta>0$. For instance, this can be used in estimating the degree of each candidate up to a factor of $1+\delta$. These new codings encode each ID of length $\log n$ into a codeword that is only $\Theta(\log n \log \log n)$ bits large. We then show that eliminating candidates with the Elimination algorithm based on these approximate degrees still works, that is, still removes a constant factor of candidates per debate while keeping at least one. This can be easily checked by following the same potential argument as in the proof of \Cref{lem:slash}, noting that now, each remaining node gets a charge of at most $4$ (assuming e.g. $\delta=1$). With this, the number of remaining candidates is at most $4$ times larger than the number of marked candidates. In other words, still at least a $\frac{1}{5}$ fraction of the candidates gets removed.
\smallskip

For usage (2), that is detecting whether a neighboring candidate has a message numerically larger or not, we use a slightly different intercommunication algorithm, which we explain next: Nodes go through the bits of messages of their candidates one by one and compare them. They \emph{mark} themselves if they detect a larger message in the neighborhood. A node that gets marked does not continue the process anymore. It is easy to see that a node gets marked in this procedure if there is an adjacent cluster with a larger message. Finally, to deliver this information to the candidates, we simply use a Downlink with single-bit messages (marked or not) and each candidate gets to know whether any of nodes in its cluster is marked or not.

\subsection{Obtaining Linear Time Leader Election Algorithms}\label{sec:overviewlinear}
In this section, we explain a simple optimization which reduces the multiplicative $\log \log n$ factor in our bounds for networks with (near) linear diameter. In particular, this optimization makes all our running times $O(n)$. 

While we do not know how to reduce the number of debates below $\Theta(\log \log n)$, we show that less time can be spent on initial debates. That is, if there are many candidates, we can work with clusters that have a smaller diameter. In particular, we note that in both of the leader election algorithms presented above, we have the following property: In each debate, the time used for growing clusters is large enough such that the radius of each cluster can potentially grow up to $D$. This way, the algorithms avoid isolated clusters. Note that isolated clusters do not get eliminated in the Elimination Algorithm (see \Cref{lem:slash}). Furthermore, the time needed for a debate depends directly on this \emph{radius of growth}. In particular, in the model without collision detection, growing clusters up to radius of $d$ takes $O(d \log \frac{n}{d} + \log^3 n)$ and in the beeping model, it takes $O(d + \log n\cdot \log\log n)$.


The key observation is as follows: if the number of remaining candidates is $k$ then at most half of the candidates can be such that they have no other candidate within distance $2n/k$. Because of this, the idea is that in the $i^{th}$ debate, instead of building clusters for radius up to $D$, we grow the clusters for radius only up to $\min\{D, \frac{4n\; 1.05^i}{\log n}\}$.
This still ensures that at least half of the candidates are non-isolated, which allows the Elimination Algorithm to remove at least a constant fraction of all (non-isolated) nodes. It is an easy calculation to see that this change in radius of growth reduces the $\log \log n$ factor in our time bounds to a $\min\{\log \log n,\log  \frac{n}{D}\}$ factor, as claimed in \Cref{lem:mainnoCD} and \Cref{lem:mainCD}. The details for this are given in \Cref{sec:linearnoCD} and \Cref{subsec:linearLEWCD}.


\fullOnly{
\section{Leader Election without Collision Detection}
This section is devoted to providing the technical details and proofs for \Cref{lem:mainnoCD}. As described in \Cref{sec:overview}, this algorithm follows the template given in \Cref{sec:template} and uses the ideas explained in \Cref{sec:overviewnoCD} to implement this template. In particular, we present the clustering algorithm in \Cref{sec:clusteringnoCD}, show how to obtain the sparsified overlay graph $H'$ and perform overlay communication protocols on top of it, in \Cref{sec:overlaynoCD}, and then, we explain in \Cref{sec:debatenoCD} how to implement a debate in $O(T_{BC} + \log^3 n)$ time. Finally, in \Cref{sec:linearnoCD}, we explain how to add the simple optimization trick described in \Cref{sec:overviewlinear} to complete the proof of \Cref{lem:mainnoCD}.


\subsection{Clustering}\label{sec:clusteringnoCD}

In the clustering phase, we partition the network into disjoint \emph{clusters}, one around each candidate, such that these clusters provide a platform for easy communications between the candidate nodes. 

Formally, a \emph{clustering} is a partial assignment which for each node $v$, it either \emph{clusters} $v$ by assigning it to (the \emph{cluster} of) a candidate, or it leaves $v$ \emph{unclustered}. Given a clustering, we say a clustered node $v$ is a \emph{boundary} node if $v$ has a neighbor $u$ that is unclustered or belongs to a different cluster. Otherwise, we say the clustered node $v$ is an \emph{internal} node. However, as a small exception, candidates themselves are considered as internal always. We maintain the invariant that each clustered node is connected to the candidate it is assigned to via a path of internal nodes of the same cluster. See the first part of \Cref{fig:cluster-refinement} for an example of a single cluster, with its candidate, internal and boundary nodes.

For a given distance parameter $d$, we say that two clusters $\mathcal{C}_1$ and $\mathcal{C}_2$ (or their respective candidates) are \emph{$d$-adjacent} if there are internal nodes $v_1 \in \mathcal{C}_1$ and $v_2 \in \mathcal{C}_2$ such that $v_1$ and $v_2$ are within distance at most $d$ of each other. This notion of $d$-adjacency defines the \emph{candidate} graph $H_d$, where two candidates are adjacent in the overlahy $H_d$ if they are $d$-adjacent. We say a clustering has \emph{connectivity gap} at most $d$ if the graph $H_d$ has no isolated node (or exactly one isolated node if that is the only node of the graph $H_d$). That is , if each cluster is $d$-adjacent to at least one other cluster (assuming that there are more than one clusters).

We would like to obtain a clustering with a small connectivity gap, ideally just a constant gap. This is because, the intercommunication action---which carries the message of different clusters over this gap and delivers it to other adjacent clusters---has a round complexity that is monotonically increasing with (and almost directly proportionally to) the connectivity gap. We next explain how to obtain a clustering with constant connectivity gap. Later, we use this clustering for building the desired \emph{sparsified overlay} graph $H'$ and for communications over $H'$. 

%

To achieve a clustering with constant connectivity gap, we start from a trivial clustering where each candidate is one cluster. This trivial clustering has a connectivity gap of at most $D$. We then reduce the gap by using an algorithm which we call Fast-Cluster. This algorithm uses $O(T_{BC})$ rounds and produces clusters with connectivity gap of $O(\frac{\log^2}{\log {\frac{n}{D}}})$, with high probability. After that, we use a simpler algorithm that we call \emph{Cluster-Refinement} which essentially performs $\Theta(\frac{\log^2 n}{\log {\frac{n}{D}}})$ long-phases of Decay to reduce the connectivity gap to a constant. In what follows we first give the Fast-Cluster algorithm and its analysis and then provide and analyzed the Cluster-Refinement algorithm.

\newparagraph{The Fast-Cluster Algorithm:} This algorithm uses $\Theta(T_{BC}) = \Theta(D\log \frac{n}{D}+\log^2 n)$ rounds, which are divided into \emph{epochs} of $\Theta(\log^2 n)$ rounds each. At the beginning and the end of each epoch, we will (w.h.p.) have valid clusterings in which: (1) each clustered node knows the ID of the candidate it is assigned to, (2) each node is either ``internal'', ``boundary'' or ``unclustered'', (3) each clustered node is connected to the candidate of its cluster via a path of internal nodes of the same cluster.
 
During each epoch, instead of the status ``boundary'', we have a different status called \emph{``undecided''}, which indicates that a node is temporarily clustered. At the end, each undecided node might become ``internal'', ``boundary'' or ``unclustered''. At the beginning of each epoch, we first change the status of every ``boundary'' node to ``undecided''. Then, only the candidates, the unclustered nodes and the ``undecided'' nodes participate in the transmissions of the epoch. The candidates always remain internal, regardless of what happens.

\smallskip

Each epoch consists of four steps. In a nutshell, the first step is for quickly growing the clusters using the Fast-Decay protocol whereas the other three steps are for trimming the resulted clusters, in order to refine the shape of the newly grown parts of the clusters. The key goal in this trimming is to resume the property that each clustered node is connected to the candidate of its cluster via a path of internal nodes of the same cluster. A pictorial example is shown in \Cref{fig:cluster-refinement}. In the first step, we grow the clusters, thus creating more ``undecided'' nodes. Each undecided node is (temporarily) in one cluster. Then, in the second and third steps, we \emph{mark} some ``undecided'' nodes using a particular rule and we use these marks in the fourth step to determine the new statuses. In particular, this marking is used for trimming the clusters and the unmarked undecided nodes will become internal at the end of the epoch. The details of these steps are as follows: 
\medskip

\IncludePictures{
\begin{figure}[h!]
\centering
\begin{minipage}{0.9\textwidth}
\centering
\includegraphics[width=0.7\textwidth]{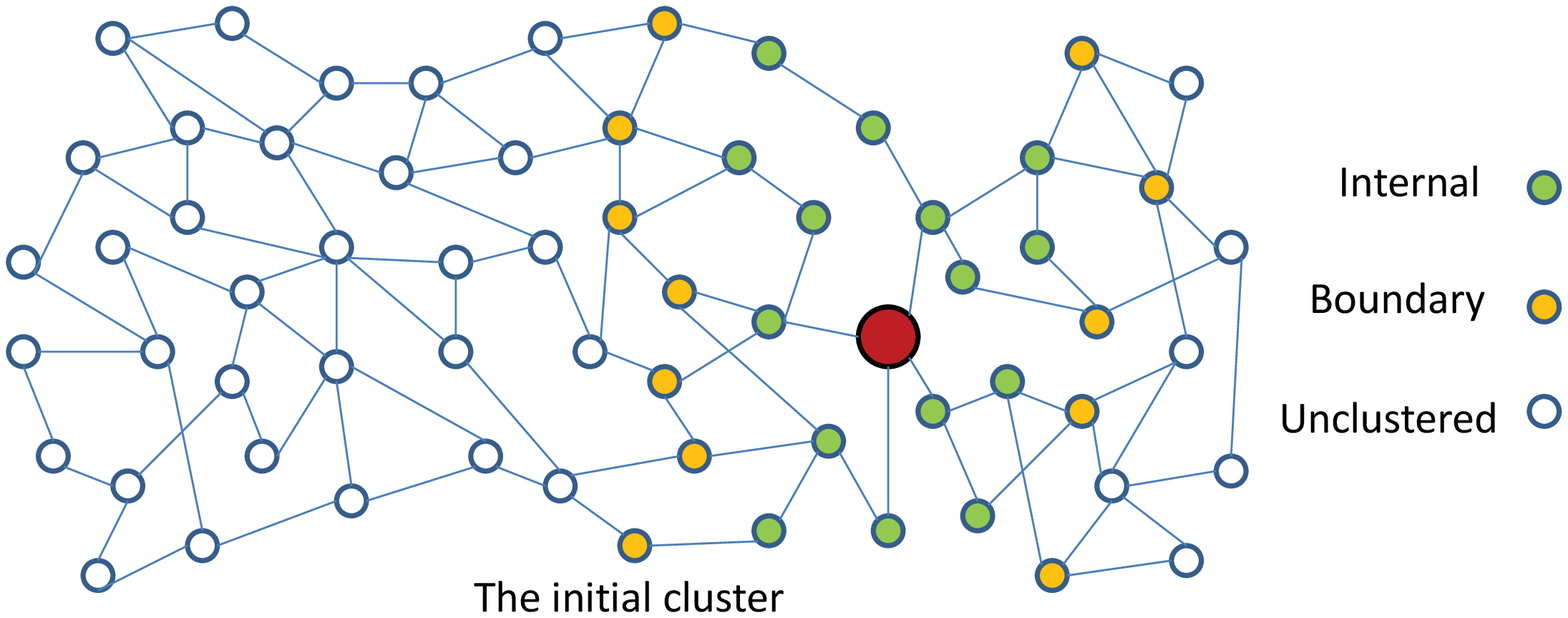}
\end{minipage}\\[0.7cm]
\begin{minipage}{0.9\textwidth}
\centering
\includegraphics[width=0.7\textwidth]{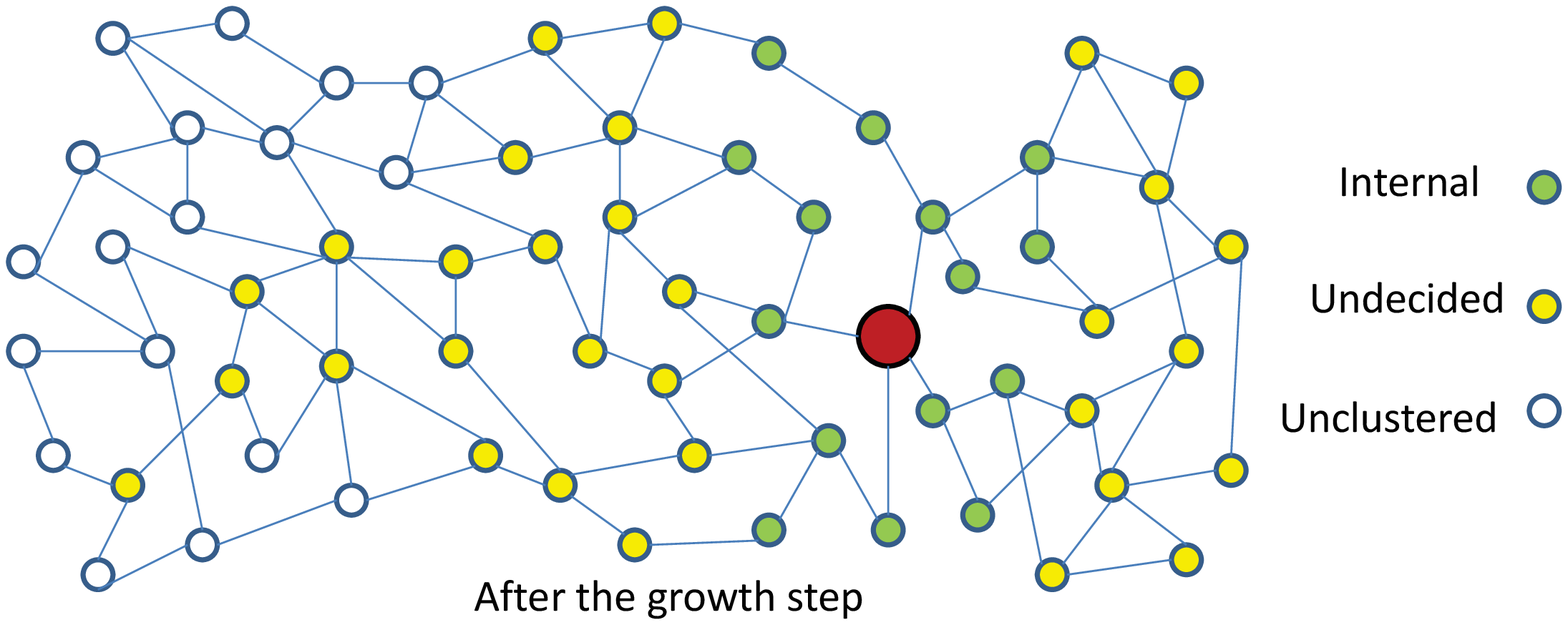}
\end{minipage}\\[0.7cm]
\begin{minipage}{0.9\textwidth}
\centering
\includegraphics[width=0.7\textwidth]{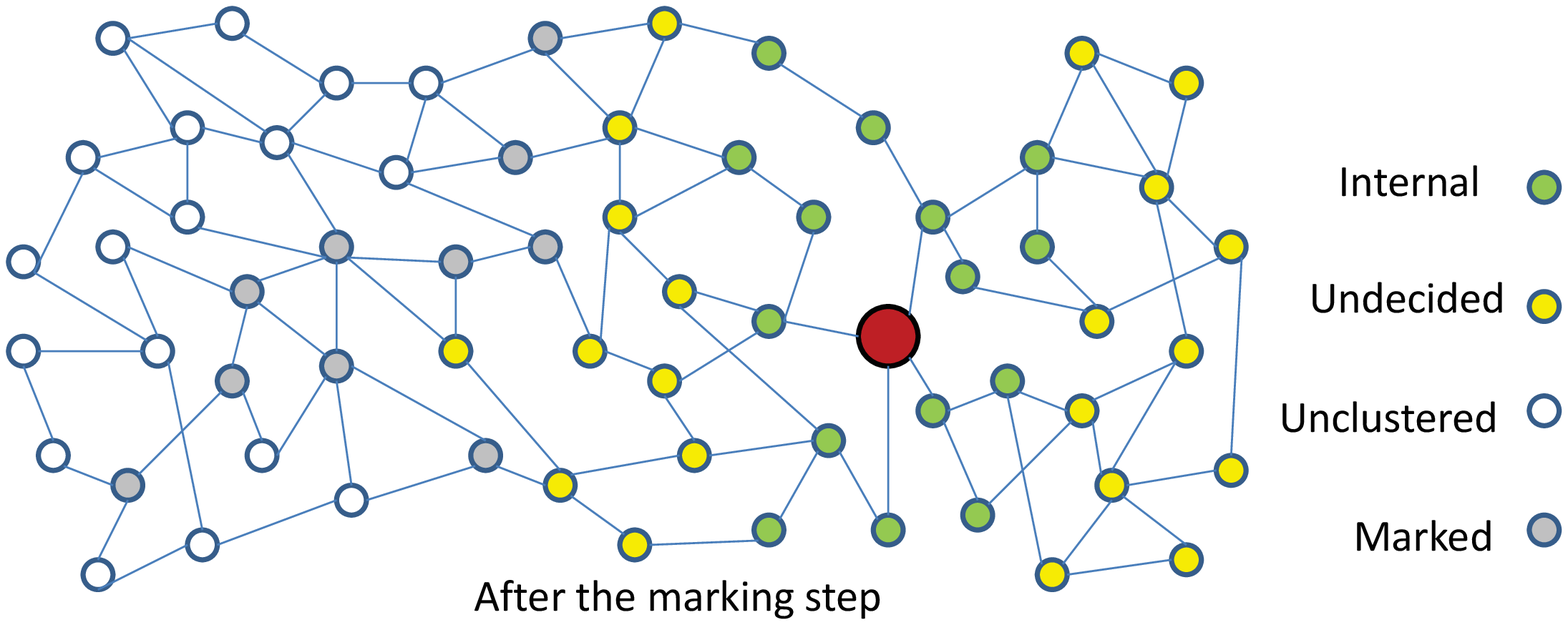}
\end{minipage}\\[0.7cm]
\begin{minipage}{0.9\textwidth}
\centering
\includegraphics[width=0.7\textwidth]{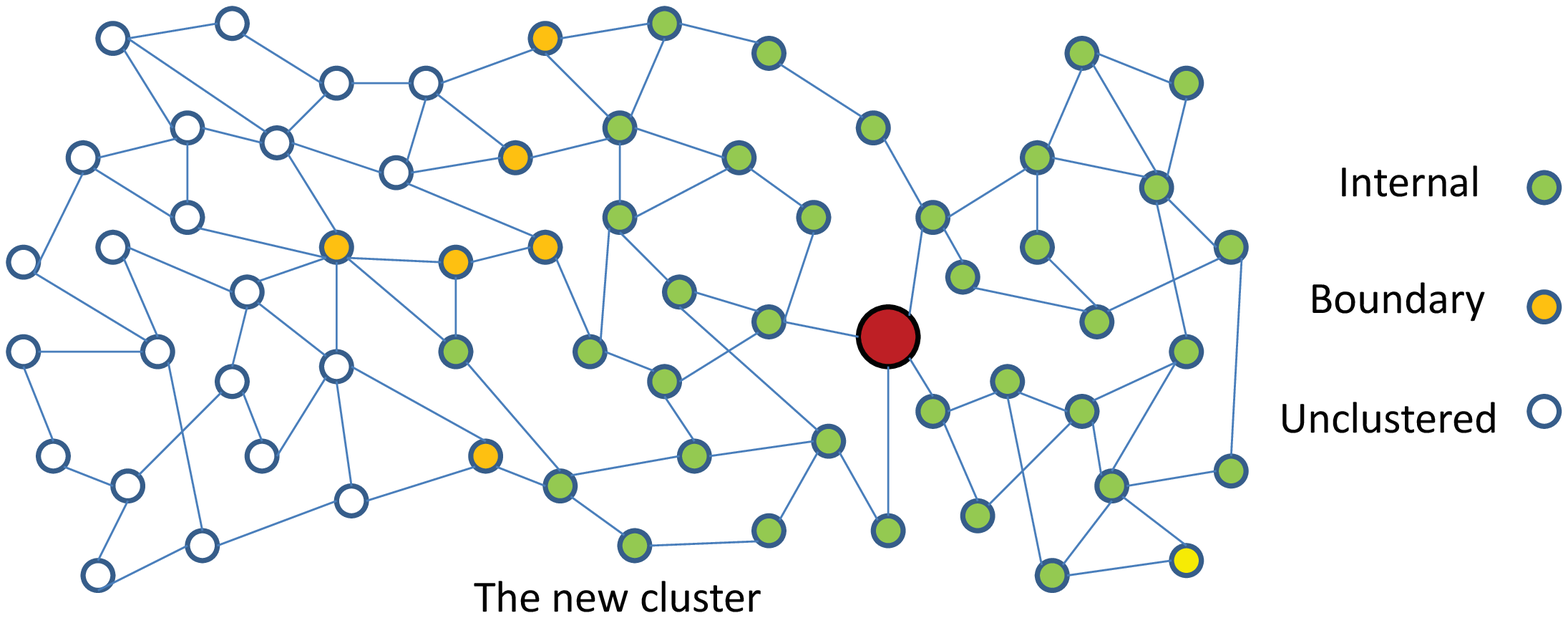}
\end{minipage}

{\caption{\small The growth of a cluster through different steps of the algorithm. For simplicity, the example is depicted only for one cluster. The first graph shows the initial configuration of the cluster. The second graph shows the configuration after Step 1, where a number of undecided nodes are added to the cluster, using some rounds of the Fast-Decay broadcast algorithm. The third graph shows the configuration afters Steps 2 and 3, where some nodes are marked. In particular, the long left-going spike is marked. The fourth graph shows the new cluster, at the end of step 4.}}
\label{fig:cluster-refinement}
\end{figure}
}

\begin{adjustwidth}{1.6em}{0em}
\begin{enumerate}
\item[Step 1:] This step consists of $4 \alpha \log^2 n$ rounds\footnote{Recall that $\alpha$ is the large enough constant used in the definition of the Fast-Decay protocol. See \Cref{subsub:fastDecay}.}. Let $\delta = \log \frac{n}{D}$.  In these rounds, we grow the clusters using the Fast-Decay($\delta$) protocol. For this, each candidate and each ``undecided'' node 
starts with the message of the related candidate. With these messages, we run $4\alpha \log^2 {n}$ rounds of the Fast-Decay($\delta$) protocol. In these rounds, some unclustered nodes receive the message of one or possibly more candidates. Each node ignores all the received messages but the first one. Then, each unclustered node that received some message temporarily joins the cluster of the candidate whose ID is mentioned in that message and changes its status to ``undecided''. 
 
\item[Step 2:] Here, we \emph{mark} any ``undecided'' node that is adjacent to either an unclustered node or a node from a different cluster. This step uses time equivalent to $\Theta(\log n)$ phases of the basic Decay, i.e., $\Theta(\log^2 n)$ rounds.
%
In these rounds, each unclustered node simply runs $\Theta(1)$ long-phases of the basic Decay protocol, sending a message declaring that it is unclustered. An undecided node that receives a message from an unclustered node becomes marked. On the other hand, from the viewpoint of the clustered nodes, these rounds are divided into $\Theta(\log n)$ \emph{parts}, each part consisting of one phase of the basic Decay protocol. In each part, a subset of clustered nodes are active. More precisely, in each part, nodes of each cluster \emph{unanimously} decide to be active or be listening, each with probability $\frac{1}{2}$. Note that this unanimous cluster decision can be achieved by each candidate sharing $\Theta(\log n)$ bits of randomness with nodes of its cluster by attaching these randomness bits to its ID (so the size of the messages remain asymptotically the same). In each part, all active nodes then perform one phase of the basic Decay algorithm, sending their cluster ID.  
All clustered nodes that receive a message different than their own cluster ID become marked. 
%

\item[Step 3:] In this step we mark the ``undecided'' nodes that were (indirectly) recruited in Step 1 by nodes which got marked in Step 2. To make this more precise, we say a node $v$ was \emph{directly recruited} by node $u$ if in Step $1$, the first message that node $v$ received (which is also the message that resulted in the status of $v$ to become ``undecided'') was received directly from node $u$. Similarly, we say node $v$ was \emph{indirectly recruited} by node $u$ in Step $1$ if the first message that node $v$ received came from a node that was directly or indirectly recruited from $u$. To achieve the marking of nodes that got (indirectly) recruited by marked nodes we repeat the exact same transmissions of Step 1---thus in $\Theta(\log^2 n)$ rounds---but now, each node also adds a bit indicating whether it is marked or not to its message. Any node that receives a set bit from the node that recruited it in Step $1$ becomes marked. 

\item[Step 4:] In this step we determine the final statuses. For this, all non-marked ``undecided'' nodes become ``internal''. Then, we run one long-phase of the basic Decay protocol where the internal nodes transmit and each non-internal node that receives a message becomes ``boundary''. Lastly, we set the status of any remaining ``undecided'' node to be unclustered.
\end{enumerate}
\end{adjustwidth}
Now we analyze this algorithm and show that it achieves a connectivity gap of $O(\frac{\log^2 n}{\log {n}/{D}})$, with high probability. 

As a warm up, we first study the simpler hypothetical scenario where there is only one candidate. We particularly investigate the growth of one cluster in the absence of the others.
\begin{proposition}\label{lem:growth} Consider a starting clustering where there is only one cluster $\mathcal{C}_u$. Then, in each epoch, $\mathcal{C}_u$ grows by at most $\frac{4\alpha\log^2 n}{\log{n}/{D}}$ hops. Furthermore, in each epoch, the growth is at least as much as the growth of the Fast-Decay($\delta$) broadcast that is run (alone) for $\alpha \log^2 n$ rounds. Hence, running it for $\Theta(T_{BC})$ rounds corresponds to a full run of the Fast-Decay($\delta$) broadcast algorithm and means that the cluster grows by $D$ hops, with high probability. 
\end{proposition}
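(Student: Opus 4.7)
I will treat the two bounds separately and then combine them to obtain the $D$-hop reachability.

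For the upper bound on per-epoch growth, the first observation is that, of the four steps, only Step~1 can add nodes to the cluster: Steps~2 and 3 merely mark undecided nodes, and Step~4 only reclassifies statuses or sets some undecided nodes back to unclustered. Hence it suffices to bound the reach of Step~1's Fast-Decay$(\delta)$ run, which lasts $4\alpha \log^2 n$ rounds with $\delta = \log \frac{n}{D}$ and is initiated at the former-boundary set together with the candidate. By \Cref{lem:fastdecay}, any node at distance $d$ from this initial set that becomes undecided during Step~1 must satisfy $(d-1)\delta \leq 4\alpha \log^2 n$, and hence $d \leq 1 + \frac{4\alpha \log^2 n}{\log(n/D)}$, giving the claimed per-epoch upper bound.

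For the lower bound I will exploit the crucial simplification that with only one cluster the marking rule in Step~2 can fire only because of \emph{unclustered} neighbors; no other-cluster marks occur. The plan is to show that any node $v$ which a virtual Fast-Decay$(\delta)$ started from the current boundary would recruit in just $\alpha \log^2 n$ rounds ends up in the new cluster at the end of the epoch. I would split into two cases. In case (i), all of $v$'s neighbors have already been recruited by Step~1's remaining $3\alpha \log^2 n$ rounds, so $v$ has no unclustered neighbor, is not marked in Steps~2--3, and is promoted to internal in Step~4. In case (ii), $v$ may be marked, but its neighbor $u$ in the previous former boundary is, by the same safe-depth reasoning applied one hop deeper, itself unmarked, so $u$ is promoted to internal in Step~4; then $v$ hears from $u$ during the long-phase of basic Decay and becomes a new boundary node with high probability by \Cref{lem:long-phase}. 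Either way $v$ remains in the cluster.

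Combining the per-epoch bounds finishes the argument: $\Theta(T_{BC})$ total rounds amount to $\Theta(T_{BC}/\log^2 n)$ epochs, and if each epoch provides growth at least as large as $\alpha \log^2 n$ rounds of virtual Fast-Decay$(\delta)$ started from the current boundary, these increments chain together to match the growth of a single Fast-Decay$(\delta)$ execution of total length $\Theta(T_{BC})$. By \Cref{lem:fastdecay}, such an execution reaches every node within distance $D$ with high probability, so the cluster grows by $D$ hops as claimed. The main obstacle I anticipate is making this chaining rigorous: I will need a coupling argument showing that restarting Fast-Decay from the trimmed boundary at the start of each epoch does not waste more progress than the factor-of-four Fast-Decay overrun already built into Step~1 pays for, so that the total effective Fast-Decay time accumulated across epochs remains $\Theta(T_{BC})$.
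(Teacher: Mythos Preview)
Your upper-bound paragraph is fine and matches the paper. The lower bound, however, has a real gap in case~(ii): you assume $v$ has a neighbor $u$ in the previous boundary, but a node recruited within the first $\alpha\log^2 n$ rounds of Fast-Decay$(\delta)$ can sit up to $\alpha\log^2 n/\log(n/D)$ hops away from the old boundary, so no such $u$ need exist and your ``safe-depth reasoning applied one hop deeper'' does not produce an internal neighbor of $v$. The Step~4 rescue via \Cref{lem:long-phase} therefore cannot be invoked.

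The paper sidesteps the case split entirely, and the point you are missing is that case~(ii) is vacuous. Because $v$ is recruited in the first $\alpha\log^2 n$ rounds, \Cref{lem:fastdecay-slowstep} applied during the remaining $3\alpha\log^2 n$ rounds guarantees that every node within two hops of $v$ is also recruited by the end of Step~1; hence $v$ has no unclustered neighbor (and of course no other-cluster neighbor), is unmarked in Step~2, and becomes internal directly. To finish you also need that $v$ is unmarked in Step~3, i.e., that every ancestor of $v$ on its recruitment chain is unmarked in Step~2; this follows by applying the identical argument to each ancestor, which was recruited even earlier. With this observation your case~(i) already covers everything and no boundary-status fallback is needed. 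The chaining concern you flag at the end is legitimate, but the paper handles it at the same informal level you do.
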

\begin{proof}
Consider an epoch. First note that only step 1---which has $4\alpha\log^2 n$ rounds---can lead to growth in the clusters and because this step uses the Fast-Decay($\delta$) protocol which has a delay parameter $\delta=\log n/D$, in each epoch each cluster can grow at most $\frac{4\alpha\log^2 n}{\log n/D}$ hops. 

Now note that if node $v$ was labeled ``undecided'' during the first $\alpha \log^2 n$ rounds of the Fast-Decay($\delta$) protocol in Step 1, then $v$ becomes ``internal''. This is because of the following: From \Cref{lem:fastdecay-slowstep}, it follows that since we have assumed that there is no other cluster, by the end of the $4\alpha \log^2 n$ rounds of the Fast-Decay($\delta$) protocol used in step 1, any node that is within distance $2$ of $v$ will have received a message, and thus become clustered (at least temporarily as ``undecided"). After that, only nodes that are at distance exactly one or two from $v$ will get marked in Step 2, while $v$ does not get marked. Thus, $v$ remains unmarked and becomes internal. Therefore, in every epoch, the growth of the cluster $\mathcal{C}_u$ is at least as much as the growth of the Fast-Decay($\delta$) broadcast that is run (alone) for $\alpha \log^2 n$ rounds. It follows that running the Fast-Cluster algorithm for $\Theta(T_{BC})$ rounds corresponds to a full run of the Fast-Decay($\delta$) broadcast algorithm. \Cref{lem:fastdecay} then shows that the cluster grows by $D$ hops, with high probability. 
\end{proof}
Now, to understand the effects that different clusters can have on each other, we study the markings done in step 2. Note that a marking might be due to a neighboring clustered node from a different cluster or due to a neighboring unclustered node.
\begin{proposition}\label{prop:marks} If at the start of step 2 of an epoch, an undecided node $w$ is adjacent to an unclustered node or a clustered node of a different cluster, then, with high probability, $w$ gets marked.
\end{proposition}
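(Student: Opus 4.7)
The plan is to slice the $\Theta(\log^2 n)$ rounds of step 2 into $\Theta(\log n)$ \emph{aligned phases} of $\log n$ rounds each, show that in every such aligned phase $w$ gets marked with at least a fixed constant probability, and then boost by independence across phases and union-bound over nodes. First I would observe that the $\Theta(1)$ long-phases of basic Decay that the unclustered nodes run decompose into exactly the same $\Theta(\log n)$ basic-Decay phases as the $\Theta(\log n)$ ``parts'' used by the clustered transmitters, so in each aligned phase both the unclustered transmitters and the nodes of every currently-active cluster transmit with the same round-by-round schedule ($2^{-i}$ in the $i$-th round). Consequently, from $w$'s viewpoint the union of the two transmitter sets behaves as a single basic-Decay phase, so the per-phase Decay guarantee of~\cite{BGI1} (a single Decay phase delivers at least one message with probability at least $1/10$ to any listening node that has at least one transmitting neighbor) can be applied directly.

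A small structural preliminary that I would state before the cases is that any neighbor $v$ of $w$ lying in a different cluster $\mathcal{C}_2$ must in fact be either the candidate of $\mathcal{C}_2$ or currently an undecided node, and hence is a transmitter in step 2. Indeed, $w$ is undecided in $\mathcal{C}_1$ at the start of step 2, so it was either a boundary node of $\mathcal{C}_1$ at the start of the epoch or was unclustered and got recruited to $\mathcal{C}_1$ in step 1; in either case $w \notin \mathcal{C}_2$ at the start of the epoch, so an internal non-candidate $v \in \mathcal{C}_2$ adjacent to $w$ would contradict $v$'s internal status.

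The case analysis is then quick. In Case 1 ($w$ adjacent to an unclustered node $u$) the unanimous per-cluster coin puts $\mathcal{C}_1$ into listen mode with probability exactly $1/2$ per aligned phase; when this happens every $\mathcal{C}_1$-transmitter, including $w$ itself, is silent while $u$ is transmitting inside its long-phase, so $w$ is listening and has a transmitting neighbor. By the per-phase guarantee $w$ receives some message with probability at least $1/10$, and since no $\mathcal{C}_1$-transmitter is active the received message must either declare ``unclustered'' or carry a cluster ID different from $\mathcal{C}_1$'s, marking $w$. Hence $w$ is marked in a single phase with probability at least $1/20$. Case 2 ($w$ adjacent to $v \in \mathcal{C}_2 \ne \mathcal{C}_1$) is essentially the same, except that we additionally condition on $\mathcal{C}_2$ being in active mode; by independence of the per-cluster coins the joint event has probability $1/4$, and $v$ is a transmitter by the preliminary, so the same argument gives a marking probability of at least $1/40$ per phase.

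Finally I would boost and union-bound. Because the per-cluster active/listen coins across the $\Theta(\log n)$ phases are mutually independent, and the Decay randomness within each phase is independent of those coins, the conditional probability that $w$ fails to be marked in a given phase is at most $1 - 1/40$ no matter what happened in earlier phases; so the probability that $w$ is still unmarked at the end of step 2 is at most $(1 - 1/40)^{\Theta(\log n)} \le n^{-c}$ for any desired constant $c$, after choosing the hidden constant in the number of phases large enough, and a union bound over the at most $n$ undecided nodes finishes the high-probability claim. The main obstacle I anticipate is the very first step---rigorously justifying the aligned-phase viewpoint so that the single-phase Decay guarantee can be invoked on the \emph{union} of unclustered and active-cluster transmitters---since once this composition is in hand the two cases and the Chernoff-type boosting are routine.
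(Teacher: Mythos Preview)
Your proof is correct and follows essentially the same route as the paper's: both argue that in each of the $\Theta(\log n)$ parts there is a constant probability that $w$'s cluster is silent while some neighbor outside $\mathcal{C}_1$ is running a Decay phase, and then boost to high probability. The paper packages this by first counting $\Theta(\log n)$ ``good'' parts and then invoking the long-phase Decay guarantee across them, whereas you multiply the per-phase probabilities directly; your structural preliminary (that a different-cluster neighbor of an undecided $w$ must itself be undecided or the candidate, hence a participating transmitter) is a detail the paper's proof glosses over, and the aligned-phase worry you flag is a non-issue since the algorithm is already set up so that the unclustered nodes' long-phases and the clustered nodes' parts are synchronized basic-Decay phases.
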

\begin{proof} Suppose that $w$ is adjacent to an unclustered node $v$ or a node $v'$ clustered in a different cluster. We say a part is good if the the cluster of $w$ is listening in this part and following  holds: either $w$ has an adjacent unclustered node $v$ or there is a clustered node $v''$ adjacent to $v'$ that is from a different cluster and that cluster is active in this part. For each part, the probability that this part is good is at least $1/4$. Moreover, there are $\Theta(\log n)$ parts in total. Hence, with high probability, there are at least $\Theta(\log n)$ good parts. These constitute equivalent of one long-phase of decay where $w$ and all nodes of its cluster are silent and an unclustered neighbor $v$ of $w$ or a clustered neighbor $v''$ of $w$ from a different cluster is running a phase of the basic Decay protocol. Hence, with high probability, $w$ receives a message from one such neighbor and thus gets marked. 
\end{proof}

Now we argue that at the end of each epoch, we have a clustering in which each node can reach its candidate using only internal nodes of the same cluster:
\begin{proposition}At the end of each epoch of the Fast-Cluster algorithm, with high probability, for each cluster $\mathcal{C}_u$, we have the following property: each clustered node $v$ of $\mathcal{C}_u$ is connected to $u$ via a path made of only internal nodes of $\mathcal{C}_u$.
\end{proposition}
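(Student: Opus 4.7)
The plan is to induct on epochs, taking the stated connectivity invariant itself as the inductive hypothesis. The base case is the trivial starting clustering in which each candidate forms a singleton cluster, so the invariant holds vacuously. The inductive step is to show that, assuming the invariant at the start of an epoch, it holds (w.h.p.) at the end.

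The first thing I would record is a simple structural observation: by inspection of the four steps, candidates and nodes that are internal at the start of an epoch keep the status ``internal'' and the same cluster assignment throughout the epoch. In particular, every start-of-epoch path of internal nodes from a clustered node to its candidate survives as a valid end-of-epoch path of internal nodes. Consequently, for any node $v$ already in $\mathcal{C}_u$ at the start of the epoch, the inductive hypothesis plus this observation immediately supplies the required path, provided $v$ is still in $\mathcal{C}_u$ at the end (the case where $v$ becomes unclustered is vacuous for the invariant).

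The core of the argument is therefore the connectivity of nodes that were unclustered at the start of the epoch but become clustered by the end. Here I would view Step~1 as growing a \emph{recruitment forest}, rooted at the candidates, in which the parent of a newly-clustered node $v$ is the node $v'$ whose first Step-1 transmission caused $v$ to adopt the status ``undecided'' in the cluster of $v'$. The engine of the proof is that Step~3 is precisely designed to propagate marks down this forest: if $v'$ is marked then so is $v$. Contrapositively, if $v$ survives Steps~2--3 unmarked and thus becomes internal in Step~4, then $v'$ is unmarked as well (or $v'$ is the candidate $u$ itself), so $v'$ is also internal at the end. Induction on depth in the recruitment forest then yields a path of end-of-epoch internal nodes of $\mathcal{C}_u$ from $v'$ to $u$, and prepending $v$ extends it.

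For nodes that end up as \emph{boundary} of $\mathcal{C}_u$ at the end of the epoch the argument reduces to the previous case: a non-internal node becomes boundary in Step~4 only by hearing an internal neighbor $w$ during the concluding Decay long-phase, and that $w$ supplies both the cluster membership of the new boundary node and (by the cases above) a valid internal path to its candidate. The main obstacle I foresee is the high-probability bookkeeping that glues these cases together, particularly showing that every marked undecided node that ``ought'' to remain boundary actually hears an internal neighbor from its own cluster during Step~4; this will follow from \Cref{lem:long-phase} applied to the surviving in-cluster internal neighbor supplied either by the recruitment-forest argument (for newly recruited nodes) or by the preserved start-of-epoch path (for ex-boundary undecideds). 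A union bound over the polynomially many nodes, together with Proposition~\ref{prop:marks} controlling the Step-2 markings, then promotes these per-node guarantees to the global w.h.p. statement.
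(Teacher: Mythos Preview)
Your approach is correct and essentially identical to the paper's: both hinge on the contrapositive of Step~3's mark propagation along the recruitment chain to conclude that every new internal node's entire chain of recruiters is unmarked (hence internal), and then observe that Step~4 boundary nodes are attached directly to an internal neighbor. One small correction to your closing paragraph: the worry that a marked undecided node must hear an internal neighbor \emph{from its own cluster} in Step~4 is misplaced---as you yourself note a few lines earlier, the heard internal neighbor $w$ supplies the cluster membership, so the node simply joins $w$'s cluster (whatever that is) and the invariant holds; \Cref{lem:long-phase} therefore only needs to guarantee that \emph{some} internal neighbor is heard, and if none is, the node becomes unclustered and the invariant is vacuous.
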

\begin{proof}We first show that every \emph{internal} node has a path of internal nodes connecting it to the related candidate. This is true initially and remains true because the only way a node $v$ becomes internal is if it receives a message in Step 1 over a sequence of transmissions starting from a previously boundary node. Furthermore, for the node $v$ to become internal, none of the nodes that recruited $v$ directly or indirectly can be marked in Step 2. This is because otherwise $v$ would also get marked in Step 3. Once node $v$ becomes internal, all nodes connecting it to a node that was internal before (all those that recruited $v$ directly or indirectly) are getting an internal status too, and this preserves the invariant. Now, it is easy to see that the claimed property---that their are connected to their candidate via a path of internal nodes---also holds for boundary nodes created in Step 4 since these boundary nodes are recruited only by internal nodes. 
\end{proof}

Now we are ready to complete the analysis of the Fast-Cluster algorithm, by proving its connectivity gap guarantee.
\begin{lemma}\label{lem:clustergrowth}
The algorithm Fast-Cluster produces a clustering with connectivity gap $O(\frac{\log^2 n}{\log n/D})$, with high probability.
\end{lemma}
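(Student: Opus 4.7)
The plan is to argue by contradiction. Suppose that at the end of the $E = \Theta(T_{BC}/\log^2 n)$ epochs, some cluster $\mathcal{C}_u$ is not $d^*$-adjacent to any other cluster, where $d^* = Cg$ for a suitably large constant $C$ and $g = \Theta(\log^2 n / \log(n/D))$ is the per-epoch Fast-Decay growth distance implicit in \Cref{lem:growth}. I will show that this forces $\mathcal{C}_u$ to be the only cluster. Since every other candidate $v$ satisfies $d(u,v)\le D$, it suffices to show that if no \emph{contact event} ever occurs (that is, if at the start of every epoch no clustered node of any other cluster lies within $d^*$ of $\mathcal{C}_u$'s boundary), then $\mathcal{C}_u$'s internal region at the end covers all of $B(u,D)$, and in particular contains $v$ itself, contradicting $v$'s role as the candidate of a different cluster.

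The main technical step is a local analog of \Cref{lem:growth}: under the no-contact assumption in a given epoch, $\mathcal{C}_u$'s internal non-boundary region grows by at least $g$ hops in that epoch w.h.p. The argument mirrors the proof of \Cref{lem:growth}. In Step 1, the Fast-Decay($\delta$) protocol launched from $\mathcal{C}_u$'s (formerly) boundary reaches all nodes within $g$ hops in its first $\alpha\log^2 n$ rounds, and by \Cref{lem:fastdecay-slowstep} it covers every $2$-hop neighbor of these nodes during the remaining $3\alpha\log^2 n$ rounds. The no-contact assumption means that these newly reached safe-zone nodes have no neighbors in other clusters, while the filled $2$-hop neighborhood ensures they have no remaining unclustered neighbors, so Step 2 does not mark them; Step 3 does not mark them either, since it only propagates marks outward along recruitment chains from Step 1. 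Hence in Step 4 they become internal. A companion observation is the following monotonicity: once a node is internal non-boundary in $\mathcal{C}_u$ at the end of some epoch, it remains in $\mathcal{C}_u$'s clustered set forever after, because it neither participates in transmissions nor gets demoted to ``undecided'' at the start of any later epoch.

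Iterating the local growth claim over all $E$ epochs, the internal region of $\mathcal{C}_u$ reaches radius $E\cdot g = \Theta(D)$, which with suitable constants is at least $D$ and therefore covers the entire graph, including $v$; this is the promised contradiction. Hence a contact event occurs at some epoch $t^*\leq E$, so at that epoch some clustered node of another cluster lies within $d^*$ of $\mathcal{C}_u$'s internal region. Combined with the algorithm's invariant that every clustered node of any cluster is adjacent to an internal node of its own cluster, this gives a pair of internal nodes of $\mathcal{C}_u$ and of that other cluster within distance $d^* + O(1) = O(g)$ of each other; by the monotonicity observation, both remain internal at the end of the algorithm, furnishing the required $O(g)$-adjacency and contradicting the initial assumption. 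The main obstacle will be making the local version of \Cref{lem:growth} fully rigorous: one has to verify that the probabilistic Fast-Decay guarantees underlying \Cref{lem:fastdecay-slowstep} still hold in the local neighborhood despite many other clusters simultaneously running Fast-Decay elsewhere in the graph, and that the indirect marking of Step 3 cannot leak back into the inner safe region through a long recruitment chain.
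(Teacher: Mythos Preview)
Your approach is correct and follows the same core logic as the paper: both argue that if $\mathcal{C}_u$'s growth were never impeded by another cluster, then (a local analog of) \Cref{lem:growth} would make it swallow the whole graph, so an interference event must occur, and at that moment the two clusters have permanent internal nodes within $O(\log^2 n/\log(n/D))$ of each other.

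The paper's version is slightly more direct than your contradiction-with-buffer framing. Rather than a contact event at distance $d^*$, it takes the first epoch $i$ in which an undecided node $w$ of $\mathcal{C}_u$ gets marked in Step~2 for having a neighbor $w'$ in a different cluster (so the clusters are at distance~$1$ at that instant), and then simply bounds the backtracking of each side in the trimming steps by the per-epoch maximum growth $4\alpha\log^2 n/\log(n/D)$; this yields permanent internal nodes of the two clusters within $8\alpha\log^2 n/\log(n/D)$ of each other. Your buffer-zone formulation costs a looser constant but, as you correctly note, makes the local version of \Cref{lem:growth} easier to justify: with $d^*$ exceeding twice the per-epoch maximum growth, neither cluster's Step-1 Fast-Decay can reach the other's safe zone, so the protocol behaves locally exactly as if $\mathcal{C}_u$ were alone, and Step-3 marks cannot leak inward since recruitment chains only run outward from the old boundary. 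The paper glosses over this same technical point with the phrase ``following the arguments in the proof of \Cref{lem:growth}'', so your proposal is if anything more explicit about the one place where care is required.
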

\begin{proof}[Proof of \Cref{lem:clustergrowth}] 
Consider a candidate node $u$ and its cluster $\mathcal{C}_u$ after running the algorithm. To prove the lemma, we prove that either $u$ is the only cluster or there is another cluster $\mathcal{C}_{u'}$ that is $d'$-adjacent to $\mathcal{C}_{u}$. This shows that the new connectivity gap is at most $d'$.

We know that either there is no cluster other than $\mathcal{C}_u$ or there is at least one other cluster that is $D$-adjacent to $\mathcal{C}_u$. The former case is as studied in \Cref{lem:growth}. We now consider the latter case, which is the more interesting one. In this case, there will be a time in which the growth of the cluster $\mathcal{C}_u$ interferes (and is thus slowed down) by that of the other clusters. Following the arguments in the proof of \Cref{lem:growth}, we see that during a complete run of the Fast-Cluster, w.h.p., there will be a round in which one ``undecided'' node of $\mathcal{C}_u$ gets marked for neighboring a node from another cluster in Step 2. Suppose that $i$ is the first epoch such that there exists a node $w \in \mathcal{C}_u$ that is marked during step 2 of epoch $i$ and there exists a cluster $\mathcal{C}_{u'} \neq \mathcal{C}_{u}$ and a node $w' \in \mathcal{C}_{u'}$ such that $w$ and $w'$ are neighbors. Since in each epoch each cluster can grow at most $\frac{4\alpha\log^2 n}{\log n/D}$ hops, in the trimming steps of epoch $i$ (steps 2 to 4), each cluster only backtracks by at most $\frac{4\alpha\log^2 n}{\log n/D}$ hops. But then, at the end of trimming, the internal statuses are permanent and thus, are not altered later. Hence, at the end of epoch $i$, there exist two nodes $v$ and $v'$ that are within distance $\frac{8\alpha\log^2 n}{\log n/D}$ of each other and are permanently assigned to $\mathcal{C}_u$ and $\mathcal{C}_{u'}$ as internal nodes, respectively. The factor of $2$ is because both clusters can backtrack. This shows that the connectivity gap will be at most $\frac{8\alpha\log^2 n}{\log n/D} = O(\frac{\log^2 n}{\log n/D})$.
\end{proof}

Now that we achieved a clustering with connectivity gap $O(\frac{\log^2 n}{\log n/D})$, we grow the clusters from this point onwards slowly and carefully, using an algorithm that we call Cluster-Refinement and explain next.

\paragraph{The Cluster-Refinement algorithm} We first briefly recap on the clustering obtained above which is the starting point of the Cluster-Refinement algorithm. This starting clustering has a connectivity gap of $O(\frac{\log^2 n}{\log n/D})$; each node is either clustered or unclustered and each clustered node might be internal or boundary. Furthermore, each internal node has no unclustered neighbor or a neighbor from a different cluster, and each clustered node $v$ is connected to its candidate via a path made of only internal nodes of the related cluster. 

Now we will grow the clusters slowly, and when two clusters meet, we stop their growth from that side. More precisely, we do as follows: 

As usual, we have two statuses for the clustered nodes, \emph{internal} and \emph{boundary}. The internal nodes remain internal permanently but the boundary nodes might become internal as we grow the clusters. Moreover, we have two types of boundary nodes, \emph{active} and \emph{inactive}. At the start, all boundary nodes are active. 

We divide the time into \emph{epochs}, each consisting of $\Theta(\log^2 n)$ rounds. In total, we use $O(\frac{\log^2 n}{\log n/D})$ epochs, i.e., $O(\frac{\log^4 n}{\log n/D})$ rounds. In each epoch, we grow each cluster from each side by (at most) one hop. For that, the internal nodes and the active boundary nodes run one long-phase of the basic Decay protocol, trying to transmit their cluster ID. The already clustered nodes do not listen but the unclustered nodes listen and if an unclustered node receives a message, it joins the respective cluster. Hence, if an unclustered node $w$ had a clustered neighbor $w'$ that was not an inactive boundary, $w$ becomes clustered in this epoch with high probability. This follows from \Cref{lem:long-phase}. 

After that, we are done with the cluster growing part of this epoch and we determine the new boundaries. The previously internal nodes remain internal. The general rule is that, a clustered node $v$ is boundary if it has a neighbor from a different cluster or an unclustered neighbor. In the former case, $v$ is a permanent boundary, while in the latter, it is a temporary boundary and might become internal later. If a clustered node $v$ does not become boundary, it is considered internal.

To determine these boundaries, we use a procedure similar to the step 2 of the Cluster algorithm, using $\Theta(\log^2 n)$ rounds. In these rounds, each unclustered node simply runs $\Theta(1)$ long-phases of the basic Decay protocol, sending a message declaring that it is unclustered. A clustered node that receives a message from an unclustered node becomes boundary (inactive or active depending on the part explained next). From the viewpoint of the clustered nodes, these rounds are divided into $\Theta(\log n)$ \emph{parts}, each part consisting of one phase of the basic Decay protocol. In each part, nodes of each cluster \emph{unanimously} decide to either ``try to send" or remain listening, each with probability $\frac{1}{2}$. In each part, all nodes that are trying to send perform one phase of the basic Decay algorithm, sending their cluster ID. If a clustered node receives a message of a different cluster, it becomes an inactive boundary. The same arguments as in the proof of \Cref{prop:marks} show the following:

\begin{proposition} With high probability, the boundaries are as desired. That is, a clustered node $v$ that has a clustered neighbor from a different cluster becomes inactive boundary, else if it has an unclustered neighbor it becomes boundary but active, and finally, if neither of these two are the case, it becomes an internal node of its cluster.
\end{proposition}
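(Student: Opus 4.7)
The plan is to prove each of the three outcomes separately, in the same spirit as the argument for \Cref{prop:marks}. Recall that the $\Theta(\log^2 n)$ rounds of the boundary-determination step are partitioned into $\Theta(\log n)$ \emph{parts}, each one phase of basic Decay: in every part, each cluster independently and unanimously flips a fair coin to decide whether to transmit its cluster ID or remain silent, while concurrently each unclustered node broadcasts its ``unclustered'' message via $\Theta(1)$ long-phases of basic Decay, so that within every clustered part an unclustered neighbor is running essentially one phase of basic Decay.

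The trivial case is when $v$ has neither an unclustered neighbor nor a neighbor in a foreign cluster. Then no transmission of an ``unclustered'' tag and no transmission carrying a foreign cluster ID is ever within one hop of $v$, and by the update rule $v$ only changes its status upon receipt of such a message, so $v$ stays internal deterministically.

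Next suppose $v \in \mathcal{C}_v$ has a neighbor $v' \in \mathcal{C}' \neq \mathcal{C}_v$. Call a part \emph{good} if $\mathcal{C}_v$ decides to listen while $\mathcal{C}'$ decides to transmit; by the independence of the per-part, per-cluster coin flips this event has probability $1/4$ per part, so a Chernoff bound over the $\Theta(\log n)$ parts produces $\Theta(\log n)$ good parts with high probability. Concatenating these good parts yields a schedule equivalent to one long-phase of basic Decay in which $v$ is silent and its neighbor $v'$ is transmitting, so by \Cref{lem:long-phase} $v$ receives a message w.h.p. Because the received message carries an ID different from $\mathcal{C}_v$, the rule marks $v$ as an inactive boundary. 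The same argument covers the subcase in which $v$ additionally has unclustered neighbors, since by the rule the inactive-boundary status takes precedence over the active one.

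Finally, suppose $v$ has an unclustered neighbor $u$ but no neighbor in any foreign cluster. Call a part good if $\mathcal{C}_v$ decides to listen, which occurs with probability $1/2$ per part; a Chernoff bound again yields $\Theta(\log n)$ good parts w.h.p. Across these good parts $v$ is silent while $u$ is continuously executing phases of basic Decay, so the good parts jointly realize one long-phase of Decay with a transmitting neighbor of the silent $v$, and \Cref{lem:long-phase} implies that $v$ receives $u$'s ``unclustered'' message w.h.p., making $v$ a boundary. Since no neighbor of $v$ lies in a foreign cluster, $v$ never receives a foreign-cluster message during the epoch, and therefore $v$ remains \emph{active}. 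A final union bound over the polynomially many candidates $v$ upgrades these per-node guarantees to a single high-probability statement. The main subtlety, already dispatched in the proof of \Cref{prop:marks}, is the justification that the isolated ``good'' parts can be analyzed as the rounds of an honest long-phase of basic Decay despite the interleaved ``bad'' rounds in which $v$ or its relevant transmitting neighbor may be behaving differently.
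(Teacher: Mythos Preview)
Your proof is correct and follows exactly the approach the paper intends: the paper's own proof is the single sentence ``The same arguments as in the proof of \Cref{prop:marks} show the following,'' and your three-case breakdown with the good-parts/Chernoff/long-phase argument is precisely that argument spelled out. One tiny imprecision worth noting is that in your case~2, when $v$ has both a foreign-cluster neighbor and unclustered neighbors, the long-phase lemma only guarantees $v$ receives \emph{some} message, not specifically a foreign cluster ID, so the line ``the received message carries an ID different from $\mathcal{C}_v$'' is slightly loose---but the paper glosses over the same point, and it is easily handled (e.g., by running the unclustered-node and clustered-node detection sub-steps disjointly within the same $\Theta(\log^2 n)$ budget).
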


As the result of running the Cluster-Refinement algorithm after the cluster algorithm, we get the following:
\begin{corollary}\label{lem:cluster} Given a set of candidates, there exists a distributed algorithm in the model without collision detection that with high probability achieves a clustering around candidates with connectivity gap of at most $3$, in $O(D\log{\frac{n}{D}} + \log^3 n)$ rounds.
\end{corollary}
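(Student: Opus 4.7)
The plan is to compose the two algorithms already developed in this subsection: first apply Fast-Cluster to get a clustering with moderately small connectivity gap, then apply Cluster-Refinement to shrink the gap down to a constant.

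First I would invoke \Cref{lem:clustergrowth} to conclude that after $\Theta(T_{BC}) = \Theta(D\log\tfrac{n}{D}+\log^2 n)$ rounds of the Fast-Cluster protocol we obtain, with high probability, a valid clustering with connectivity gap $d_0 = O\!\left(\tfrac{\log^2 n}{\log(n/D)}\right)$ in which every clustered node is joined to its candidate by a path of internal nodes of the same cluster. This is the initial configuration required by Cluster-Refinement.

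Next I would feed this clustering into the Cluster-Refinement algorithm and run it for $\Theta(d_0)$ epochs, each of length $\Theta(\log^2 n)$ rounds. The analysis of a single epoch relies on \Cref{lem:long-phase}: one long-phase of basic Decay guarantees, w.h.p., that every unclustered node adjacent to an active boundary node becomes clustered in that epoch, so each cluster advances by one hop along every still-active front. The proposition about boundary correctness (already proved in the Cluster-Refinement description) ensures that the permanent/internal invariant is preserved. Since two clusters that are originally $d$-adjacent can both advance by one hop per epoch, their mutual gap drops by $2$ per epoch; after $\Theta(d_0)$ epochs the distance between internal nodes of any pair of formerly-$d_0$-adjacent clusters collapses to at most $3$ (with the slack of $3$, rather than $1$, accounting for the fact that the meeting nodes become boundary rather than internal). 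A union bound over the $O(d_0)=\poly\log n$ epochs and $O(n)$ nodes keeps the failure probability polynomially small.

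The remaining step is a time computation: the total round count is
\[
O(T_{BC}) + \Theta(d_0)\cdot\Theta(\log^2 n) \;=\; O\!\left(D\log\tfrac{n}{D}+\log^2 n\right) + O\!\left(\tfrac{\log^4 n}{\log(n/D)}\right).
\]
I would then argue by a short case split on the size of $\log(n/D)$ that $\tfrac{\log^4 n}{\log(n/D)} = O(D\log\tfrac{n}{D}+\log^3 n)$: when $\log(n/D)\ge\tfrac{1}{2}\log n$ (i.e.\ $D\le\sqrt n$) the term is $O(\log^3 n)$, while for larger $D$ the term is absorbed into $D\log\tfrac{n}{D}$, yielding the claimed overall bound $O(D\log\tfrac{n}{D}+\log^3 n)$.

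The main obstacle I expect is nailing down the exact final constant of $3$ for the connectivity gap: one has to track what happens when a shrinking gap has an odd number of unclustered nodes, and remember that a node recruited next to a foreign cluster is reclassified as boundary (not internal), so the distance between internal endpoints does not collapse fully to $1$. Handling this carefully — rather than merely arguing the gap becomes $O(1)$ — is the only delicate point; the probabilistic and time-complexity parts are straightforward applications of the lemmas already in hand.
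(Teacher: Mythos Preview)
Your proposal is correct and follows essentially the same approach as the paper: run Fast-Cluster to obtain a gap of $O(\log^2 n/\log(n/D))$, then run Cluster-Refinement for that many epochs of $\Theta(\log^2 n)$ rounds each, and verify that the total $O(D\log\tfrac{n}{D}+\log^2 n)+O(\log^4 n/\log(n/D))$ lies in $O(D\log\tfrac{n}{D}+\log^3 n)$. The paper's own proof is in fact terser than yours---it simply asserts the final bound ``considering the values that $D$ can take'' without your explicit case split, and it defers the constant-$3$ discussion to the preceding description of Cluster-Refinement rather than re-arguing it.
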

\begin{proof} This algorithm is simply first running the cluster algorithm explained above and then running the Cluster-Refinement algorithm. The correctness analysis are as discussed above. The round complexity of the whole procedure used above is $O(D\log{\frac{n}{D}}+ \log^2 n) + O(\frac{\log^2 n}{\log \frac{n}{D}}) \cdot O(\log^2 n)$. It is easy to see that this is in $O(D\log{\frac{n}{D}} + \log^3 n)$, considering the values that $D$ can take.
\end{proof}

\subsection{The Sparsified Overlay Graph $H'$ and the Overlay Communications}\label{sec:overlaynoCD}
Now, we use the clusters obtained in the clustering component (\Cref{lem:cluster}) to design a directed \emph{overlay} graph between the candidates. We first explain the communication actions in this overlay and then explain how to use these communications to construct the overlay.

\paragraph{Communication actions:} In this overlay, we have three types of \emph{communication actions} in or amongst the clusters. These communication actions are as follows. (\textbf{a}) {Uplink}: a candidate delivers its message to all internal nodes of its cluster. (\textbf{b}) {Intercommunication}: internal nodes exchange messages with other internal nodes of other clusters which are at distance at most $10$. (\textbf{c}) {Downlink}: internal nodes send some messages towards the candidate; we guarantee that candidate gets at least one of them. We next explain the implementation details of these communication actions.

\newparagraph{Implementation of communication actions:} In the following, we explain how we implement the communications actions Uplink, Intercommunication, and Downlink and what are the time complexities of these actions.
\begin{enumerate}
\item In uplink, candidates start with messages for transmission. Every internal node that receives a message runs a Fast-Decay($\log \frac{n}{D}$) algorithm. However, the boundary or unclustered nodes do not participate in transmissions. It is easy to see that after broadcast time $T_{BC}$, each internal node receives the message of the respective candidate. Thus, the complexity of uplink is simply $T_{BC} = O(D \log \frac{n}{D} +\log^2 n)$.  

\item In downlink, we do exactly the opposite of uplink. This time, some internal nodes start with messages and we want to deliver at least one of these messages to the related candidate. Again, every internal node (other than candidates) that has a message (or receives a message) runs a Fast-Decay($\log \frac{n}{D}$) algorithm, and the boundary or unclustered nodes do not participate in transmissions. Again, by properties of the Fast-Decay($\log \frac{n}{D}$) algorithm, after $T_{BC}$ time, the candidate receives the message of at least one internal node of the related cluster. For the sake of cleanness, the candidate can ignore all but the first received message. Clearly, the complexity of downlink is also $T_{BC} = O(D \log \frac{n}{D} +\log^2 n)$.

\item The intercommunication consists of $\Theta(\log^3 n)$ rounds which are divided into $\Theta(\log^2 n)$ epochs. In each epoch, all internal nodes of each cluster randomly decide to be active or to be listening unanimously, with probability $\frac{1}{\log n}$ for being active%
\footnote{This requires $\log^2 n$ bits of randomness shared in the cluster. We can get this shared randomness in time $O(D \log \frac{n}{D} + \log^3 n)$ by sending $k = \log n$ packets of $\log n$ random bits each via a very simple single source $k$-message broadcast algorithm from \cite{FOCS} with time complexity $O(D \log \frac{n}{D} + k \log^2 n)$.}. Then, in each epoch, internal nodes of the active clusters and all the unclustered or boundary nodes that receive a message perform $3$ phases of the basic Decay protocol based on their own local coins, where a node that receives a message in one phase will be transmitting it in the next ones. We say a cluster is globally isolated in an epoch if this cluster is the only cluster that is active in that epoch. Note that, in each epoch, each cluster $\mathcal{C}_{u}$ has a probability of $\frac{1}{\log n}$ to be isolated. Thus, using a Chernoff bound, we get that during the $\Theta(\log^2 n)$ epochs, w.h.p., there are at least $\Theta(\log n)$ epochs during which $\mathcal{C}_{u}$ is globally isolated. Then, from properties of the Decay protocol, we know that in each such epoch, each (internal) node of any other cluster that is within distance $3$ of internal nodes of $\mathcal{C}_{u}$ receive message of $\mathcal{C}_{u}$ with constant probability. Thus, after $\Theta(\log n)$ such epochs, each such node receives this message with high probability. Then, using union bounds, we get that this holds for any such node, and also in another level, for every cluster $\mathcal{C}_{u}$. This proves the correctness of the intercommunication algorithm.
\end{enumerate}

\medskip
\paragraph{Overlay:} Now using the above communications, we get a sparse \emph{overlay} $H'$ with following properties: (1) each candidate has exactly one incoming edge. If $(w, v)$ is the edge going from $w$ to $v$, we say that $w$ is the \emph{parent} of $v$, and $v$ is a \emph{child} of $w$, (2) each candidate can send a message to all its children, in one round of communication atop the overlay (3) each candidate knows all of its children, if there are at most $5$ of them, and it knows at least $5$ of them, if there are more, and can receive from these known children (at most $5$) in a constant number of rounds of communication atop the overlay, (4) each round of communication atop this overlay takes $O(D \log \frac{n}{D} +\log^3 n)$ rounds, and finally (5) this overlay is built in time $O(D \log \frac{n}{D} +\log^3 n)$. 

\newparagraph{Implementation of the Overlay atop communication actions:} Above these abstractions of communication actions, the algorithm for designing the overlay to get the aforementioned properties (1) to (5) is as follows. First, candidates send their ID to all internal nodes of their cluster using the Uplink. Then, we use intercommunication so each internal node knows the ID of clusters that are within distance $10$ of it. Then, internal nodes use the down-link to send the ID of these adjacent clusters to their respective candidates. For each candidate $u$, the first adjacent candidate that $u$ hears about becomes parent of $u$. Then, using an up-link communication, the candidates send the ID of their parent to the internal nodes of their clusters. After that, we use another intercommunication where internal nodes inform close-by internal nodes of who their parents are. Hence, after this, internal nodes of each cluster, altogether, know which clusters are their children. Now, the goal is to inform a candidate either about all its children or at least $5$ of them if it has $5$ or more. For this, we use $5$ turns of the down-link and up-link where in each turn, the candidate asks ``\emph{So far I know about children listed as follows =[...]. Tell me something new}''. With this question answered $5$ consecutive times a candidate indeed gets to know about one more child per iteration if such a child exists. In total it therefore either knows all its children learns about at least $5$ of them in the case that there are $5$ or more.

\subsection{Implementing the Modified Elimination Algorithm}\label{sec:debatenoCD}

As the last component of a debate (after clustering and overlay design), we implement the Modified Elimination algorithm (MEA) on our overlay graph. Each candidate knows whether its degree is less than $5$ or not, and in the former case, the node also knows the degree exactly. Thus, each node knows its degree rounded down to $5$. Using the overlay graph, each candidate sends this \emph{flattened} degree and its ID, first, to its children, and then, to its parent. Given property (2) of the overlay, sending this message to children is straightforward. On the other hand, using property (3) of overlay, we know that each candidate receives the message of all up to $5$ of its children. After these message exchanges, each candidate uses the MEA algorithm to decide whether it remains alive or becomes removed from candidacy. If a candidate $v$ has a degree higher than $5$, it remains alive. Otherwise, noting the aforementioned properties (2) and (3), $v$ receives all the messages it needs. Thus, $v$ can compare its $(degree(.), ID(.))$ pair with the pairs in the received messages and decide about remaining alive or being removed accordingly. 

\begin{proof}[Proof of \Cref{lem:mainnoCD}] What remains to show is that the above implementation of the Modified Elimination algorithm on top of the overlay graph satisfies the desired properties of a debate, as mentioned at the start of the section. Note that if the number of remaining candidates is greater than one, the overlay graph might be disconnected. For the purpose of analysis, we look at each connected component of the overlay graph separately. If there is only one candidate remaining in the network, we are done with the proof. Otherwise, we know that each connected component has more than one candidate (assuming more than one candidates are remaining). Hence, the Modified Elimination algorithm removes at least $\frac{1}{10}$ fraction of the nodes of each connected component. This mean that it removes at least $\frac{1}{10}$ fraction of all the candidates. Also, it is clear that always at least one candidate remains alive. This is because one candidate remains alive in each connected component of the overlay graph. These two show that the two desired properties of the debates are indeed satisfied. 
\end{proof}

\subsection{Reducing the Running Time of a Debate for Networks with Large Diameter}\label{sec:linearnoCD}

In this subsection we give the proofs and details for reducing the running time of our algorithm when the diameter $D$ is (almost) linear. The high-level idea for this was given in \Cref{sec:overviewlinear}. We first show that growing the cluster in debate $i$ only up to $\min\{D, \frac{4n\; 1.05^i}{\log n}\}$ instead of $D$ still leads to at least a $\frac{1}{20}$-fraction of candidates being removed in each round and therefore to a correct leader election. Then, we show that this reduces the time complexity by replacing the $\log \log n$ factor in our running time to the claimed $\log \min\{\log n,\frac{n}{D}\}$ factor. In particular, this makes our leader election algorithm run in optimal $O(n)$ rounds.

\newparagraph{Correctness Analysis:} 
We first prove the main observation claim presented in \Cref{sec:overviewlinear}. That is, we show that if the number of remaining candidates is $k$, then at most half of the candidates can be such that there is no other candidate within $2n/k$ steps. This can be easily seen by the following potential argument: Suppose each node starts with one unit of potential and then passes this potential to the candidate closest to it. Any candidate $u$ that does not have any other candidate within its $4n/k$ hops receives a potential of at least $2n/k$. Since the initial potential is $n$ the number of such candidates can be at most $k / 2$.

Now using this observation, we prove that in the new algorithm where growing the cluster in debate $i$ is done only for up to $\min\{D, \frac{4n\; 1.05^i}{\log n}\}$ radius, we still have the desired progress property of debates. That is, with high probability, for each $i$, after the $i^{th}$ debate, the number of remaining candidates would be at most $C \log n \; 1.05^{-i}$. The proof is by induction. The base case of $i=0$ is easy and as before. For the inductive step, we assume that at the start of debate $i$, the number of remaining candidates is at most $C \log n \; 1.05^{-i+1}$. We prove that after the $i^{th}$ debate, this number is at most $C \log n \; 1.05^{-i}$. If at the start of the debate, the number of remaining candidates is already less than $C \log n \; 1.05^{-i}$, then we are done. In the more interesting case, if the number of remaining candidates at the start of $i^{th}$ debate is at least $C \log n \; 1.05^{i}$, then growing the clusters for radius $\min\{D, \frac{4n\; 1.05^{-i}}{\Theta(\log n)}\}$ makes sure that at most $\frac{1}{2}$ of candidates are isolated in the overlay graph. Then, from analysis of previous sections, we know that after the debate, at least $\frac{1}{10}$ fraction of the non-isolated candidates are removed. Thus, at least $\frac{1}{20}$ of the whole remaining candidates are removed after the debate. Hence, we conclude that the number of remaining candidates after the $i^{th}$ debate is at most $(1-1/20) C \log n \; 1.05^{-i+1} < C \log n \; 1.05^{-i}$.

\newparagraph{Time Complexity Analysis:} 

Next, we study how the reduced cluster radii affect the total running time of our algorithms. For this we can restrict ourselve to the case that $D > \frac{n}{\log n}$ since we do not claim any improvement for smaller $D$. Given this summing up the radius dependent term $O(d \log\frac{n}{d})$ over all debates leads to the following:

$$\sum_{i=1}^{\Theta(\log\log n)} O(\min\{D, \frac{4n\; 1.05^{i}}{\log n}\} \cdot \log \frac{n}{\min\{D, \frac{4n\; 1.05^{i}}{\log n}\}}) \ =$$ 
$$= \sum_{i=1}^{\Theta(\log(\frac{D \log n}{n}))} O(\frac{4n\; 1.05^{i}}{\log n} \cdot \log \frac{n}{\frac{4n\; 1.05^{i}}{\log n}}) \ \; + \ \ \ \ \ $$
$$\ \ \ \ \ \ \ \ \ + \sum_{i=\Theta(\log(\frac{D \log n}{n}))}^{\Theta(\log\log n)} O(D \cdot \log \frac{n}{D})\ =$$
$$= O(D \log \frac{n}{D} + \log \frac{n}{D} \cdot D \log \frac{n}{D}) = O\left(D \log\frac{n}{D} \cdot \log \frac{n}{D}\right)$$

%
This shows that the total running time is at most 
$$O(D \log\frac{n}{D} \min \{\log \log n,\log \frac{n}{D}\} + \log^3 n \log \log n) = $$
$$= O(D \log\frac{n}{D} + \log^3 n) \cdot \min \{\log \log n,\log \frac{n}{D}\})$$
as claimed in \Cref{lem:mainnoCD}.

\section{Leader Election via Beeps}
This section provides the technical details and proofs for \Cref{lem:mainCD}.  As described in \Cref{sec:overview}, this algorithm follows the template given in \Cref{sec:template} and uses the ideas explained in \Cref{sec:overviewnoCD} to implement this template. The reader is advised to read Sections \ref{sec:template} and \ref{sec:overviewnoCD} before reading this section.

In this section, we present a leader election algorithm for the beep model, which has time complexity $O(D + \log n \log \log n) \cdot \log \log n$ rounds. The outline of this algorithms is the same as the one presented in \Cref{sec:overview} as Algorithm \ref{alg:LE}. Starting with $\Theta(\log n)$ candidates the algorithm runs in $\Theta(\log \log n)$ debates. In each debate, we reduce the number of remaining candidates by a constant factor, while keeping the guarantee that at least one candidate remains. Each debate consists of a clustering phase and then an implementation of a debate using the induced overlay graph and overlay communication protocols. In this section we describe these implementations. For this we first introduce superimposed codes in \Cref{sec:codes}. Then, in \Cref{sec:debateCD1} we give a simple implementation of a debate in $O(D + \log^3 n)$ rounds and then finally in \Cref{sec:debateCD2} we show how to improve this to $O(D + \log n \log \log n)$ rounds. This running time for a debate leads to the (near) optimal leader election algorithm promised in \Cref{lem:mainnoCD}.

\subsection{Superimposed Codes}\label{sec:codes}
In this subsection, we define the two types of superimposed codes and show their existence using simple randomized algorithms. The first code is one that allows us to reconstruct messages from a collision of a bounded number of them. This code is presented in \Cref{subsubsec:fullCode} and we use it in \Cref{sec:debateCD1} it to present an implementation of debates which is closer to the outline presented in \Cref{alg:debateTemplate}. However, we use this code only for explaining the general idea of the approach. We present another family of codes in \Cref{subsubsec:countCodes} which provide only an approximate-count, and are thus simpler and shorter. Our most efficient algorithm, which we present in \Cref{sec:debateCD2}, only uses this second type. 
 
However, we note that similar to most of the other versions, our version is also constructed using very simple randomized algorithms.
\subsubsection{The Message Exchange Codes}
\label{subsubsec:fullCode}
\begin{definition}
A $k$-superimposed code or $SI(k)$-code of length $l$ for a finite set $N$ assigns each element in $N$ a binary codeword of length $l$ 
such that (1) every superimposition of $k$ or less codewords is unique and (2) every superimposition of more than $k$ codewords is 
different from any superimposition of $k$ or less codewords.
\end{definition}

It is easy to see that good superimposed codes of short length exist. We next present a simple randomized construction for this. We note there are many known versions of the superimposed codes and the general notion of the codes we have is somewhat similar to those presented in \cite{indyk1997SIcodes}. Also, the proof style we present here is the typical approach. The code we prove in \Cref{lem:code} can be also 
generated by using those of \cite{SI-Codes-Survey}\footnote{We also note that a less efficient but deterministic version, with size $\Theta(k^2 \log^2 N/\log^2\log k)$ bits, follows from \cite{Kautz-Singleton}.}. Since the proofs are simple, we provide the direct and self-contained proof here. We also note that, we do not use the codes provided by \Cref{lem:code} in our most efficient algorithm---they will be replaced by the approximate counting codes of the next subsection. We however use them for a better and simpler explanation of the general approach.

\begin{lemma}\label{lem:code}
For every $N$ and any $k$ there exists a $SI(k)$-code for $N$ of length $l = 4(k+1)^2 \log N$.
\end{lemma}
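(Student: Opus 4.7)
The plan is to use the standard probabilistic method. First I would reformulate the defining property of an $SI(k)$-code into a more usable covering condition: it suffices to show that for every subset $S \subseteq N$ with $|S| \le k$ and every element $x \in N \setminus S$, the codeword of $x$ has a bit position where it is $1$ while every codeword in $S$ is $0$. To see this is sufficient, given two sets whose superimpositions should differ (whether both of size $\le k$ and distinct, or one of size $\le k$ and the other of size $>k$), at least one contains an element $x$ not in the other set $S$, and a bit where $x$ has a $1$ but $S$ is all $0$ will appear in one superimposition but not the other.

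Next I would exhibit the code by choosing each codeword independently, setting each of its $l$ bits to $1$ with probability $p = 1/(k+1)$, independently. For any fixed $S$ of size at most $k$ and any fixed $x \notin S$, a single bit position satisfies the covering condition with probability exactly $p(1-p)^{|S|} \ge p(1-p)^k$. Using $(1 - 1/(k+1))^k \ge 1/e$, this is at least $\frac{1}{e(k+1)}$. Since the bits are independent, the probability that the covering condition fails at \emph{every} one of the $l$ positions is at most
\[
\left(1 - \frac{1}{e(k+1)}\right)^l \le \exp\!\left(-\frac{l}{e(k+1)}\right).
\]

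A union bound over all pairs $(S,x)$ with $|S|\le k$ and $x \notin S$ (there are at most $\binom{N}{k}(N-k) \le N^{k+1}$ such pairs) shows that the probability of some pair failing is at most $N^{k+1}\exp(-l/(e(k+1)))$. Choosing $l = 4(k+1)^2 \log N$ makes this quantity bounded by $\exp((k+1)\ln N - \tfrac{4(k+1)\log N}{e}) < 1$, since $\tfrac{4\log N}{e} > \ln N$ (using $\log$ base $2$ and noting $4/e > \ln 2$). Hence the random construction produces a valid $SI(k)$-code with positive probability, so one exists.

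There is essentially no deep obstacle here; the only care needed is the constant. The choice $p = 1/(k+1)$ is standard because it (approximately) maximizes $p(1-p)^k$, and with that choice the constant $4$ in front of $(k+1)^2 \log N$ is comfortably large enough to make the union bound go through after converting between $\ln$ and $\log_2$.
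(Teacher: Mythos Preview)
Your proposal is correct and follows essentially the same approach as the paper: a random code with bit probability $p=1/(k+1)$, the same covering/disjunctness condition (each codeword has a $1$-position avoiding any $k$ others), and a union bound over $N^{k+1}$ choices to conclude existence. The only cosmetic difference is that the paper fixes $|S|=k$ and argues afterward that this handles smaller sets, whereas you treat all $|S|\le k$ directly; the computations and constants match.
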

\begin{proof}
We show that a random code $C$ in which each codeword position is set to one with probability $p = 1/(k+1)$ has the
desired properties with good probability. 
To see this we take any $k+1$ codewords $c_0,c_1,\ldots,c_{k}$ and note that the probability that there is no
position for which $c_0$ is one and all codewords $c_1,\ldots,c_{k}$ are zero is exactly $(1 - p(1 - p)^{k})^l$ which is at most
$$(1 - \frac{1}{e(k+1)})^l < (1/e)^{(k+1)\log N} < \frac{N^{-(k+1)}}{2}.$$ 
Taking a union bound over all $\binom{N}{k+1} < N^{k+1}$ choices of codewords we get that with probability at least $1/2$, we 
get the property that the superimposition of any $k$ codewords differs from any different codeword $c_0$ by having a zero 
where $c_0$ has a one. In particular, this implies that given two sets $S$ and $S'$ with $S' \geq |S|$, $|S| \leq k$ and $S \neq S'$
the superimposition of all codewords in $S'$ has a one on a position in which the superimposition of all codewords in $S$ does not.
This is true because $S'$ contains at least one codeword $c_0$ that is not in a superset of $S$ of size $k$ and making $S$ smaller
and $S'$ larger does not change this fact. It is easy to see that both property (1) and (2) now follow directly.
\end{proof}

\subsubsection{The Approximate Counting Codes}
\label{subsubsec:countCodes}
We also use the following approximate counting superimposed code:

\begin{definition}
For any $N$, $k < N$ and $\delta>0$, a $(1+\delta)$-approximate $k$-counting superimposed code of length $l$ consists of a distribution $\mathcal{D}$ over binary codewords of 
length $l$ and a decoding function $decode: \{0,1\}^l \rightarrow [k]$ such that for every $j \in [k]$ and codewords $c_1,\ldots,c_j$ independently sampled from $\mathcal{D}$
we get that:
$$\Pr\left[\ \frac{j}{1+\delta} \ \leq \ decode\left(\bigoplus_{i=1}^j c_i\right) \ \leq \ j(1+\delta) \ \right] \geq 1 - 1/N.$$
\end{definition}

As the next lemma shows, there are such code with length that has only logarithmic dependence on $k$ and $N$ and polynomial dependence on $1/\delta$. In our applications, we use constant $\delta$.

\begin{lemma}\label{lem:approxcodesize}
For any any $N$, $k < N$ and $\delta>0$, there exists a $(1+\delta)$-approximate $k$-counting superimposed code of length $l = \Theta(\frac{\log N \; \cdot\; \log k}{\delta^3})$.
\end{lemma}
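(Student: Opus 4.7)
I would construct the code as a concatenation of $B = \Theta(\log k / \delta)$ independent blocks, each of length $m = \Theta(\log N / \delta^2)$; the distribution $\mathcal{D}$ sets each bit of block $b$ to $1$ independently with probability $p_b = \tfrac{1}{2}(1+\delta)^{-b}$, for $b = 0, 1, \ldots, B-1$, with $B$ chosen so that $p_{B-1} = \Theta(1/k)$. The total length is $B \cdot m = \Theta(\log N \log k / \delta^3)$, matching the claim, and I would take the decoding function to operate per-block as described below.

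For a superimposition of $j$ codewords, each bit inside block $b$ is independently $1$ with probability $\mu_b(j) = 1 - (1-p_b)^j$. This quantity is monotonically decreasing in $b$, running from near $1$ (when $p_b j \gg 1$) to near $0$ (when $p_b j \ll 1$), and is most sensitive to $j$ at the index $b^\ast$ where $p_{b^\ast} j \approx 1$, so that $\mu_{b^\ast} \approx 1 - 1/e$. The decoder would compute the empirical fractions $f_b = X_b / m$, locate the block $b^\ast$ whose $f_b$ lies closest to $1 - 1/e$, and output $\widehat{j} = \ln(1 - f_{b^\ast}) \big/ \ln(1 - p_{b^\ast})$. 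The geometric spacing of the $p_b$ by factor $(1+\delta)$ guarantees that for every $j \in [k]$ some block satisfies $p_{b^\ast} j \in [(1+\delta)^{-1}, 1+\delta]$, so the informative block always exists.

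The analysis has two pieces. First, a Chernoff bound applied inside each block, together with a union bound over the $B \leq N$ blocks, shows that every $f_b$ lies within an additive $\epsilon := \delta/(10e)$ of its mean $\mu_b(j)$ with probability at least $1 - 1/N$, provided $m = \Theta(\log N / \delta^2)$. Second, I would convert this additive precision on $f_{b^\ast}$ into multiplicative precision on $\widehat{j}$ by a derivative calculation: near the transition point one has $|d\widehat{j}/df| = 1 / \big((1-f)\,|\ln(1-p_{b^\ast})|\big) = \Theta(j)$, so an additive error $\epsilon$ in $f$ produces a multiplicative error of order $e\epsilon$ in $\widehat{j}$, and the choice $\epsilon = \delta/(10e)$ yields $\widehat{j} \in [j/(1+\delta),\, j(1+\delta)]$ as required.

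The main obstacle is the derivative/inversion step: one has to handle the regime where $p_{b^\ast}$ is not small separately (where $\ln(1-p) \not\approx -p$), track the constants in the Chernoff concentration so that after the $\Theta(j)$-sized blow-up the final multiplicative error is at most $1+\delta$ rather than a larger constant multiple, and confirm that the boundary cases $j = 1$ and $j = k$ (which fall at the ends of the probability ladder) are covered by the constants hidden in $p_0$ and $p_{B-1}$. Once these constants are tracked carefully, the existence of the code of length $\Theta(\log N \log k / \delta^3)$ follows.
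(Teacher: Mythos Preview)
Your construction is correct and achieves the stated length, but it takes a noticeably different route from the paper's proof. Both arguments use the same block structure---$\Theta(\log k/\delta)$ blocks of $\Theta(\log N/\delta^2)$ bits each, with geometrically decaying one-probabilities---and both establish concentration in every block via a Hoeffding/Chernoff bound followed by a union bound. The difference is in the choice of probabilities and the decoder.

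The paper sets the probability in block $i$ to $p_i = 1 - 2^{-1/(1+\delta)^i}$, so that in a superimposition of $j$ codewords each bit of block $i$ is one with probability exactly $1 - 2^{-j/(1+\delta)^i}$. This crosses the value $1/2$ precisely at $j = (1+\delta)^i$, and is bounded away from $1/2$ by $\Theta(\delta)$ once $j$ is a factor $(1+\delta)$ above or below. The decoder then simply outputs $(1+\delta)^{i^*}$ for the largest $i^*$ whose block has a majority of ones; the analysis is a one-line Hoeffding bound against the fixed threshold $1/2$, with no inversion or derivative calculation needed.

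Your approach, by contrast, uses the simpler ladder $p_b = \tfrac12(1+\delta)^{-b}$, which puts the transition at $1-1/e$ rather than at $1/2$, and then recovers $j$ by numerically inverting the empirical fraction $f_{b^*}$. This is valid but forces you into the sensitivity analysis you describe (bounding $|d\widehat{j}/df|$ and tracking how an additive $\epsilon$ in $f$ becomes a multiplicative error in $\widehat{j}$), as well as the boundary-case bookkeeping at $j=1$ and $j=k$ that you flag. The paper's algebraic choice of $p_i$ sidesteps all of this: the decoder becomes a threshold test, the output is already a power of $(1+\delta)$, and the constants fall out directly. Your route buys nothing extra here, but it would generalize more readily to other probability ladders if one needed that.
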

\begin{proof}
We first explain the encoding. Each codeword in the distribution $\mathcal{D}$ of the codes we construct consists of $\log_{1+\delta} k=\Theta(\log k /\delta)$ blocks, each having $\Theta(\log N/\delta^2)$ bits. In the $i^{th}$ block, each bit of each codeword is independently set to one with probability $p_i=1-2^{-1/(1+\delta)^i}$ and to zero otherwise. 

We now explain the decoding. The decoding function receives a binary word of length $l$ and dissects it into its blocks. Then in this word, the decoding finds the \emph{largest} $i^*$ such that the majority of the bits in the $i^*$ block are equal to one. Then, the decoding outputs $(1+\delta)^{i^*}$ as the estimate.

Finally, we present the correctness analysis. To show that this works, consider codewords $c_1,\ldots,c_j$ independently sampled from $D$ and let $C=\bigoplus_{i=1}^j c_i$, and let $i^{*}$ be the largest value such that the majority of the bits in the block $i^*$ of $C$ are equal to one. We show that $$\Pr\left[\ \frac{j}{1+\delta} \ \leq \ (1+\delta)^{i^*} \ \leq \ j(1+\delta) \ \right] \geq 1 - 1/N.$$

Consider an arbitrary block $i$. First, we show that if $(1+\delta)^i\geq j(1+\delta)$, then with probability $1-\frac{1}{N^3}$, the majority of the bits in the $i^{th}$ block are zero. For each bit, the probability that it is one is at most $1-(1-p_i)^j = 1-2^{-j/(1+\delta)^i} \leq 1-2^{-1/(1+\delta)} \leq 0.5 - \frac{\delta}{4}$. Thus, an application of the Hoeffding bound shows that, the probability that in $\Theta(\log N/\delta^2)$ bits, the majority are one is at most $1/N^3$. We now show that if $(1+\delta)^i\leq j/(1+\delta)$, then with probability $1-\frac{1}{N^3}$, the majority of the bits in the $i^{th}$ block are one. For each bit, the probability that it is one is at least $1-(1-p_i)^j = 1-2^{-j/(1+\delta)^i} \geq 1-2^{-(1+\delta)} \leq 0.5 + \frac{\delta}{4}$. Again, it follows from the Hoeffding bound that the probability that in $\Theta(\log N/\delta^2)$ bits, majority are zero is at most $1/N^3$. 

Now recall that $i^{*}$ is the largest value such that the majority of the bits in block $i^*$ are equal to one. Having analyzed the two sides, using a union bound over all $k<N$ values for $j$ and all blocks, we get that with probability at least $1-\frac{2}{N^2}\geq \frac{1}{N}$, we have $\frac{j}{1+\delta} \ \leq \ (1+\delta)^{i^*} \ \leq \ j(1+\delta)$.  

\end{proof}

\subsection{$O(D + \log^3 n)$-length debate}\label{sec:debateCD1}
The outline of this debate algorithm is exactly that of the simple debate algorithm sketched in \Cref{sec:template}. We first grow clusters around candidates, then each candidate find its degree in the overlay graph, then candidates exchange their ($degree$, $ID$) pairs, and at the end, each candidate remains a candidate only if its pair is greater than all the pairs that it received. Next, we zoom in on how we implement each of these steps with beeps.

\begin{algorithm}[t]
\caption{Debate-1 Algorithm, run @ node $u$}
\begin{algorithmic}[1]
\algorithmsize
\Statex
\State Cluster \Comment{Step 1}
\Statex
\State Uplink coded IDs \Comment{Step 2}
\State Intercommunicate
\State Downlink
\Statex
\If {$candidate$}
	\State $S_u\gets$ decoding of received message as set of IDs
	\State $s_u \gets |S_u|$
\EndIf
\Statex \Comment{Step 3}
\State Uplink $C(s_u, ID_u)$
\State Intercommunicate 
\State Downlink
\Statex
\If {$candidate$} \Comment{Step 4}
	\State $T \gets$ decoding of received messages as set of ordered pairs of degree and ID
	\If {received a pair greater than that of $u$} 
		\State $candidate \gets false$
	\EndIf
\EndIf
\end{algorithmic}
\label{alg:debate}
\end{algorithm}

\newparagraph{Clustering:}For clustering nodes via beeps, we assign each node to the cluster of the candidate which is closest to it. In the case of a tie, that is, when there is more than one closest candidate, we leave the node  unclustered. 

To achieve this clustering goal, we use an Uplink. Trying to adapt to the superimposition nature of the beeping model, we re-define Uplink action as follows. For each node $u$, we denote by $dist(u)$ the distance of $u$ to the closets candidate. In Uplink, each candidate has a message of length $L$ for transmission, and we want each node $u$ to receive the superimposition of the messages of the candidates at distance $dist(u)$ from $u$. We later see why this Uplink procedure is a natural fit to the beeping model and how we can implement this Uplink easily. Before going to these implementation related details, we finish the discussion about clustering. Suppose that there exists a Black-box algorithm $\mathcal{A}_{up}$ for the above Uplink description. Now we explain how to use $\mathcal{A}_{up}$ to cluster nodes in the desired manner. For this, we use a SI($1$) code. That is, each candidate encodes its ID using this code, and candidates Uplink these coded IDs. On the receiving end, each node $u$ receives the superimposition of the coded IDs of the candidates at distance $dist(u)$ from $u$. Noting the properties of SI($1$) codes, if there is only one such candidate, $u$ can decode the ID of that candidate. On the other hand, if there are two or more of those candidates, $u$ can distinguish this case and declare itself as \emph{unclustered}. This concludes the clustering task. 

One remark about the shape of the clusters achieved by this algorithm is as follows: Each node $w$ can be unclustered only if it has two neighbors $v_1$ and $v_2$ that belong to different clusters. Thus, each cluster grows from every side till it either reaches the margins of the network or it is within distance $2$ hops from another cluster. We call a node $u$ \emph{boundary} if $u$ is clustered but it is adjacent to a node $u'$ such that $u'$ is either unclustered or it belongs to a cluster other than that of $u$. We say two clusters $\mathcal{C}_1$ and $\mathcal{C}_2$ are \emph{adjacent} if there exist two nodes $v_1 \in \mathcal{C}_1$ and $v_2 \in \mathcal{C}_2$ such that $v_1$ and $v_2$ are within distance $2$ of each other. It is clear that in that case, $v_1$ and $v_2$ are boundary nodes. If the distance between $v_1$ and $v_2$ is exactly $1$, then clusters are directly touching each other, whereas if distance is two, with an unclustered node $w$ in the middle, then $w$ serves as a bridge connecting the two clusters.

\begin{algorithm}[t]
\caption{\small{Clustering, run @ node $u$}}
\begin{algorithmic}[1]
\algorithmsize
\Statex
\State Numbering
\State $C \gets SI(1)$-code
\State $m_u \gets C(0, ID_u)$
\State Uplink $m_u$, receive bit-sequence $m'_{u}$
\Statex
\If {$m'$ is a valid ID}
	\State $Cluster$-$ID \gets$ decoding of $m'$ into an ID
	\State $clustered \gets true$ 
\Else 
	\State $Cluster$-$ID \gets \emptyset$
	\State $clustered \gets false$
\EndIf
\Statex
\State $boundary \gets false$
\For {t=0 to $L-1$}
	   \If {$m'[t]=1$}
			\State \textsc{beep}
			\State \textsc{listen}
		\Else
			\State \textsc{listen}
			\State \textsc{beep}
		\EndIf
		\If {heard a beep while listening}
			\State $boundary \gets true$
		\EndIf
\EndFor
\end{algorithmic}
\label{alg:clustering}
\end{algorithm}

\newparagraph{Communications on the Overlay Graph:} For implementing communications, we want to devise protocols such that using these protocols, each candidate can exchange messages with neighboring candidates in the overlay graph. Trying to adapt to the superimposition nature of beeping networks, we do this in two layers: we first implement communications between candidates such that each candidate receives the superimposition of the messages of the neighboring candidates, in the overlay graph. Then, we use a $SI(\log n)$ code on top of these superimposition channels to get to full message exchange. Note that $SI(\log n)$ codes are robust enough because the number of candidates is at most $\log n$. On the negative side, these codings come with a cost, the encoding of the $\Theta(\log n)$ bit messages is $\Theta(\log^3 n)$ bits, which leads to the $\log^3 n$ term in the time bound of the debates. later we explain how to modify the debate algorithms to get over this cost.

Thus, what remains is to implement communications between candidates such that each candidate receives the superimposition of the messages of the neighboring candidates, in the overlay graph. For this, we first \emph{number} each node $u$ with its distance from the closest candidate $dist(u)$. This \emph{numbering} is essentially the backbone of the clusters and serves as the spine of our intra-cluster communications.

\paragraph{Numbering:} The algorithm for numbering is simple and some pseudo-code of it is given as Algorithm \ref{alg:numbering}. In each round, each node is active or inactive; at the start, only candidates are active; and each node simply records the time in which it becomes active. In each round, active nodes beep and each inactive node becomes active if it hears a beep. This way, the wave of the activation (the wave of beeps) proceeds exactly one hop in every round. The reader might find \Cref{fig:Onewave} in understanding the concept of a beep wave. We get that every node $u$ gets activated after exactly $dist(u)$ rounds where $dist(u)$ is the distance of $u$ to the closest candidate. 

\begin{algorithm}[t]
\caption{\small{Numbering Algorithm, run @ node $u$}}
\begin{algorithmic}[1]
\algorithmsize
\Statex
\Statex \textbf{Output}: distance $dist(u)$ to the closest candidate
\Statex
\State $active \gets false$
\If {$candidate$}
	\State $ dist(u)\gets0$
\EndIf
\For {$t=1$ to $D$}
	\If {$active$ or $candidate$}
		\State \textsc{beep}
   \Else
		\State \textsc{listen}
		\If {heard a beep}
			\State $active \gets true$
			\State $dist(u) \gets t$
		\EndIf
	\EndIf
\EndFor
\end{algorithmic}
\label{alg:numbering}
\end{algorithm}

Having this numbering, we implement the communications between candidates via three communication actions in or amongst the clusters: Uplink, Intercommunication, and Downlink. However, we change the definitions of these three tasks to adapt them to the superimposition nature of beeping model. 
\begin{itemize}
\item[(a)] In the default definition of uplink, each candidate starts with one message and uplink delivers the message of each candidate to the nodes in its cluster. In the adapted definition, we deliver to each node $u$, the superimposition of messages of candidates that are at distance $dist(u)$ from $u$. Thus, in particular, each clustered node receives the message of its related candidate. Moreover, unclustered nodes receive superimposition of more than one messages. This later helps us to distinguish the clustered nodes from the unclustered ones. After the clustering, we essentially use the Uplink only for delivering the message of each candidate to the boundary nodes of its cluster. 

\item[(b)] In the default definition, intercommunication is the action where boundary nodes of different clusters exchange messages with each other. Adapting to the superimposition nature of beeping model, in intercommunication, the goal is for each boundary node to receive the \emph{superimposition} of the messages of adjacent boundary nodes. 

\item[(c)] In the default definition, downlink is where the message is brought down from the boundary nodes to the candidates. Adapting to the beep model, the new goal is for every candidate to receive the superimposition of the messages that boundaries of its cluster send.
\end{itemize}

Having these new definitions, now we present the implementation details of these tasks, and see why these new definitions are easy to implement in the beeping model, thanks to the numbering that we have created. 
\bigskip

\noindent\underline{Uplink:} The algorithm is as presented in Algorithm \ref{alg:uplink}. The main technique in here is the usual idea of pipelining the beeps. First, consider what happens inside one cluster ignoring the effect from the other clusters. A node at distance $dist(u)$ does its transmission about the $\ell^{th}$ bit of the message at round $dist(u)+3l$. In particular, the candidate starts the transmission about the first bit in round $0$ and it finishes its transmissions in round $3L$. In each round $t$, a node $u$ is allowed to transmit a bit only if $t - dist(u) \equiv 0 \pmod{3}$. In that round, nodes that are one hop away are listening to this bit. That is, node $w$ is listening to this bit (and recording it) if $t - dist(u) \equiv 2 \pmod{3}$. We first consider what happens to the first bit of the message. In the first round, the candidate transmits or remains silent depending on what is the first bit of message. Then, inductively we see that for each $i\in [D]$, in the $i^{th}$ round nodes that are at distance $i$ from source transmit or remain silent depending on the first bit of the message. This way, the first bit reaches $D$-hops away after $D$ rounds. Now note that when the first bit has traveled only three hops from the candidate, the candidate starts transmitting the second bit, and thus, the wave of the transmissions of the $(j+1)^{th}$ bit follows the wave of the transmissions of the $j^{th}$ bit with a three hops lag. Hence, by $D + 3L$ rounds, all the bits have reached every node. 

Next we explain the effect of beep waves of different clusters have on each other. Consider two neighboring clusters $\mathcal{C}_1$ and $\mathcal{C}_2$ respectively related to candidates $u_1$ and $u_2$. First suppose that $\mathcal{C}_1$ and $\mathcal{C}_2$ are connected via a bridging unclustered node $w$, where $w$ is connected to $v_1 \in \mathcal{C}_1$ and $v_2 \in \mathcal{C}_2$. Then, since different clusters grew at the same speed, we have $dist(v_1) = dist(v_2) = dist(w)-1$. Thus, using the above beep waves, $w$ always listens to the transmissions of $v_1$ and $v_2$ (and gets the superimposition of them) while $v_1$ and $v_2$ ignore the transmissions of $w$. Hence, the unclustered nodes receive the superimposition of the messages of their respective closest candidates. More importantly, the beep waves \emph{clash} at the bridging node and do not go inside the other clusters. Hence, the progresses of the beep waves inside clusters remain intact. A similar thing happens when $\mathcal{C}_1$ and $\mathcal{C}_2$ are directly touching each other. In that case, for related boundary nodes $v_1$ and $v_2$, we have $dist(v_1) = dist(v_2)$ and thus, $v_1$ and $v_2$ do not listen to the transmissions of each other.

\begin{algorithm}[t]
\caption{\small{Uplink Algorithm, run @ node $u$}}
\begin{algorithmic}[1]
\algorithmsize
\Statex
\Statex \textbf{Given}: $dist(u)$, and message bit sequence $m_u$ (for any candidate $u$)
\Statex \textbf{Output}: bit sequence $m'_{u}$ at each node
\Statex
\State $active \gets false$
\For {$t=0$ to $D + 3L-3$} 
	\If {$candidate$}
	 	\State $active \gets (m_u[\floor{t/3}] == 1)$
	\EndIf
	\Switch{ $t-dist(u)\pmod{3}$}
		\Case{$0$:}	
			\If {$active$} 
				\State \textsc{beep }
			\Else
				\State \textsc{listen}	
			\EndIf
		\EndCase
		\Case{$1$:}
			\State \textsc{listen}
		\EndCase
		\Case{$2$:} 
			\State \textsc{listen}
			\If{ heard a beep}  
				\State $m'_{u}[\floor{(t-dist(u)+1)/3}] \gets 1$
				\State $active \gets true$ 
			\Else 
				\State $m'_{u}[\floor{(t-dist(u)+1)/3}] \gets 0$ 
				\State $active \gets false$ 
			\EndIf
		\EndCase
	\EndSwitch

\EndFor
\end{algorithmic}
\label{alg:uplink}
\end{algorithm}

\bigskip
\noindent\underline{Intercommunication:} With the new definition, the intercommunication task is now easy to implement. An ideal algorithm would be like this: boundary nodes go through the bits of the messages that they have, bit by bit, and for each bit, they beep if the bit is a one, and listen otherwise. Each node record a $1$ if it beeps itself or if hears a beep. This way, if two clusters are touching, then on the related boundary nodes, the beep of one would be immediately observable by the other. However, if two clusters are connected via an unclustered bridging node $w$, then the beeps of two clusters do not reach each other. To remedy this, we do a slight modification to the above simple ideal algorithm: now for each bit, we use two rounds instead of one round. Each boundary beeps twice or listens twice depending on the bit that it has. Also, unclustered nodes listen in the first round and propagate whatever they received in the first round (beep iff they received a beep). Then, for each bit, each boundary records a one if it beeps itself or it senses a beep in any of the related two rounds. This protocol is presented in Algorithm \ref{alg:intercommunication}. It is easy to see that, this protocol achieves the desired superimposed-type intercommunication goal.

\begin{algorithm}[t]
\caption{\small{Intercommunication, run @ node $u$}}
\begin{algorithmic}[1]
\algorithmsize
\Statex \textbf{Given}: clustering, and message bit sequence $m''_u$ if $u$ is boundary
\Statex \textbf{Output}: superposition bit sequence $\mu_{u}$ if $u$ is candidate 
\Statex
\For {t=0 to $L-1$}
	\If {$clustered \;\&\; boundary$}
		\If {$m_u[t] == 1$}
			\State \textsc{beep}
			\State \textsc{beep}
			\State $m'''_{u}[t] \gets 1$
		\Else
			\State \textsc{listen}
			\State \textsc{listen}
			\If {heard a beep in above two rounds}
				\State $m'''_{u}[t] \gets 1$
			\Else
				\State $m'''_{u}[t] \gets 0$
			\EndIf
		\EndIf
	\Else
		\State \textsc{listen}
		\If {heard a beep}
			\State \textsc{beep}
		\Else 
			\State \textsc{listen}
		\EndIf
	\EndIf
\EndFor
\end{algorithmic}
\label{alg:intercommunication}
\end{algorithm}

\bigskip
\noindent\underline{Downlink:} As presented in Algorithm \ref{alg:downlink} the implementation of downlink is simply reversing the direction of beep waves of the uplink. Now, the transmissions start at the nodes furtherest away from the candidate, move towards the candidate. Nodes go through the bits with a lag of three hops between the waves related to two consequent bits. Using the transmission schedules based on the numbering, each node $v$ only listens to transmissions of nodes that are at distance $dist(v)+1$ from the candidate. In this case, $v$ receives the superimposition of messages of those nodes. Since superimposition of superimpositions is simply a superimposition, what at the end the candidate receives is the superimposition of the messages sent out from the boundary nodes.  

\begin{algorithm}[t]
\caption{\small{Downlink, run @ node $u$}}
\begin{algorithmic}[1]
\algorithmsize
\Statex \textbf{Given}: clustering, and bit sequence $\mu_u$ if $u$ is boundary 
\Statex \textbf{Output}: a bit sequence $\mu'_{u}$ in each candidate
\Statex
\State $active \gets 0$
\For {$t=D + 3L-3$ downto $0$}
	\If {$clustered \; \& \; boundary$}
	 	\If {$t-dist(u) \in [0, 3(L-1)]$}
	 		\State $active \gets \mu_u[\lfloor t-dist(u))/3\rfloor]$
	 	\EndIf
	\EndIf
	\Switch{ $t-dist(u)\pmod{3}$}
		\Case{$0$:}
			\If {$active == 1$} 
				\State \textsc{beep }
			\Else
				\State \textsc{listen}	
			\EndIf
		\EndCase
		\Case{$1$:}
			\State \textsc{listen}
			\If{ heard a beep}  
				\State $active \gets 1$ 
			\Else 
				\State $active \gets 0$ 
			\EndIf
			\If {$candidate$}
				\If {$t \in [1, 3(L-1)+1]$}
					\State $\mu'_{u}[(t-dist(u)-1)/3] \gets active$
				\vspace{0.05cm}
				\EndIf
			\EndIf
		\EndCase
		\Case{$2$:}
			\State \textsc{listen}
		\EndCase
	\EndSwitch
\EndFor

\end{algorithmic}
\label{alg:downlink}
\end{algorithm}

\subsection{$O(D + \log n \log \log n)$-length debates}\label{sec:debateCD2}

Now we show how to modify the debate algorithm presented above to get its time complexity to $O(D + \log n \log \log n)$, which leads to optimal $O(D + \log n \log \log n) \cdot \log \log n$ leader election (optimal up to $\log \log n$ factors).

As explained in the overview section, the main change is based on the following simple observation: In the debate algorithm, we do not need a full message communication and instead, only learning the following two items would be sufficient: (1) an approximation of the number of different messages received---i.e., those involved in the superimposition, and (2) whether a neighbor has a message numerically larger or not. The former knowledge is used for determining the boundary nodes during the clustering and also estimating the degree of the candidates in the overlay graph. The latter is used for detecting whether a neighboring candidate has a greater($degree$, $ID$) pair. 

In the following, we explain how to achieve these two goals without going through the high cost of full message communication. Having the implementation of these two items, the analysis of the main elimination algorithm is as presented in \Cref{sec:overviewCD} where we showed that each new debate reduces the number of remaining candidates by a constant factor while keeping at least one.

For the first purpose, instead of $SI(\log n)$ codes, we use the approximate counting codes that we presented in \Cref{subsubsec:countCodes} with $\delta=0.1$. These codes are just strong enough to enable us to approximate the number of the codewords in the superimposition to within a $1+\delta=\frac{11}{10}$ factor. These codes, encode each message of length $\Theta(\log n)$ bits into a codeword of  $\Theta(\log n \log \log n)$ bits. We use these codes to detect the boundary nodes; a $\frac{11}{10}$-approximation is enough here because the boundary nodes will receive superimposition of codewords of two or more candidates, while the (internal) nodes in the clusters receive just one codeword, that of their candidate. These codes also allow us to approximate the degree of each candidate in the overlay graph to within a constant factor, say $\frac{11}{10}$, which is sufficient for the elimination algorithm.

For the second purpose, that is, for detecting whether a neighboring candidate has a numerically larger message or not, we need a slight modification in the intercommunication algorithm. We say that boundary node $u$ should be \emph{marked} if $u$ has a node $w$ (from a different cluster) within its two hops such that the message of $w$ is numerically larger than that of $u$. In the new intercommunications, the goal is for each boundary node to detect whether it should be marked. Once the marking procedure is done, we simply use a Downlink with single-bit messages (marked or not) and each candidate gets to know whether any of nodes in its cluster is marked. This means that, each candidate knows if it has a neighboring candidate in the overlay graph with a numerically larger message or not.

For marking the boundary nodes according to above rule, the ideal algorithm is for the boundary nodes to go through the bits of their messages and compare them one by one. In each round, each unmarked boundary node beeps if the related bit of its message is one, and listens otherwise. Then, each unmarked boundary gets \emph{marked} if it was listening but heard a beep. A boundary node that gets marked does not continue the intercommunication procedure. Similar to intercommunication in previous debate algorithm, to remedy the issue that neighboring clusters might be not directly touching, we use an extra beeping round. For each bit we spend two rounds: each unmarked boundary with a $1$ in the respective bit of its message beeps twice; each other unmarked boundary listens twice and each unclustered node listens first and then repeats what it hears in the next round. An unmarked boundary gets marked if it was not beeping but heard a beep in any of the rounds. The related pseudocode is presented in Algorithm \ref{alg:intercommunication2}.

\begin{algorithm}[t]
\caption{\small{Max-Detection-Intercommunication, run @ node $u$}}
\begin{algorithmic}[1]
\algorithmsize
\Statex \textbf{Given}: clustering, and message $m_u$ if $u$ is a boundary
\Statex \textbf{Output}: boolean $marked$, for any boundary $u$
\Statex
\State $marked \gets false$
\For {t=0 to $L-1$}
	\If {$clustered \;\&\; boundary \; \& \; \neg marked$}
		\If {$bit(m_u,t)=1$}
			\State \textsc{beep}
			\State \textsc{beep}
			\State $rec_{u}[t] \gets 1$
		\Else
			\State \textsc{listen}
			\State \textsc{listen}
			\If {heard a beep in above two rounds}
				\State $marked \gets true$
			\EndIf
		\EndIf	
	\Else
		\State \textsc{listen}
		\If {heard a beep}
			\State \textsc{beep}
		\Else 
			\State \textsc{listen}
		\EndIf
	\EndIf
\EndFor
\end{algorithmic}
\label{alg:intercommunication2}
\end{algorithm}

\subsection{Reducing the Running Time of a Debate for Networks with Large Diameter}\label{subsec:linearLEWCD}
In the previous section, we described an algorithm for a debate that takes $O(D+\log n \cdot \log\log n)$ rounds. Hence, this leads to a total round complexity of $O((D+\log n \cdot \log\log n)\log\log n)$ for the whole leader election algorithm. To complete the proof of \Cref{lem:mainCD}, we need to show that one can replace the multiplicative $O(\log \log n)$ factor with an $O(\min \{\log \log n,\log \frac{n}{D}\})$ factor.
Here, we present a simple change that achieves this. In particular, this provides an algorithm with round complexity of $O(n)$. 

As explained in  \Cref{sec:overviewlinear}, the idea is to grow the clusters in the $i^{th}$ debate, instead of growing the clusters to radius $D$, we grow them only up to radius of $d_i=\min\{D, \frac{4n\; 1.05^i}{\log n}\}$. We argue that this change replaces the $\log \log n$ factor in our running time by the claimed $\log (\min\{\log n,\frac{n}{D}\})$ factor. 
The correctness analysis is exactly as presented in \Cref{subsec:linearLEWCD}. 

For the complexity analysis, we can focus our attention to the case where $D\geq \frac{n}{\log n}$. This is because, when $D< \frac{n}{\log n}$, we have $\frac{n}{D} \geq \log n$ and thus, the time complexity of $O((D+\log n \log\log n)\log \log n)$ that we already have is sufficient for proving \Cref{lem:mainCD}. To complete the proof of \Cref{lem:mainCD},  we just need to show a round complexity of $O(D \log\frac{n}{D})$ for the case For the case $D\geq \frac{n}{\log n}$. 

Recall that in each debate, growing clusters up to radius $d$ takes $O(d+\log n \log\log n)$ rounds. Since in the $i^{th}$ debate we grow the clusters up to radius $d_i=\min\{D, \frac{4n\; 1.05^i}{\log n}\}$, the new time complexities is as follows:

$$\sum_{i=1}^{\Theta(\log\log n)} O(\min\{D, \frac{4n\; 1.05^{i}}{\log n}\} +\log n \cdot \log\log n)  $$
$$= \sum_{i=1}^{\Theta(\log(\frac{D \log n}{n}))} O(\frac{4n\; 1.05^{i}}{\log n}) + \sum_{i=\Theta(\log(\frac{D \log n}{n}))}^{\Theta(\log\log n)} O(D) + O(\log n \cdot \log^2\log n)$$
$$\stackrel{(*)}{=} O(D) + O(D \cdot \log \frac{n}{D}) = O(D \log\frac{n}{D})$$
Here, the equation $(*)$ holds because the first summation in the left hand side is a geometric series and the second summation has only $O(\log \frac{n}{D})$ terms. 
%
%
%


\section{Conclusion}

In this paper we presented the first linear time distributed algorithm for electing a leader in a radio network without collision detection. 
More importantly our algorithm runs with high probability in $$O\left(D \log \frac{n}{D} + \log^3 n\right) \cdot \min\left\{\log \log n, \log \frac{n}{D}\right\}$$ rounds which is almost optimal given the $T_{BC} = \Omega(D \log \frac{n}{D})$ and $T_{BC} = \Omega(\log^2 n)$ lower bounds from \cite{KM} and \cite{ABLP} for the broadcast problem. A remaining question is to reduce the additive $\log^3 n$ term in our round complexity to the optimal $\log^2 n$. 

We also give an almost optimal $O\left(D + \log n \log \log n \right) \cdot \min\left\{\log \log n, \log \frac{n}{D}\right\}$ leader election algorithm for radio networks with collision detection and the more restricted beep networks. 

In regards to both of our algorithms, it is interesting to see if one can remove the multiplicative factor of $\log \log n$. Possibly the ideas that reduce the running time from $O(n \log \log n)$ to $O(n)$ described in \Cref{sec:overviewlinear} can be useful towards that goal. 



As mentioned before, leader election is a crucial first step in communication primitives such as multiple broadcasts, multiple unicasts or message aggregation. Thus, the $\Theta(T_{BC} \log n)$ running time of leader election had became a bottleneck for getting better algorithms for these tasks. The leader election algorithm provided in this paper essentially removes this barrier barrier and now it is possible to get algorithms for these tasks that run in (almost) broadcast time $T_{BC}$. This opened the road for the results in \cite{SODA-COMM}, where the authors develop near optimal algorithms for the aforementioned communication primitives. It is interesting to see what other problems can be solved faster using this new leader election algorithm. 
}

\bibliographystyle{acm}
\bibliography{Bdata}


\end{document}